\newtheorem{theorem}{Theorem}
\newtheorem{lemma}{Lemma}
\theoremstyle{definition}
\newtheorem{definition}{Definition}
\newtheorem{remark}{Remark}
\newcommand{\zag}[1]{\left(#1\right)}
\newcommand{\uglate}[1]{\left[#1\right]}
\newcommand{\aps}[1]{\left|#1\right|}
\newcommand{\interval}[1]{\left<#1\right>}
\newcommand{\skup}[1]{\left\{#1\right\}}
\newcommand{\norma}[1]{\left\|#1\right\|}
\newcommand{\R}{\mathbb{R}}
\newcommand{\Rd}{\mathbb{R}^{n_1 \times n_2 \times \cdots \times n_d}}
\newcommand{\C}{\mathbb{C}}
\newcommand{\Cd}{\mathbb{C}^{n_1 \times n_2 \times \cdots \times n_d}}
\newcommand{\E}{\mathbb{E}}
\newcommand{\vjer}[1]{\mathbb{P}\left(#1\right)}
\newcommand{\ocek}[1]{\mathbb{E}#1}
\newcommand{\bm}[1]{\boldsymbol{#1}}
\newcommand{\mbf}[1]{\mathbf{#1}}
\DeclareMathOperator{\rank}{rank}
\DeclareMathOperator{\Vol}{Vol}
\DeclareMathOperator{\f}{f}
\DeclareMathOperator{\vecc}{vec}
\DeclareMathOperator{\level}{level}
\DeclareMathOperator{\tr}{tr}
\DeclareMathOperator{\tensorization}{\mathcal{T}_{vec}}
\begin{document}

\title{Low Rank Tensor Recovery via Iterative Hard Thresholding}

\author{Holger Rauhut\thanks{RWTH Aachen University, Lehrstuhl C f{\"u}r Mathematik (Analysis), Pontdriesch 10, 52062 Aachen,  Germany, rauhut@mathc.rwth-aachen.de}, Reinhold Schneider\thanks{Technische Universit\"{a}t Berlin, Stra\ss e des 17. Juni 136, 10623 Berlin, Germany, schneidr@math.tu-berlin.de} and \v{Z}eljka Stojanac\thanks{RWTH Aachen University, Lehrstuhl C f{\"u}r Mathematik (Analysis), Pontdriesch 10, 52062 Aachen Germany, stojanac@mathc.rwth-aachen.de and University of Bonn, Hausdorff Center for Mathematics, Endenicher Allee 62, 53115 Bonn, Germany}}

%
%
%
\date{February 16, 2016}

%

\maketitle

\begin{abstract}
We study extensions of compressive sensing and low rank matrix recovery (matrix completion) 
to the recovery of low rank tensors of higher order from a small number of linear measurements. While the theoretical understanding
of low rank matrix recovery is already well-developed, only few contributions on the low rank tensor recovery problem are available so far.
In this paper, we introduce versions of the iterative hard thresholding algorithm for several tensor decompositions, namely the higher order singular value
decomposition (HOSVD), the tensor train format (TT), and the general hierarchical Tucker decomposition (HT). We provide a partial convergence result
for these algorithms which is based on a variant of the restricted isometry property of the measurement operator adapted to the tensor decomposition at hand that
induces a corresponding notion of tensor rank. We show that subgaussian measurement ensembles satisfy the tensor 
restricted isometry property with high probability under a certain almost optimal
bound on the number of measurements which depends on the corresponding tensor format. These bounds are extended to partial Fourier maps combined with
random sign flips of the tensor entries. 
Finally, we illustrate the performance of iterative hard thresholding methods for tensor recovery via numerical experiments where we consider
recovery from Gaussian random measurements, tensor completion (recovery of missing entries), and Fourier measurements for third order tensors. 
\end{abstract}

{\bf Keywords:} low rank recovery, tensor completion, iterative hard thresholding, tensor decompositions,
hierarchical tensor format, tensor train decomposition, higher order singular value decomposition, 
Gaussian random ensemble, random partial Fourier ensemble

{\bf MSC 2010:} 15A69, 15B52, 65Fxx, 94A20

\section{Introduction and Motivation}

Low rank recovery builds on ideas from the theory of compressive sensing which predicts that sparse vectors 
can be recovered from incomplete measurements via efficient algorithms including $\ell_1$-minimization.
The goal of low rank matrix recovery is to reconstruct an unknown matrix $\mathbf{X} \in \R^{n_1 \times n_2}$ 
from linear measurements $\mathbf{y} = \mathcal{A}(\mathbf{X})$, where $\mathcal{A}: \R^{n_1 \times n_2} \rightarrow \R^m$ with $m \ll n_1n_2$.
Since this is impossible without additional assumptions, one requires that $\mathbf{X}$ has rank at most $r \ll \min \{n_1,n_2\}$, or can at least be approximated
well by a rank-$r$ matrix. This setup appears in a number of applications including signal processing \cite{ahro15,krra14}, quantum state tomography \cite{beeiflgrli09,gr11,kurate14,KabanavaKRT15} and recommender system design \cite{care09,cata10}.

Unfortunately, the natural approach of finding the solution of the optimization problem
\begin{equation}
\min_{\mathbf{Z} \in \R^{n_1 \times n_2}} \rank\zag{\mathbf{Z}} \;\text{ s.t. }\; \mathcal{A}\zag{\mathbf{Z}}=\mathbf{y},
\label{rankNP}
\end{equation}
is NP hard in general. Nevertheless, it has been shown that solving the convex optimization problem
\begin{equation}\label{matrix:nuc:norm}
\min_{\mathbf{Z} \in \R^{n_1 \times n_2}} \norma{\mathbf{Z}}_* \;\text{ s.t. }\; \mathcal{A}\zag{\mathbf{Z}}=\mathbf{y},
\end{equation}
where $\norma{\mbf{Z}}_*=\tr\Big(\zag{\mbf{Z}^*\mbf{Z}}^{1/2}\Big)$ denotes the nuclear norm of a matrix $\mbf{Z}$,
reconstructs $\mathbf{X}$ exactly under suitable conditions on $\mathcal{A}$ \cite{care09,fapare10,fora13,krra14}. Provably optimal measurement maps can be constructed using randomness.
For a  (sub-)Gaussian random measurement map, $m \geq C r \max\{n_1,n_2\}$ measurements are sufficient to ensure stable and robust recovery
via nuclear norm minimization \cite{fapare10,capl11,krra14} and other algorithms such as iterative hard thresholding \cite{tawe13}.
We refer to \cite{KabanavaKRT15} for extensions to ensembles with four finite moments. 

In this note, we go one step further and consider the recovery of low rank tensors $\mathbf{X} \in \R^{n_1 \times n_2 \times \cdots \times n_d}$ of order $d \geq 3$
from a small number of linear measurements $\mathbf{y}=\mathcal{A}\zag{\mathbf{X}}$, where $\mathcal{A}:\R^{n_1 \times n_2 \times \cdots \times n_d} \rightarrow \R^m$, $m \ll n_1n_2\cdots n_d$.  
Tensors of low rank appear in a variety of applications such as video processing ($d=3$) \cite{limuwoye09}, time-dependent 3D imaging ($d=4$), ray tracing where the material dependent
bidirectional reflection function is an order four tensor that has to be determined from measurements \cite{limuwoye09}, 
numerical solution of the electronic Schr{\"o}dinger equation ($d = 3N$, where $N$ is the number
of particles) \cite{lubich-moldyn-2008, Beck99themulticonfiguration, :/content/aip/journal/jcp/131/2/10.1063/1.3173823}, machine learning \cite{aubiporo13} and more.

In contrast to the matrix case, several different notions of tensor rank have been introduced. Similar to the matrix rank being 
related to the singular value decomposition, these notions of rank come with different tensor decompositions.
For instance, the CP-rank of a tensor $\mathbf{X} \in \R^{n_1 \times n_2 \times \cdots \times n_d}$ is defined 
as the smallest number of rank one tensors that sum up to $\mathbf{X}$, where a rank one tensor is of the form 
$\mathbf{u}_1 \otimes \mathbf{u}_2 \otimes \cdots \otimes \mathbf{u}_d$. 
Fixing the notion of rank, the recovery problem can be formulated as computing the minimizer of
\begin{equation}\label{recovery}
\min_{\mathbf{Z} \in \R^{n_1 \times n_2 \times \cdots \times n_d}} \rank\zag{\mathbf{Z}} \;\text{ s.t. }\; \mathbf{y}=\mathcal{A}\zag{\mathbf{Z}}.
\end{equation}
Expectedly, this problem is NP hard (for any reasonable notion of rank), see \cite{ha90-2, hili13}.
An analog of the nuclear norm for tensors can be introduced, and having the power of nuclear norm minimization \eqref{matrix:nuc:norm} for the matrix case in mind,  
one may consider the minimization problem 
$$ \min_{\mathbf{Z} \in \R^{n_1 \times n_2 \times \cdots \times n_d}} \norma{\mathbf{Z}}_* \;\text{ s.t. }\; \mathbf{y}=\mathcal{A}\zag{\mathbf{Z}}.$$
Unfortunately, the computation of $\norma{\cdot}_*$ and, thereby this problem, is NP hard for tensors of order $d \geq 3$ \cite{hili13}, so that one has to develop
alternatives in order to obtain a tractable recovery method.

Previous approaches  include \cite{gareya11, limuwoye09, fapare10, SquareDeal, yuzh14, DBLP:journals/corr/RauhutS15, DBLP:journals/corr/BarakM15}. Unfortunately, none of the proposed methods are completely satisfactory so far. Several contributions \cite{gareya11,limuwoye09,SquareDeal} suggest to minimize the sum of nuclear norms of several tensor matricizations. However,
it has been shown in  \cite{fapare10} that this approach necessarily requires a highly non-optimal number of measurements in order to ensure recovery, see also \cite{SquareDeal}. 
Theoretical results for nuclear tensor norm minimization have been derived in \cite{yuzh14}, but as just mentioned, this approach does not lead to a tractable algorithm.
The theoretical results in \cite{DBLP:journals/corr/ShahRT15} are only applicable to a special kind of separable measurement system, and require a non-optimal number of measurements.
Other contributions \cite{gareya11, limuwoye09, DBLP:journals/corr/RauhutS15} only provide numerical experiments for some algorithms which are often promising but lack a theoretical
analysis. This group of algorithms includes also approaches based on Riemannian optimization on low rank tensor manifolds \cite{abmase09,va13,kestva14}. 
The approaches in \cite{DBLP:journals/corr/RauhutS15, DBLP:journals/corr/BarakM15} are based on tools from real algebraic geometry and provide sum-of-squares relaxations of the tensor nuclear norm. More precisley, \cite{DBLP:journals/corr/RauhutS15} uses theta bodies \cite{blpath13,gouveia2009new}, but provides only numerical recovery results, 
whereas the method in  \cite{DBLP:journals/corr/BarakM15} is highly computationally demanding since it requires solving optimization problems at the sixth level of Lassere's hierarchy.

As proxy for \eqref{recovery}, we introduce and analyze tensor variants of the iterative hard thresholding algorithm, well-known from compressive sensing \cite{blda09,fora13} and low rank matrix recovery \cite{NIPS2010_3904}.
We work with tensorial generalizations of the singular value decomposition, namely the higher order singular value decomposition (HOSVD), 
the tensor train (TT) decomposition, and the hierarchical Tucker (HT) decomposition. These lead to notions of tensor rank, and the  corresponding projections onto low rank tensors --- as required
by iterative hard thresholding schemes --- can be computed efficiently via successive SVDs of certain matricizations of the tensor. Unfortunately, these projections
do not compute best low rank approximations in general which causes significant problems in our analysis. Nevertheless, we are at least able to provide a partial convergence result
(under a certain condition on the iterates) and our numerical experiments indicate that this approach works very well in practice.

The HOSVD decomposition is a special case of the Tucker decomposition which was introduced for the first time in 1963 in \cite{Tuck1963a} and was refined in subsequent articles \cite{tucker64extension, Tuck1966c}. Since then, it has been used e.g.\ in data mining for handwritten digit classification \cite{Savas:2007:HDC:1219177.1219545}, in signal processing to extend Wiener filters \cite{journals/sigpro/MutiB05}, in computer vision \cite{Vasilescu02multilinearanalysis, Wang:2005:OTA:1073204.1073224}, and in chemometrics \cite{Henrion1998, HenrR2000}.

 Recently developed hierarchical tensors, introduced by Hackbusch and coworkers (HT tensors) \cite{HackbuschKuhn, grasedyck2010hierarchical} and the group of Tyrtyshnikov (tensor trains, TT) \cite{os11, oseledets2} have extended the Tucker format into a multi-level framework that no longer suffers from high order scaling of the degrees of freedom 
 with respect to the tensor order $d$, as long as the ranks are moderate. Historically, the hierarchical tensor framework has evolved in the quantum physics community hidden within renormalization group ideas \cite{PhysRevLett.69.2863}, and became clearly visible in the framework of matrix product and tensor network states \cite{schollwock-2011}. An independent source of these developments can be found in quantum dynamics at the multi-layer multi-configurational time dependent HartreeMCTDH  method \cite{Beck99themulticonfiguration, :/content/aip/journal/jcp/131/2/10.1063/1.3173823, lubich-moldyn-2008}.

The tensor IHT (TIHT) algorithm consists of the following steps.
Given measurements $\mathbf{y} = \mathcal{A}(\mathbf{X})$, one starts with some initial tensor $\mbf{X}^0$ (usually $\mbf{X}^0 = \mbf{0}$) and iteratively computes, for $j=0,1,\hdots$,
\begin{align}
\mbf{Y}^{j} &= 
 \mbf{X}^j+\mu_j \mathcal{A}^*\zag{\mbf{y}-\mathcal{A}\zag{\mbf{X}^j}}, 
 \label{IHT1}\\ 
\mbf{X}^{j+1} & = \mathcal{H}_{\mbf{r}}(\mbf{Y}^{j}). \label{IHT2}
\end{align}
Here, $\mu_j$ is a suitable stepsize parameter
and 
$\mathcal{H}_{\mbf{r}}(\mathbf{Z})$ computes a rank-$\mathbf{r}$ approximation of a tensor $\mbf{Z}$ within the given tensor format via successive SVDs (see below). 
Unlike in the low rank matrix scenario, it is in general NP hard to compute the best rank-$\mbf{r}$ approximation of a given tensor $\mbf{Z}$, see \cite{ha90-2, hili13}. 
Nevertheless, $\mathcal{H}_{\mbf{r}}$ computes a quasi-best approximation in the sense that
\begin{equation}\label{Hr:approx} 
\|\mathbf{Z} - \mathcal{H}_{\mbf{r}}(\mathbf{Z})\|_F \leq C_d \|\mathbf{Z} - \mathbf{Z}_{\text{BEST}}\|_F,
\end{equation}
where $C_d \leq C \sqrt{d}$ 
and $\mathbf{Z}_{\text{BEST}}$ denotes the best approximation of $\mathbf{Z}$ of rank $\mbf{r}$ within the given  tensor format.

Similarly to the versions of IHT for compressive sensing and low rank matrix recovery, our analysis of TIHT builds on the assumption
that the linear operator $\mathcal{A}$ satisfies a variant of the restricted isometry property adapted to the tensor decomposition at hand (HOSVD, TT, or HT).
Our analysis requires additionally that at each iteration it holds
\begin{equation}\label{cond:iter}
\| \mathbf{Y}^{j} - \mathbf{X}^{j+1}\|_F \leq (1+\varepsilon) \|\mathbf{Y}^{j} - \mathbf{X}_{\mbf{r}} \|_F,
\end{equation}
where $\varepsilon$ is a small number close to $0$ and $\mathbf{X}_{\mbf{r}}$ is the best rank-$\mbf{r}$ approximation to $\mbf{X}$, the tensor to be recovered, 
see Theorem~\ref{convergenceclassTIHT} for details.
(In fact, $\mathbf{X}_{\mbf{r}} = \mathbf{X}$ if $\mbf{X}$ is exactly of rank at most $\mbf{r}$.) Unfortunately, \eqref{Hr:approx} only guarantees that
\[
\| \mathbf{Y}^{j} - \mathbf{X}^{j+1}\|_F \leq C_d \| \mathbf{Y}^{j} - \mathbf{Y}_{\text{BEST}}\|_F \leq C_d  \|\mathbf{Y}^{j} - \mathbf{X}_{\mbf{r}} \|_F.
\]
Since $C_d$ cannot be chosen as $1+\varepsilon$, condition \eqref{cond:iter} cannot be guaranteed a priori. The hope is, however, that \eqref{Hr:approx} is only a worst case estimate, and that
usually a much better low rank approximation to $\mathbf{Y}^j$ is computed satisfying \eqref{cond:iter}. At least, our numerical experiments indicate that this is the case.
Getting rid of condition \eqref{Hr:approx} seems to be a very difficult, if not impossible, task --- considering also that there are no other completely rigorous results for tensor recovery with efficient
algorithms available that work for a near optimal number of measurements. (At least our TRIP bounds below give some hints on what the optimal number of measurements should be.)

The second main contribution of this article consists in an analysis of the TRIP related to the tensor formats HOSVD, TT, and HT for random measurement maps.
We show that subgaussian linear maps $\mathcal{A}:\Rd \rightarrow \R^m$ satisfy the TRIP at rank $\mbf{r}$ and level $\delta_{\mbf{r}}$ with probability exceeding $1-\varepsilon$ provided that
\begin{align*} 
&m \geq C_1 \delta_{\mbf{r}}^{-2} \max\skup{\zag{r^d+dnr}\log\zag{d}, \log\zag{\varepsilon^{-1}}}, \quad \text{for HOSVD,} \\
&m \geq C_2 \delta_{\mbf{r}}^{-2} \max\skup{\zag{(d-1)r^3+dnr} \log\zag{d r}, \log\zag{\varepsilon^{-1}}}, \quad \text{for TT and HT,}
\end{align*}
where $C_1,C_2 >0$ are universal constants and $n=\max\skup{n_i: i \in \uglate{d}}$, $r=\max\skup{r_t: t \in T_I}$ with $T_I$ be the corresponding tree.  
Up to the logarithmic factors, these bounds match the number of degrees of freedom of a rank-$\mbf{r}$ tensor in the particular format, and therefore are almost optimal.

In addition, we show a similar result for  linear maps  $\mathcal{A}:\C^{n_1\times n_2 \times \cdots \times n_d} \rightarrow \C^m$ that are constructed by composing random sign flips of the tensor entries
with a $d$-dimensional Fourier transform followed by random subsampling, see Theorem~\ref{FourierTRIP} for details.

  The remainder of the paper is organized as follows. In Section \ref{Section:Intrototensors}, we introduce the HOSVD, TT, and HT tensor decompositions and the corresponding notions of rank used throughout the paper. Two versions of the tensor iterative hard thresholding algorithm (CTIHT and NTIHT) are presented in Section \ref{Section:TIHT} and 
  a partial convergence proof is provided.
Section~\ref{Sec:TRIP} proves the bounds on the TRIP for subgaussian measurement maps, while Section~\ref{Sec:Fourier} extends them to randomized Fourier maps.
 In Section \ref{Sec:Numerics}, we present some numerical results on recovery of third order tensors.

\subsection{Notation}\label{sec:notation}
We will mostly work with real-valued $d$-th order tensors $\mathbf{X} \in \R^{n_1 \times n_2 \times \cdots \times n_d}$, but the notation introduced below holds
analogously also for complex-valued tensors 
$\mathbf{X} \in \C^{n_1 \times n_2 \times \cdots \times n_d}$ which will appear in Section~\ref{Sec:Fourier}. 
Matrices and tensors are denoted with capital bold letters, linear mappings with capital calligraphic letters, sets of matrices or tensors with bold capital calligraphic letters, and vectors with small bold letters. The expression $\uglate{n}$ refers to the set $\skup{1,2,\ldots,n}$.

 With $\mathbf{X}_{i_k=p}$, for $ p \in \uglate{n_k}$, we denote the $\zag{d-1}$-th order tensor (called subtensor) of size $n_1 \times n_2 \times \cdots \times n_{k-1} \times n_{k+1} \times \cdots \times n_d$ that is obtained by fixing the $k$-th component of a tensor $\mathbf{X}$ to $p$ i.e., $\mathbf{X}_{i_k=p} \zag{i_1, \ldots, i_{k-1}, i_{k+1}, \ldots, i_d}=\mathbf{X}\zag{i_1, \ldots, i_{k-1}, p, i_{k+1}, \ldots, i_d}$, for all $i_l \in \uglate{n_l}$ and for all $l \in \uglate{d} \backslash \skup{k}$. A matrix obtained by taking the first $r_k$ columns of the matrix $\mathbf{U}$ is denoted by $\mathbf{U}\zag{:,\uglate{r_k}}$. Similarly, for a tensor $\mbf{S} \in \Rd$ the subtensor $\mathbf{S}\zag{\uglate{r_1},\uglate{r_2}, \ldots, \uglate{r_d}} \in \R^{r_1 \times r_2 \times \cdots \times r_d}$ is defined elementwise as $\mathbf{S}\zag{\uglate{r_1},\uglate{r_2},\ldots,\uglate{r_d}}\zag{i_1,i_2,\ldots,i_d}=\mathbf{S}\zag{i_1,i_2,\ldots,i_d}$, for all $i_k \in \uglate{r_k}$ and for all $k \in \uglate{d}$.

The vectorized version of a tensor $\mbf{X} \in \Rd$ is denoted by $\vecc\zag{\mbf{X}} \in \R^{n_1n_2\cdots n_d}$ (where the order
of indices is not important as long as we remain consistent).
The operator $\tensorization$ transforms back a vector $\mbf{x} \in \R^{n_1 n_2 \ldots n_d}$ into a $d$-th order tensor in $\Rd$, i.e., 
\begin{equation} \label{eq:tensorization}
\tensorization\zag{\vecc\zag{\mbf{X}}}= \mbf{X}, \quad \text{ for } \mbf{X} \in \Rd.
\end{equation}
The inner product of two tensors $\mathbf{X}, \mathbf{Y} \in \R^{n_1 \times n_2 \times \cdots \times n_d}$ is defined as
\begin{equation}\label{innerprod}
\interval{\mathbf{X}, \mathbf{Y}}=\sum_{i_1=1}^{n_1}\sum_{i_2=1}^{n_2} \ldots \sum_{i_d=1}^{n_d} \mathbf{X}\zag{i_1, i_2, \ldots, i_d}\mathbf{Y}\zag{i_1,i_2, \ldots, i_d}.
\end{equation}
The (Frobenius) norm of a tensor $\mathbf{X} \in \R^{n_1 \times n_2 \times \cdots \times n_d}$, induced by this inner product, is given as
\begin{equation}\label{FrobNorm}
\norma{\mathbf{X}}_F=\interval{\mbf{X},\mbf{X}}^{1/2}=\sqrt{\sum_{i_1=1}^{n_1} \sum_{i_2=1}^{n_2} \ldots \sum_{i_d=1}^{n_d} \mathbf{X}^2\zag{i_1, i_2, \ldots, i_d}}.
\end{equation}

Matricization (also called flattening) is a linear transformation that reorganizes a tensor into a matrix. 
More precisely, for a $d$-th order tensor $\mbf{X} \in \R^{n_1 \times n_2 \times \cdots \times n_d}$ and an ordered subset $\bm{\mathcal{S}} \subseteq \uglate{d}$, the $\bm{\mathcal{S}}$-matricization $\mbf{X}^{\bm{\mathcal{S}}} \in \R^{\prod_{k \in \bm{\mathcal{S}}} n_k \times \prod_{\ell \in \bm{\mathcal{S}}^c}n_{\ell}}$ is defined as
$$\mbf{X}^{\bm{\mathcal{S}}} \zag{(i_k)_{k \in \mathcal{S}};(i_\ell)_{\ell \in \mathcal{S}^c}}=\mbf{X}\zag{i_1,i_2,\ldots,i_d}, $$
i.e., the indexes in the set $\bm{\mathcal{S}}$ define the rows of a matrix and the indexes in the set $\bm{\mathcal{S}}^c=\uglate{d}\backslash \bm{\mathcal{S}}$ define the columns. 
For a singleton set $\bm{\mathcal{S}}=\{k\}$, where $k \in \uglate{d}$, matrix $\mbf{X}^{\bm{\mathcal{S}}}$ is called the  mode-$k$ matricization or the $k$-th unfolding.

For $\mathbf{X} \in \R^{n_1 \times n_2 \times \cdots \times n_d}$, $\mathbf{A} \in \R^{J \times n_k}$, 
the $k$-mode multiplication $\mathbf{X} \times_k \mathbf{A} \in \R^{n_1 \times \cdots \times n_{k-1} \times J \times n_{k+1} \times \cdots \times n_d}$ is defined element-wise as
\begin{equation}\label{def:kmode}
 \zag{\mathbf{X} \times_k \mathbf{A}}\zag{i_1,\ldots,i_{k-1},j,i_{k+1},\ldots,i_d} = \sum_{i_k=1}^{n_k} \mathbf{X}\zag{i_1,\ldots,i_d}\mathbf{A}\zag{j, i_k}, \quad k \in [d].
\end{equation}
For a tensor $\mbf{X} \in \Rd$ and matrices $\mbf{A} \in \R^{J \times n_j}$, $\mbf{B} \in \R^{K \times n_k}$, $\mbf{C} \in \R^{L \times K}$ it holds
\begin{align}
& \mbf{X} \times_j \mbf{A} \times_k \mbf{B} = \mbf{X} \times_k \mbf{B} \times_j \mbf{A}, \text{ whenever } j\neq k \\
& \mbf{X} \times_k \mbf{B} \times_k \mbf{C} = \mbf{X} \times_k \mbf{CB}. 
\end{align}
Notice that the SVD decomposition of a matrix $\mathbf{X} \in \R^{n_1 \times n_2}$ can be written using the above notation as $\mathbf{X} =\mathbf{U} \mathbf{\Sigma} \mathbf{V}^T = \mathbf{\Sigma} \times_1 \mathbf{U} \times_2 \mathbf{V}.$

We can write the measurement operator $\mathcal{A}:\Rd \rightarrow \R^m$ in the form
\[
\zag{\mathcal{A}\zag{\mathbf{X}}}_i=\interval{\mbf{A}_i,\mbf{X}}, \quad i \in \uglate{m},
\]
for a set of so-called {\it sensing tensors} $\mbf{A}_i \in \Rd$, for $i \in \uglate{m}$.
The matrix representation $\mbf{A} \in \R^{m \times n_1n_2\cdots n_d}$ of $\mathcal{A}$ 
is defined as
\begin{equation*}
\mbf{A}\zag{i,:}=\vecc\zag{\mbf{A}_i}^T, \text{ for } i \in \uglate{m},
\end{equation*}
where $\mbf{A}\zag{i,:}$ denotes the $i$-th row of  $\mbf{A}$ and $\vecc\zag{\mbf{A}_i}$ denotes the vectorized version of the sensing tensor $\mbf{A}_i$.  
Notice that $\mathcal{A}\zag{\mbf{X}}=\mbf{A}\vecc\zag{\mbf{X}}.$

\subsection{Acknowledgement}

H.~Rauhut and \v{Z}.~Stojanac acknowledge support by the Hausdorff Center for Mathematics at the University of Bonn, by
the Hausdorff Institute for Mathematics Bonn during the trimester program {\it Mathematics of Signal Processing} and by the European Research Council through
the grant StG 258926.

\section{Tensor decompositions and tensor rank} \label{Section:Intrototensors}

Before studying the tensor recovery problem we first need to introduce the tensor decompositions and the associated notion of rank 
that we are working with.
We confine ourselves to finite dimensional linear spaces $\R^{n_i}$ from which the tensor product space $$\mathcal{H}_d=\bigotimes_{i=1}^d \R^{n_i}$$ is built. 
Then any $\mbf{X} \in \mathcal{H}_d$ can be represented as
$$ \mbf{X}=\sum_{\mu_1=1}^{n_1} \ldots \sum_{\mu_d=1}^{n_d} \mbf{X}\zag{\mu_1,\ldots,\mu_d} \mbf{e}_{\mu_1}^1 \otimes \cdots \otimes  \mbf{e}_{\mu_d}^d,$$
where $\{\mbf{e}_1^i,\ldots,\mbf{e}_{n_i}^i\}$ is the canonical basis of the space $\R^{n_i}$.
 Using this basis, with slight abuse of notation, we can identify $\mbf{X} \in \mathcal{H}_d$ with its representation by a $d$-variate function, often called hyper matrix,
 $$ \bm{\mu}=\zag{\mu_1,\ldots,\mu_d} \mapsto \mbf{X}\zag{\mu_1,\ldots,\mu_d} \in \R, \quad \mu_i \in \uglate{n_i}, \, i \in \uglate{d},$$
 depending on a discrete multi-index $\bm{\mu} = (\mu_1,\hdots,\mu_{d})$.
 We equip the linear space $\mathcal{H}_d$ with the inner product defined in \eqref{innerprod} and the 
 Frobenius norm defined in \eqref{FrobNorm}. 
 \medskip
 
 The idea of the classical {\it Tucker format} is to search, given a tensor $\mbf{X}$ and a rank-tuple $\mbf{r}=\zag{r_j}_{j=1}^d$,
 for optimal (or at least near optimal) subspaces $U_i \subset \R^{n_i}$ of dimension $r_i$, $i \in [d]$,
such that
 $$ \min_{\mbf{Y} \in U_1 \otimes \cdots \otimes U_d} \norma{\mbf{X}-\mbf{Y}}_F$$ 
 is minimized.
 Equivalently, for each coordinate direction $i \in \uglate{d}$, we are looking for corresponding basis $\skup{\mbf{u}_{k_i}^i}_{k_i \in [r_i]}$ of $U_i$, which can be written in the form
\begin{equation} \label{leaftransfer}
\mbf{u}_{k_i}^i:=\sum_{\mu_i=1}^{n_i} \mbf{U}^i \zag{\mu_i,k_i} \mbf{e}_{\mu_i}^i, \quad k_i \in \uglate{r_i}, \, r_i<n_i,
\end{equation}
  where $\mbf{U}^i\zag{\mu_i,k_i} \in \R$. With a slight abuse of notation we often identify the basis vectors $\mbf{u}_{k_i}^i$ with their
  representation $\big(\mathbf{U}^i  ( \mu_i, k_i )  \big)_{\mu_i \in [n_i]}$.
Given the basis $\skup{\mbf{u}_{k_i}^i}_{k_i}$, a tensor $\mbf{X}  \in \bigotimes_{i=1}^d U_i \subset \mathcal{H}_d=\bigotimes_{i=1}^d \R^{n_i}
$ can be represented by 
\begin{equation}\label{eq:Tucker}
\mbf{X}=\sum_{k_1=1}^{r_1} \cdots \sum_{k_d=1}^{r_d} \mbf{C}\zag{k_1,\ldots,k_d} \mbf{u}_{k_1}^1 \otimes \cdots \otimes \mbf{u}_{k_d}^d .
\end{equation} 
 In case $\skup{\mbf{u}_{k_i}^i}_{k_i}, i \in \uglate{d}$, form orthonormal bases, the {\it core tensor} $\mbf{C} \in \bigotimes_{i=1}^d \R^{r_i}$ is given entry-wise by 
 $$\mbf{C}\zag{k_1,\ldots,k_d}=\interval{\mbf{X}, \mbf{u}_{k_1}^1 \otimes \cdots \otimes \mbf{u}_{k_d}^d}. $$
 In this case, we call the Tucker decomposition \eqref{eq:Tucker} a higher order singular value decomposition (or HOSVD decomposition). The HOSVD decomposition 
 can be constructed such that it satisfies the following properties
 \begin{itemize}
 \item the bases $\{\mbf{u}_{k_i}^i \in \R^{n_i}: k_i \in \uglate{r_i}\}$ are orthogonal and normalized, for all $i \in \uglate{d}$;
 \item the core tensor $\mbf{C} \in \mathcal{H}_d$ is all orthogonal, i.e., $\interval{\mbf{C}_{k_i=p},\mbf{C}_{k_i=q}}=0$, for all $i \in \uglate{d}$ and whenever $p\neq q$;
 \item the subtensors of the core tensor $\mbf{C}$ are ordered according to their Frobenius norm, i.e., $\norma{\mbf{C}_{k_i=1}}_F \geq \norma{\mbf{C}_{k_i=2}}_F \geq \cdots \geq \norma{\mbf{C}_{k_i=n_i}}_F \geq 0$.
  \end{itemize}
In contrast to the matrix case, the rank of a tensor is a tuple $\mbf{r} = (r_1,\hdots,r_d)$ of $d$ numbers. For $\mbf{X}  \in \bigotimes_{i=1}^d U_i$ 
it is obtained via the matrix ranks of the unfoldings, i.e.,  $$r_k=\rank\zag{\mbf{X}^{\{k\}}}, \quad k \in \uglate{d}.$$
We say that  tensor $\mbf{X} \in \Rd$ has rank at most $\mbf{r}$ if $$\rank\zag{\mbf{X}^{\{k\}}} \leq r_k \quad \text{for } k \in \uglate{d}. $$
For high order $d$, the HOSVD has the disadvantage that the core tensor contains $r_1\cdots r_d \sim r^d$ entries (assuming $r_i \approx r$ for all $i$), which
may potentially all be nonzero. Therefore, the number of degrees of freedom that are required to describe a rank-$\mbf{r}$ tensor scales exponentially in $d$.
Assuming $n_i \sim n$ for all $i$, the overall complexity for storing the required data (including the basis vectors) scales like $\mathcal{O}\zag{ndr+r^d}$
(which nevertheless for small $r_i / n_i \sim \alpha$ implies a high compression rate of $\alpha^d$).
Without further sparsity of the core tensor the Tucker format is appropriate for low order tensors, i.e., $d=3$.

The HOSVD $\mathbf{X}=\mathbf{S} \times_1 \mathbf{U}_1 \times_2 \mbf{U}_2 \times \cdots \times_d \mathbf{U}_d $ of a tensor $\mbf{X}$ can be computed via the singular value decompositions (SVDs) of all 
unfoldings. The columns of the matrices $\mbf{U}_k$ contain all left-singular vectors of the $k$th unfolding $\mbf{X}^{\{k\}}$, $k \in [d]$,
and the core tensor is given by $\mbf{S}=\mbf{X} \times_1 \mbf{U}_1^T \times_2 \mbf{U}_2^T \times \cdots \times_d \mbf{U}_d^T$. 
An important step in the iterative hard thresholding algorithm consists in computing a rank-$\mbf{r}$ approximation to a tensor $\mbf{X}$ (the latter not necessarily being of low rank).
Given the HOSVD decomposition of $\mathbf{X}$ as just described, such an approximation is given by
$$
\mathcal{H}_{\mathbf{r}}\zag{\mathbf{X}}=\overline{\mathbf{S}} \times_1 \overline{\mathbf{U}}_1 \times_2 \overline{\mbf{U}}_2 \times \cdots \times_d \overline{\mathbf{U}}_d,
$$ 
where $\overline{\mathbf{S}}=\mathbf{S}\zag{\uglate{r_1},\uglate{r_2},\ldots,\uglate{r_d}} \in \R^{r_1 \times r_2 \times \cdots \times r_d}$ and $\overline{\mathbf{U}}_k=\mathbf{U}_k\zag{:,\uglate{r_k}} \in \R^{n_k \times r_k}$ for all $k \in \uglate{d}$. Thereby, the matrix $\overline{\mbf{U}}_k$ contains $\mbf{r}_k$ left-singular vectors corresponding to the $r_k$ largest singular values of $\mbf{X}^{\{k\}}$, for each $k \in [d]$.
In general, $\mathcal{H}_{\mathbf{r}}\zag{\mathbf{X}}$ is unfortunately not the best rank-$\mbf{r}$ approximation to $\mathbf{X}$, formally defined as the minimizer $\mbf{X}_{\text{BEST}}$ of
\[
\min_{\mbf{Z}} \|\mathbf{X} - \mathbf{Z}\|_F \quad \mbox{subject to } \rank(\mbf{Z}^{\{k\}}) \leq r_k \mbox{ for all } k \in [d].
\]
Nevertheless, one can show that $\mathcal{H}_{\mathbf{r}}\zag{\mathbf{X}}$ is a quasi-best rank-$\mbf{r}$ approximation in the sense that
\begin{equation}\label{Hr:HOSVD}
\| \mathbf{X} - \mathcal{H}_{\mathbf{r}}\zag{\mathbf{X}}\|_F \leq \sqrt{d} \|\mathbf{X} - \mathbf{X}_{\text{BEST}}\|_F.
\end{equation}
Since the best approximation $\mathbf{X}_{\text{BEST}}$ is in general NP hard to compute \cite{ha90-2, hili13}, $\mathcal{H}_{\mathbf{r}}$ may serve as a tractable and reasonably good alternative.
 \medskip
 
The \emph{hierarchical Tucker format} (HT)  in the form introduced by Hackbusch and K\"uhn in \cite{HackbuschKuhn},  
extends the idea of subspace approximation to a hierarchical or multi-level framework, which potentially does not suffer exponential scaling of the number of degrees of freedom
in $d$ anymore.  
In order to introduce this format, we first consider 
$ V_1 \otimes V_2 = \mathbb{R}^{n_1} \otimes \mathbb{R}^{n_2}$ or preferably the subspaces 
$U_1 \otimes U_2 \subset  V_1 \otimes V_2$, similarly to the HOSVD above. 
For the approximation of $\mathbf{X} \in \mathcal{H}_d$ we use a subspace $U_{\{ 1,2 \} } \subset  U_1 \otimes U_2 $ with dimension $r_{\{1,2\}}  \leq r_1  r_2$. 
We can define $U_{\{ 1,2\}} $  
via a basis,
$U_{\{ 1,2\}}  =  \mbox{span } 
\{  \mathbf{u}^{\{ 1,2 \} }_{k_{ \{ 1,2 \}}}  :  k_{ \{ 1,2\} }  = 1, \ldots , r_{\{ 1,2\} }  \} $,
with basis vectors  given by 
$$
  \mathbf{u}^{\{ 1,2 \} }_{k_{ \{ 1,2 \}}}  = \sum_{{k}_1 =1 }^{r_1} \sum_{{k}_2 =1}^{r_2}
  \mathbf{B}_{\{ 1,2 \} } ( k_{\{ 1,2 \} } , k_1 , k_2 )  \;  {\bf u}^1_{k_1}   \otimes  {\bf u}^2_{k_2 } , \quad k_{\{ 1,2 \} }  =1,\hdots,r_{\{ 1,2\} },
$$
for some numbers $ \mathbf{B}_{\{ 1,2 \} } ( k_{\{ 1,2 \} } , k_1 , k_2  )$.
One may continue hierarchically in several ways, e.g.~by building a subspace  
$U_{\{ 1,2,3\} } \subset  U_{\{ 1,2\} } \otimes U_3  \subset U_1 \otimes U_2 \otimes U_3 \subset 
V_1 \otimes V_2 \otimes V_3 $, or 
  $U_{\{ 1,2,3,4\} } \subset  U_{\{ 1,2\} } \otimes U_{\{3,4\}}  $, where $U_{\{3,4\}} $ is defined analogously to $U_{\{1,2\}} $
   and so on.  

For a systematic treatment, this approach can be cast into the framework of a partition tree, 
with leaves $ \{1\}, \ldots \{d\} $, 
simply abbreviated by $1,\ldots,d$, 
and vertices  $ \alpha  \subset D:= \{ 1, \ldots , d\}$. 
The partition tree $T_I$ contains partitions of a vertex $\alpha$ (not being a leaf) into vertices $\alpha_1,\alpha_2$, i.e.,
$ \alpha = \alpha_1 \cup \alpha_2$, $ \alpha_1 \cap \alpha_2  = \emptyset$. 
We call $\alpha_1, \alpha_2 $ the {\em sons} of the {\em father}  $\alpha$.
In this notation, we can assume without loss of generality that $i <j$ for all $i \in \alpha_1$, $j \in \alpha_2$. 
The vertex $D$ is called the {\em root} of the tree. The set of leaves of a tree $T_I$ is denoted by $\mathcal{L}\zag{T_I}$ and the set of interior (non-leaf) vertices by
$\mathcal{I}\zag{T_I}$.
In the example above we have $ \alpha  : = \{ 1,2,3 \} = \alpha_1 \cup \alpha_2$  
with $\alpha_1 :=  \{ 1,2 \} $ {and} $ \alpha_2 : = \{ 3 \}$. The partition tree corresponding to the HT representation in Figure~\ref{fig:hierarchical} is given as 
$T_I=\skup{\{1\},\{2\},\{1,2\},\{3\},\{1,2,3\},\{4\},\{5\},\{4,5\},\{1,2,3,4,5\}}$ with $\mathcal{L}\zag{T_I} = \{ \{1\},\{2\},\{3\},\{4\},\{5\}\}$ and $\mathcal{I}\zag{T_I} = \{\{1,2\},\{1,2,3\},\{4,5\},\{1,2,3,4,5\}\}$. 
In general, we do not need to restrict the number of sons
of a vertex, but for simplicity we confine ourselves in the following to binary trees, i.e., to two sons per father.

Consider a non-leaf vertex $\alpha$, $\alpha \not= \{ i \}$,
with two sons $\alpha_1,\alpha_2$.
Then the corresponding subspace $U_{\alpha } \subset U_{\alpha_1} \otimes U_{\alpha_2}$ with $\dim U_\alpha = r_\alpha$ is defined by a  basis 
 \begin{equation} \label{eq:transfer} 
  \mathbf{u}^{\alpha}_{\ell}   = \sum_{i=1 }^{r_{\alpha_1} } \sum_{j=1}^{r_{\alpha_2}} 
      \mathbf{B}_{\alpha} (\ell , i,j   ) \, \mathbf{u}^{\alpha_1 }_i  \otimes  \mathbf{u}^{\alpha_2}_j, \quad \ell \in [r_{\alpha}],
\end{equation}
which is often represented by a matrix $\mathbf{U}^{\alpha}   \in  \mathbb{R}^{n_{\alpha} \times r_{\alpha}}$
 with columns $\mbf{U}^{\alpha} \zag{:,\ell}= \vecc\zag{\mbf{u}_{\ell}^{\alpha}}$ and $n_{\alpha}=\prod_{\ell \in \alpha} n_{\ell}$. 
Without loss of generality, all basis vectors $\mathbf{u}^{\alpha }_{\ell}$, $\ell=1,\ldots,r_{\alpha}$,
can be chosen to be orthonormal as long as $\alpha$ is not the root ($\alpha \neq D$).
The tensors $(\ell , i,j) \mapsto \mathbf{B}_{\alpha} (\ell, i,j) $ are called {\em transfer} or {\em component tensors}. 
For a leaf $\{i\} \simeq i$, the matrix 
$ (\mathbf{U}^i(\mu_i,k_i))_{\mu_i,k_i} \in \R^{n_i \times r_i}$ representing the basis of $U_i$ as
in \eqref{leaftransfer} is called {\em $i$-frame}.
The component tensor $ \mathbf{B}_{D} = \mathbf{B}_{ \{1,  \ldots, d\} } $  at the root   is called the {\em root tensor}.

The rank tuple $\mathbf{r} = (r_\alpha)_{\alpha \in T_I}$ of a tensor $\mathbf{X}$ associated to a partition tree $T_I$ is defined via the (matrix) ranks of the matricizations $\mathbf{X}^\alpha$, i.e.,
\[
r_\alpha = \rank \zag{\mbf{X}^{\alpha}} \quad \mbox{ for } \alpha \in T_I.
\]
In other words, a tensor $\mathbf{X}$ of rank $\mathbf{r}$ obeys several low rank matrix constraints simultaneously 
(defined via the set $\{ \mathbf{X}^\alpha : \alpha \in T_I\}$ of matricizations).
When choosing the right subspaces $U_\alpha$ related to $\mathbf{X}$, these ranks correspond precisely to the numbers $r_\alpha$ (and $r_{\alpha_1}, r_{\alpha_2}$) in \eqref{eq:transfer}.

It can be shown \cite{Hackbuschbook} that a tensor of rank $\mathbf{r}$ is determined completely by the transfer tensors 
$\mathbf{B}_{t} $, $ t\in \mathcal{I}\zag{T_I} $ and the $\alpha$-frames $\mbf{U}_{\alpha}$, $\alpha \in \mathcal{L}\zag{T_I}$. 
This correspondence is realized by a multilinear function $\tau$, i.e.,
\begin{equation*} 
\big( \{ \mathbf{B}_{t} : t \in \mathcal{I}\zag{T_I} \}, \{\mathbf{U}_{\alpha}: \alpha \in \mathcal{L}\zag{T_I} \}  \big) \mapsto
\mathbf{X} = \tau \big( \{ \mathbf{B}_{t} : t \in \mathcal{I}\zag{T_I} \}, \{\mathbf{U}_{\alpha}: \alpha \in \mathcal{L}\zag{T_I} \}  \big). 
\end{equation*}
For instance, the tensor presented in Figure~\ref{fig:hierarchical} is completely parametrized by 
$$
\zag{\skup{\mbf{B}_{\{1,2\}}, \mbf{B}_{\{1,2,3\}}, \mbf{B}_{\{4,5\}}, \mbf{B}_{\{1,2,3,4,5\}}}, \skup{\mbf{U}_1,\mbf{U}_2,\mbf{U}_3,\mbf{U}_4,\mbf{U}_5}}.
$$
The map $\tau$ is defined  by applying
(\ref{eq:transfer}) recursively. Since $\mathbf{B}_{t} $ depends bi-linearly on  $\mathbf{B}_{t_1} $ and 
$\mathbf{B}_{t_2}$, the composite function $ \tau $ is indeed multi-linear in its arguments $ \mathbf{B}_{t}$ and $\mbf{U}_{\alpha}$. 

One can store a tensor $\mathbf{X} \in \Rd$ by storing only 
the transfer tensors $\mathbf{B}_{t} $, $ t\in \mathcal{I}\zag{T_I} $ and the $\alpha$-frames $\mbf{U}_{\alpha}$, $\alpha \in \mathcal{L}\zag{T_I}$, which implies
significant compression in the low rank case. More precisely, setting $ n :=  \max \{n_i:i=1 , \ldots , d\} $, $r := \max \{ r_{t} : t \in T_I \}$
the number of data required for such a representation of a rank-$\mbf{r}$ tensor is 
$\mathcal{O} (ndr +  d r^3)$, in particular it does not scale exponentially with respect to the order $d$.

Like for the HOSVD, computing the best rank-$\mbf{r}$ approximation $\mbf{X}_{\text{BEST}}$ 
to a given tensor $\mbf{X}$, i.e., the minimizer of $\mbf{Z} \mapsto \| \mbf{X} - \mbf{Z}\|_F$ subject to
$\rank(\mbf{Z}^{\alpha}) \leq r_{\alpha}$ for all $\alpha \in T_I$ is NP hard. A quasi-best approximation $\mathcal{H}_\mbf{r}(\mbf{Z})$ 
can be computed efficiently via successive SVDs. In \cite{grasedyck2010hierarchical} 
two strategies are introduced: hierarchical root-to-leaves truncation or hierarchical leaves-to-root truncation, the latter being the computationally more efficient one.
Both strategies satisfy
\begin{equation}\label{Hr:HT}
\| \mathbf{X} -\mathcal{H}_\mbf{r}(\mathbf{X})\|_F \leq C_d \| \mathbf{X} - \mathbf{X}_{\text{BEST}}\|_F,
\end{equation}
where $C_d = \sqrt{2d-2}$ for root-to-leaves truncation and $C_d = (2+\sqrt{2})\sqrt{d}$ for leaves to root truncation.
We refer to \cite{grasedyck2010hierarchical} for more details and another method that achieves $C_d = \sqrt{2d-3}$.

\begin{figure}
\caption{Hierarchical Tensor representation of an order $5$ tensor}
\centering
\includegraphics[scale=0.9, trim=90 262 342 322, clip]{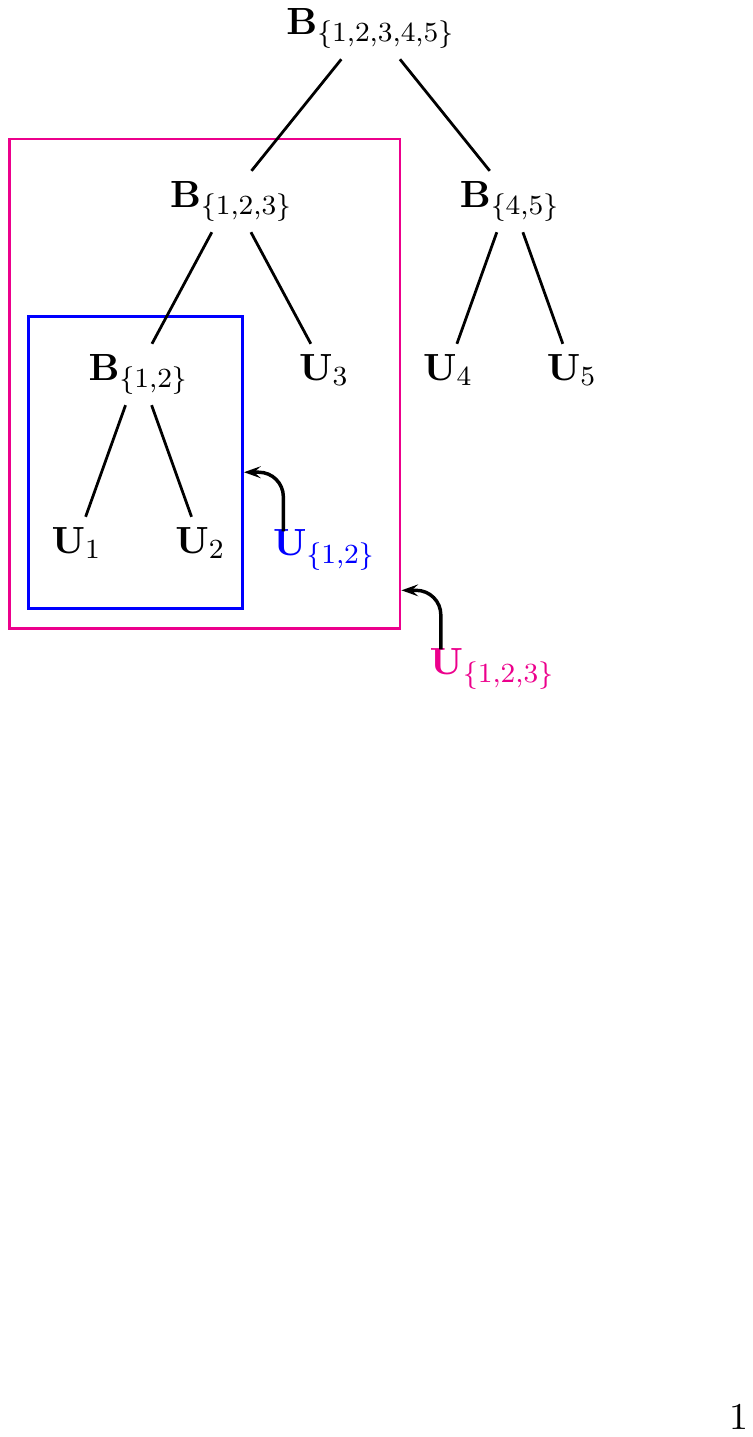}
\label{fig:hierarchical}
\end{figure}

\medskip

An important special case of a hierarchical tensor decomposition 
is the {\it tensor train decomposition} (TT) \cite{os11,oseledets2}, also known as
 {\em  matrix product states} in the physics literature.  
It is defined via the unbalanced tree  
$$T_I = \{  \{ 1, 2, 3,\ldots,d\} , \{1\},\{ 2, 3, \ldots, d\} , \{2\}, \{3,\ldots, d\}, \{3\}, \ldots ,\{d-1,d\},\{d-1\},\{ d\}\}. $$
The corresponding subspaces satisfy $U_{ \{ 1, \ldots , p +1 \}} \subset U_{\{ 1, \ldots , p \} } \otimes V_{\{p+1\}}$.
The  $\alpha$-frame $\mathbf{U}_{\alpha}$ for a leaf $ \alpha \in \{\{1\}, \{2\},\ldots, \{d-1\}\}$ is usually defined as identity matrix of appropriate size.
Therefore, the tensor $\mathbf{X} \in \mathcal{H}_d$ is completely parametrized by the transfer tensors $\mathbf{B}_{t}, {t \in \mathcal{I}(T_I)}$, and the 
$d$-frame $\mathbf{U}_{\{d\}}$. 
Applying the recursive construction,  the tensor $\mathbf{X}$  can be written as
\begin{eqnarray}
 ( \mu_1, \ldots , \mu_d )  & \mapsto & \mathbf{X} ( \mu_1, \ldots , \mu_d )  
 =  \sum_{k_1=1}^{r_1} \ldots\sum_{k_{d-1}=1}^{r_{d-1}} 
  {\mathbf{B}_1 ( \mu_1 , k_1) \mathbf{B}_2 (k_1,\mu_2 ,k_2) \cdots  
 \mathbf{B}_d (k_{d-1},\mu_d )} \ ,  \label{eq:utt} 
\end{eqnarray}
where $\mbf{B}_d:=\mbf{U}_{\{d\}}$ and we used the abbreviation $\alpha \simeq \{\alpha, \alpha+1,\ldots, d\}$, for all $\alpha \in \uglate{d-1}$.
\begin{figure}
\caption{TT representation of an order $5$ tensor with abbreviation $i \simeq \{i,\ldots,d\}$ for the interior nodes}
\centering
\includegraphics[scale=0.8, trim=85 342 398 267, clip]{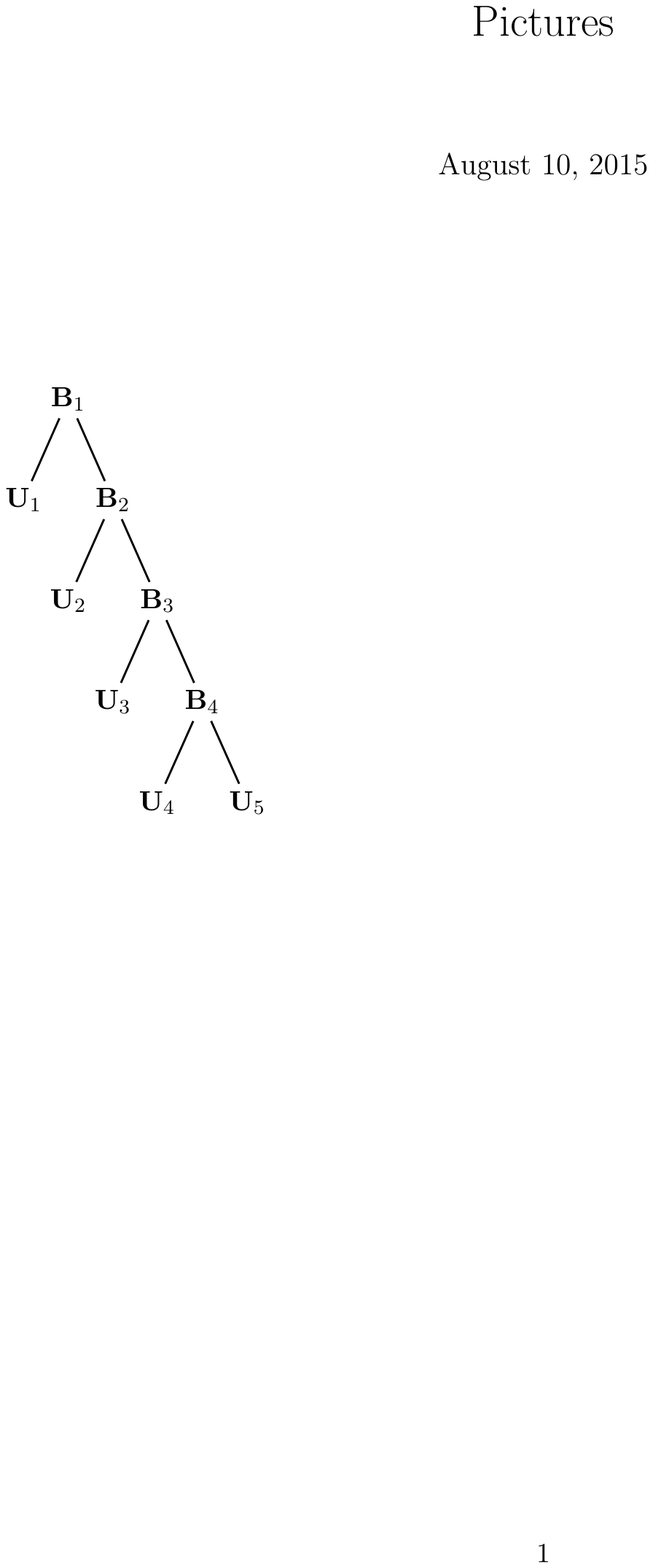}
\label{fig:tt}
\end{figure}
Introducing  the matrices 
$  \mathbf{G}_i (\mu_i) \in \mathbb{R}^{r_{i-1} \times r_{i} }  $,
$$
\big( \mathbf{G}_i (\mu_i) \big)_{k_{i-1},k_i } = \mathbf{B}_i  ( k_{i-1} , \mu_i , k_i  ) , \quad i = 2, \hdots, d-1 , 
$$
and, with the convention that
$r_0 =r_d =1$,   
$
\big( \mathbf{G}_1 (\mu_1) \big)_{k_1 } = \mathbf{B}_1 ( \mu_1 , k_1 )$   
and
\begin{equation}\label{def:Gd}
\big( \mathbf{G}_d (\mu_d) \big)_{k_{d-1} } = \mathbf{B}_d (k_{d-1}, \mu_d ),
\end{equation}
formula (\ref{eq:utt}) can be rewritten entry-wise  by
matrix--matrix products
\begin{equation}
\label{eq:mps} 
  \mathbf{X} (\mu_1, \ldots, \mu_d )  = {\mathbf{G}_{1} (\mu_{1})  
\cdots \mathbf{G}_{i} (\mu_{i}) \cdots \mathbf{G}_{d} (\mu_{d})} = \tau ( \mathbf{B}_1 , \ldots ,  \mathbf{B}_{d} ). 
\end{equation}
This representation is by no means unique. The tree is ordered according to the father-son relation into a hierarchy of levels, where $\mathbf{B}_1$ is the  root tensor. 
As in the HOSVD and HT scenario, the rank tuple $\mbf{r} = (r_1,\hdots,r_{d-1})$ is given by the ranks of certain matricizations, i.e., 
$$r_i=\rank\zag{\mbf{X}^{\{i,i+1,\ldots,d\}}} \quad \mbox{ for } i \in \uglate{d-1}.$$
The computation of $\mbf{X}_{\text{BEST}} = \operatorname{argmin}_\mbf{Z} \| \mbf{X} - \mbf{Z}\|_F$ subject to $\rank(\mbf{Z}^{\{i,i+1,\ldots,d\}}) \leq r_i$ is
again NP hard (for $d\geq 3$). A quasi-best approximation in the TT-format of a given tensor $\mbf{X}$ can be efficiently determined 
by computing the tensors $\overline{\mbf{B}}_1 \in \R^{n_1 \times r_1}$, $\overline{\mbf{B}}_k \in \R^{r_{k-1}\times n_2 \times r_2},k  \in [d-1]$, $\overline{\mbf{B}}_d \in \R^{r_{d-1} \times n_d}$ in 
a  representation of the form \eqref{eq:utt} of $\mathcal{H}_{\mbf{r}}(\mathbf{X})$. 
First compute the best rank-$r_1$ approximation $\overline{\mbf{X}}^{\{1\}}$ of the matrix $\mbf{X}^{\{1\}} \in \R^{n_1 \times n_2 \cdots n_d}$ via the singular value decomposition
so that $\overline{\mbf{X}}^{\{1\}} = \overline{\mbf{B}}_1 \mbf{\Sigma}_1\mbf{V}_1^T = \overline{\mbf{B}}_1 \mbf{M}_1^{\{1\}}$
with $\overline{\mbf{B}}_1 \in \R^{n_1 \times r_1}$ and $\mbf{M}^{\{1\}}_1 \in \R^{r_1 \times n_2 \cdots n_d}$. Reshaping $\mbf{M}^{\{1\}}_1$ yields a tensor $\mbf{M}_1 \in \R^{r_1 \times n_2 \times \cdots \times n_d}$.
Next we compute the best rank-$r_2$ approximation $\overline{\mbf{M}}^{\{1,2\}}_1$ of the matricization $\mbf{M}^{\{1,2\}}_1 \in \R^{r_1 n_1 \times n_2 \cdots n_d}$ via an SVD as
$\overline{\mbf{M}}^{\{1,2\}}_1 = \overline{\mbf{B}}_2^{\{1,2\}} \mbf{\Sigma}_{2} \mbf{V}_2^T =  \overline{\mbf{B}}_2^{\{1,2\}} \mbf{M}_2^{\{1\}}$ with
$\overline{\mbf{B}}^{\{1,2\}}_2 \in \R^{r_1 n_2 \times r_2}$ and $\mbf{M}_2^{\{1\}} \in \R^{r_2 \times n_3 \cdots n_d}$ so that $\overline{\mbf{B}}_2 \in \R^{r_1 \times n_2 \times r_2}$. One iteratively continues in this way
via computing approximations of (matrix) rank $r_k$ via SVDs,
\begin{align*}
&\overline{\mbf{M}}_k^{\{1,2\}}=\overline{\mbf{B}}_{k+1}^{\{1,2\}}\mbf{\Sigma}_{k+1}\mbf{V}_{k+1}^T=\overline{\mbf{B}}_{k+1}^{\{1,2\}}\mbf{M}_{k+1}^{\{1\}}, \quad \text{ for } k = 2,\hdots, d-2, \qquad 
 \overline{\mbf{B}}_d=\mbf{M}_{d-1}.
\end{align*}
Forming the matrices $\overline{\mbf{G}}_k(\mu_k) \in \R^{r_{k-1} \times r_{k}}$, $\mu_k \in [n_k]$, $k \in [d]$, from the tensors $\overline{\mbf{B}}_k$ as in \eqref{def:Gd},
the computed rank-$\mbf{r}$ approximation is given as 
$$
\mathcal{H}_{\mbf{r}}\zag{\mbf{X}}\zag{i_1,i_2,\ldots,i_d}=\overline{\mbf{G}}_1\zag{i_1}\overline{\mbf{G}}_2\zag{i_2}\cdots \overline{\mbf{G}}_d\zag{i_d}.
$$
As shown in \cite{os11}, it satisfies the inequality
\begin{equation}\label{Hr:TT}
\|\mathbf{X} - \mathcal{H}_{\mbf{r}}(\mathbf{X}) \|_F \leq  \sqrt{d-1} \|\mathbf{X} - \mathbf{X}_{\text{BEST}}\|_F.
\end{equation}

\section{Analysis of iterative hard thresholding}
\label{Section:TIHT}

We now pass to our iterative hard thresholding algorithms.
For each tensor format (HOSVD, TT, HT), we let $\cal{H}_{\mbf{r}}$ be a corresponding low rank projection operator as described in the previous section.
Given measurements $\mbf{y} = \mathcal{A}(\mathbf{X})$ of a low rank tensor $\mathbf{X}$, 
or $\mbf{y} = \mathcal{A}(\mbf{X}) + \mbf{e}$ if the measurements are noisy,
the iterative thresholding algorithm starts with an initial guess $\mathbf{X}^0$ (often $\mathbf{X} = \mathbf{0}$) 
and performs the iterations
\begin{align}
\mbf{Y}^{j} &=  \mbf{X}^j+\mu_j \mathcal{A}^*\zag{\mbf{y}-\mathcal{A}\zag{\mbf{X}^j}} , \label{IHTB1}\\ 
\mbf{X}^{j+1} & = \mathcal{H}_{\mbf{r}}(\mbf{Y}^{j}). \label{IHTB2}
\end{align}
We analyze two variants of the algorithm which only differ by the choice of the step lengths $\mu_j$.
\begin{itemize}
\item {\bf Classical TIHT} (CTIHT) uses simply $\mu_j = 1$, see \cite{blda09} for the sparse recovery variant.
\item {\bf Normalized TIHT} (NTIHT) uses (see \cite{blda10} for the sparse vector and \cite{tawe13} for the matrix variant)
\begin{equation}\label{muj:NTIHT}
\mu_j=\frac{\norma{\mathcal{M}^j\zag{\mathcal{A}^*\zag{\mbf{y}-\mathcal{A}\zag{\mbf{X}^j}}}}_F^2}{\norma{\mathcal{A}\zag{\mathcal{M}^j\zag{\mathcal{A}^*\zag{\mbf{y}-\mathcal{A}\zag{\mbf{X}^j}}}}}_2^2}.
\end{equation}
\end{itemize}
Here, the operator $\mathcal{M}^j:\Rd \rightarrow \Rd$ depends on the choice of the tensor format 
and is computed via projections onto spaced spanned by left singular vectors of
several matricizations of $\mathbf{X}^j$. This choice of $\mu_j$ is motivated by the 
fact that in the sparse vector recovery scenario, the corresponding choice of the step length
maximally decreases the residual if the support set does not change in this iteration \cite{blda10}.

Let us describe the operator $\mathcal{M}^j:\Rd \rightarrow \Rd$ appearing in \eqref{muj:NTIHT}. 
For the sake of illustration we first specify it for the special case $d=2$, i.e., the matrix case.
Let $P_{\mbf{U}_1}^j$ and $P_{\mbf{U}_2}^j$ be the projectors onto the top $r$ left and right singular vector spaces of $\mbf{X}^j$, respectively. Then 
$\mathcal{M}^j(\mathbf{Z}) = \mbf{P}_{\mbf{U}_1}^j \mathbf{Z} \mbf{P}_{\mbf{U}_2}^j$ for a matrix $\mathbf{Z}$ so that \eqref{muj:NTIHT} yields
$$
\mu_j=\frac{\norma{\mbf{P}_{\mbf{U}_1}^j \mathcal{A}^*\zag{\mbf{y}-\mathcal{A}\zag{\mbf{X}^j}} \mbf{P}_{\mbf{U}_2}^j}_F^2}{\norma{\mathcal{A}\zag{\mbf{P}_{\mbf{U}_1}^j \mathcal{A}^*\zag{\mbf{y}-\mathcal{A}\zag{\mbf{X}^j}} \mbf{P}_{\mbf{U}_2}^j}}_2^2}.
$$
For the general tensor case, let $\mbf{U}_{i,j}$ be the left singular vectors of the matricizations ${\mbf{X}^j}^{\{i\}}$, ${\mbf{X}^j}^{\{1,\ldots,i\}}$, ${\mbf{X}^j}^{T_I(i)}$ in case of HOSVD, TT, HT decomposition with the corresponding ordered tree $T_I$, respectively. The corresponding projection operators are given as $\mbf{P}_{\mbf{U}_i}^j:=\hat{\mbf{U}}_{i,j}\hat{\mbf{U}}_{i,j}^*$, 
where $\hat{\mbf{U}}_{i,j}=\mbf{U}_{i,j}\zag{:,\uglate{r_i}}$, with $r_i=r_{{T_I}(i)}$ in the HT case. 
Then in the case of the HOSVD decomposition we define $$\mathcal{M}^j\zag{\mbf{Z}}=\mbf{Z} \times_1 \mbf{P}_{\mbf{U}_1}^j \times_2 \mbf{P}_{\mbf{U}_2}^j \times \cdots \times_d \mbf{P}_{\mbf{U}_d}^j.$$
In order to define the operator $\mathcal{M}^j$ for the TT decomposition we use the $k$-mode product defined in \eqref{def:kmode}.
The TT decomposition of  a $d$-th order tensor $\mbf{Z}$ can be written as
\begin{align*}
\mbf{Z}\zag{i_1,i_2,\ldots,i_d}&=\mbf{Z}_1(i_1)\mbf{Z}_2(i_2)\cdots \mbf{Z}_{d}(i_d) \\&= \mbf{Z}_d \times_1 \zag{\mbf{Z}_{d-1} \times_1 \zag{ \cdots \zag{\mbf{Z}_2 \times_1 \mbf{Z}_1}^{\{1,2\}} \cdots}^{\{1,2\}}}^{\{1,2\}}\zag{(i_1,i_2,\ldots,i_{d-1}),i_d}.
\end{align*}
Then the operator $\mathcal{M}^j: \Rd \rightarrow \Rd$ is defined as
$$\mathcal{M}^j\zag{\mbf{Z}}:=\tensorization\zag{\mbf{Z}_d \times_1 \mbf{P}_{\mbf{U}_{d-1}}^j\zag{\mbf{Z}_{d-1} \times_1 \mbf{P}_{\mbf{U}_{d-2}}^j \zag{ \cdots \mbf{P}_{\mbf{U}_2}^j\zag{\mbf{Z}_2 \times_1 \mbf{P}_{\mbf{U}_1}^j\mbf{Z}_1}^{\{1,2\}} \cdots}^{\{1,2\}}}^{\{1,2\}}}^{\{1,2\}},$$
where $\tensorization\zag{\mbf{x}} \in \Rd$ represents the tensorized version of a vector $\mbf{x}$, defined in \eqref{eq:tensorization}.

Using the general $k$-mode product, one can define the operator $\mathcal{M}^j$ for the 
general HT-decomposition by applying the above procedure in an analogous way.   

In the normalized version of the tensor iterative hard thresholding algorithm (NTIHT algorithm), one computes the projection operators $P_{\mbf{U}_i}^j$ in each iteration $j$. To accomplish this, the tensor decomposition has to be computed one extra time in each iteration which makes one iteration of algorithm substantially slower in comparison to the CTIHT algorithm. However, we are able to provide better convergence results  
for NTIHT than for the CTIHT algorithm.

\medskip

The available analysis of the IHT algorithm for recovery of sparse vectors \cite{blda09} and low rank matrices \cite{NIPS2010_3904} is based on the restricted isometry property (RIP). 
Therefore, we start by introducing an analog for tensors, which we call the tensor restricted isometry property (TRIP). 
Since different tensor decomposition induce different notions of tensor rank, they also induce different notions of the TRIP.

\begin{definition}[TRIP]
Let $\mathcal{A} \colon \R^{n_1 \times n_2 \times \cdots \times n_d} \rightarrow \R^m$ be a measurement map.  Then for a fixed tensor decomposition and a corresponding rank tuple $\mbf{r}$, the tensor restricted isometry constant $\delta_{\mathbf{r}}$ of $\mathcal{A}$ is the smallest quantity such that
\begin{equation}
\zag{1-\delta_{\mathbf{r}}}\norma{\mathbf{X}}_F^2 \leq \norma{\mathcal{A}\zag{\mathbf{X}}}_{2}^{2} \leq \zag{1+\delta_{\mathbf{r}}} \norma{\mathbf{X}}_F^2
\label{RIPeq}
\end{equation}
holds for all tensors $\mbf{X} \in \Rd$ of rank at most $\mathbf{r}$.
\end{definition}
We say that $\mathcal{A}$ satisfies the TRIP at rank $\mathbf{r}$ if $\delta_{\mathbf{r}}$ is bounded by a sufficiently small constant between $0$ and $1$. When referring to a particular tensor decomposition
we use the notions HOSVD-TRIP, TT-TRIP, and HT-TRIP.
Under the TRIP of the measurement operator $\mathcal{A}$, we prove partial convergence results for the two versions of the TIHT algorithm.
Depending on some number $a \in (0,1)$, the operator norm and the restricted isometry constants of $\mathcal{A}$, and on the version of TIHT, we define
\begin{align}
\delta(a) & = \left\{ \begin{array}{ll} \frac{a}{4} & \mbox{ for CTIHT},\\ \frac{a}{a+8} & \mbox{ for NTIHT}, \end{array} \right. \\
\varepsilon(a) & = \left\{ \begin{array}{ll} \frac{a^2}{17\zag{1+\sqrt{1+\delta_{3\mbf{r}}}\norma{\mathcal{A}}_{2\rightarrow 2}}^2} & \mbox{ for CTIHT}, \\
 \frac{a^2\zag{1-\delta_{3\mbf{r}}}^2}{17\zag{1-\delta_{3\mbf{r}}+\sqrt{1+\delta_{3\mbf{r}}}\norma{\mathcal{A}}_{2\rightarrow 2}}^2} &\mbox{ for NTIHT},
 \end{array} \right. \label{def:eps} \\
 b(a) & =  \left\{ \begin{array}{ll} 2\sqrt{1+\delta_{3\mbf{r}}}+ \sqrt{4\varepsilon(a)+2\varepsilon(a)^2}\norma{\mathcal{A}}_{2\rightarrow 2} & \mbox{ for CTIHT}, \\
  2\frac{\sqrt{1+\delta_{3\mbf{r}}}}{1-\delta_{3\mbf{r}}}+ \sqrt{4\varepsilon(a)+2\varepsilon(a)^2}\frac{1}{1-\delta_{3\mbf{r}}}\norma{\mathcal{A}}_{2\rightarrow 2} &
 \mbox{ for NTIHT}. \end{array} \right. \label{def:ba}
\end{align}

\begin{theorem}\label{convergenceclassTIHT}
For $a \in (0,1)$, let $\mathcal{A}:\Rd \rightarrow \R^m$ satisfy the TRIP (for a fixed tensor format) with 
\begin{equation}\label{assump:delta3r}
\delta_{3\mathbf{r}} < \delta(a)
\end{equation}
and let $\mathbf{X} \in \Rd$ be a tensor of rank at most $\mbf{r}$. 
Given measurements $\mbf{y}=\mathcal{A}\zag{\mbf{X}}$, the sequence $(\mathbf{X}^j)$ produced by CTIHT or NTIHT
converges to $\mathbf{X}$ if 
\begin{equation}\label{assump:1+vareps}
\norma{\mathbf{Y}^j-\mathbf{X}^{j+1}}_F \leq \zag{1+\varepsilon(a)}\norma{\mathbf{Y}^j-\mathbf{X}}_F \quad \mbox{ for all } j = 1,2,\hdots\;. 
\end{equation}
If the measurements are noisy, $\mbf{y}=\mathcal{A}\zag{\mbf{X}}+\mbf{e}$ for some $\mbf{e} \in \R^m$,
and if \eqref{assump:1+vareps} holds, then
\begin{equation}\label{iterating}
\norma{\mbf{X}^{j+1}-\mbf{X}}_F \leq a^j \norma{\mbf{X}^0-\mbf{X}}_F + \frac{b(a)}{1-a}\norma{\mbf{e}}_2 \quad \mbox{ for all } j=1,2,\hdots\;.
\end{equation}
Consequently, if $\mbf{e}\neq \mbf{0}$ then
after at most $j^*:=\lceil{\log_{1/a}\zag{\norma{\mbf{X}^0-\mbf{X}}_F/\norma{\mbf{e}}_2}}\rceil$ iterations, $\mathbf{X}^{j^*+1}$ 
estimates $\mbf{X}$ with accuracy 
\begin{equation}\label{approxi}
\norma{\mbf{X}^{j^*+1}-\mbf{X}}_F \leq \frac{1-a+b(a)}{1-a} \norma{\mbf{e}}_2.
\end{equation}
\end{theorem}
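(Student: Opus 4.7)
The strategy is to first establish a per-iteration contraction of the form
\[
\norma{\mbf{X}^{j+1} - \mbf{X}}_F \leq a\,\norma{\mbf{X}^j - \mbf{X}}_F + b(a)\,\norma{\mbf{e}}_2,
\]
from which \eqref{iterating} follows by iteration together with the geometric sum $\sum_{k \geq 0} a^k = 1/(1-a)$, and \eqref{approxi} follows from \eqref{iterating} by substituting $j=j^*$ which is chosen exactly so that $a^{j^*}\norma{\mbf{X}^0 - \mbf{X}}_F \leq \norma{\mbf{e}}_2$.

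To prove the per-iteration bound, set $\mbf{W} := \mbf{X}^{j+1} - \mbf{X}$, which has rank at most $2\mbf{r}$. Squaring the quasi-best approximation hypothesis \eqref{assump:1+vareps}, expanding both $\norma{\mbf{Y}^j - \mbf{X}^{j+1}}_F^2$ and $\norma{\mbf{Y}^j - \mbf{X}}_F^2$, and cancelling the common $\norma{\mbf{Y}^j}_F^2$ terms leaves the core inequality
\[
\norma{\mbf{W}}_F^2 \leq 2\interval{\mbf{W}, \mbf{Y}^j - \mbf{X}} + \zag{2\varepsilon(a) + \varepsilon(a)^2}\norma{\mbf{Y}^j - \mbf{X}}_F^2.
\]
This splits the work into bounding the inner product (the signal term, to be handled by TRIP) and the slack proportional to $\varepsilon(a)$ (coming from the fact that $\mathcal{H}_{\mbf{r}}$ is only quasi-optimal).

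Using the identity $\mbf{Y}^j - \mbf{X} = \zag{\mathcal{I} - \mu_j \mathcal{A}^*\mathcal{A}}\zag{\mbf{X}^j - \mbf{X}} + \mu_j \mathcal{A}^*\mbf{e}$, the inner product splits into a signal piece $\interval{\mbf{W}, \zag{\mathcal{I} - \mu_j\mathcal{A}^*\mathcal{A}}\zag{\mbf{X}^j - \mbf{X}}}$ and a noise piece $\mu_j\interval{\mathcal{A}\mbf{W}, \mbf{e}}$. Both $\mbf{W}$ and $\mbf{X}^j - \mbf{X}$ are sums of at most two rank-$\mbf{r}$ tensors, so their sum $\mbf{X}^{j+1} + \mbf{X}^j - 2\mbf{X}$ has rank at most $3\mbf{r}$ and their difference $\mbf{X}^{j+1} - \mbf{X}^j$ at most $2\mbf{r}$. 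Applying the polarization identity $\interval{\mathcal{A}\mbf{U}, \mathcal{A}\mbf{V}} = \tfrac{1}{4}\zag{\norma{\mathcal{A}(\mbf{U}+\mbf{V})}_2^2 - \norma{\mathcal{A}(\mbf{U}-\mbf{V})}_2^2}$ with $\mbf{U} = \mbf{W}$, $\mbf{V} = \mbf{X}^j - \mbf{X}$ and invoking TRIP at rank $3\mbf{r}$ yields the standard RIP-type bound
\[
\aps{\interval{\mbf{W}, \zag{\mathcal{I} - \mathcal{A}^*\mathcal{A}}\zag{\mbf{X}^j - \mbf{X}}}} \leq \delta_{3\mbf{r}}\,\norma{\mbf{W}}_F\,\norma{\mbf{X}^j - \mbf{X}}_F.
\]
The noise piece is controlled by $\aps{\interval{\mathcal{A}\mbf{W},\mbf{e}}} \leq \sqrt{1+\delta_{3\mbf{r}}}\norma{\mbf{W}}_F\norma{\mbf{e}}_2$ via TRIP on $\mbf{W}$ alone, and the slack $\norma{\mbf{Y}^j - \mbf{X}}_F$ is bounded by $\zag{1 + \sqrt{1+\delta_{3\mbf{r}}}\norma{\mathcal{A}}_{2\rightarrow 2}}\norma{\mbf{X}^j - \mbf{X}}_F + \norma{\mathcal{A}}_{2\rightarrow 2}\norma{\mbf{e}}_2$ using $\norma{\mathcal{A}^*\mathcal{A}(\mbf{X}^j - \mbf{X})}_F \leq \norma{\mathcal{A}}_{2\rightarrow 2}\sqrt{1+\delta_{3\mbf{r}}}\norma{\mbf{X}^j - \mbf{X}}_F$; this is where $\norma{\mathcal{A}}_{2\rightarrow 2}$ enters the definitions of $\varepsilon(a)$ and $b(a)$. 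Substituting these three bounds into the core inequality produces a quadratic of the form $\norma{\mbf{W}}_F^2 \leq A\norma{\mbf{W}}_F + B$, to which AM-GM gives $\norma{\mbf{W}}_F \leq A + \sqrt{2B}$; grouping coefficients of $\norma{\mbf{X}^j - \mbf{X}}_F$ and $\norma{\mbf{e}}_2$ and comparing with \eqref{assump:delta3r}--\eqref{def:ba} yields exactly the claimed per-iteration recursion.

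The NTIHT case proceeds along the same lines, with one additional input: since $\mathcal{M}^j\zag{\mathcal{A}^*(\mbf{y}-\mathcal{A}\mbf{X}^j)}$ has rank at most $\mbf{r}$ by construction of $\mathcal{M}^j$ from the singular subspaces of $\mbf{X}^j$, TRIP forces $\mu_j \in \uglate{(1+\delta_{3\mbf{r}})^{-1},\, (1-\delta_{3\mbf{r}})^{-1}}$. Decomposing $\mathcal{I} - \mu_j\mathcal{A}^*\mathcal{A} = (1-\mu_j)\mathcal{I} + \mu_j(\mathcal{I} - \mathcal{A}^*\mathcal{A})$ and bounding each piece separately (via $\aps{1-\mu_j} \leq \delta_{3\mbf{r}}/(1-\delta_{3\mbf{r}})$ and $\mu_j \leq (1-\delta_{3\mbf{r}})^{-1}$) produces the $(1-\delta_{3\mbf{r}})^{-1}$ factors appearing in the NTIHT formulas for $\varepsilon(a)$ and $b(a)$. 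The main subtlety I expect is the polarization step: because low-rank tensors do not form a linear subspace, the RIP-type inner-product bound requires careful identification of which linear combinations remain low-rank; fortunately only $\mbf{X}$, $\mbf{X}^j$, and $\mbf{X}^{j+1}$ ever combine in this analysis, so a single application of TRIP at rank $3\mbf{r}$ closes the argument.
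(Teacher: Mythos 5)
Your proposal is correct and reproduces the paper's estimates, following the same overall architecture: square hypothesis \eqref{assump:1+vareps}, cancel $\norma{\mbf{Y}^j-\mbf{X}}_F^2$ to isolate the cross term $2\interval{\mbf{Y}^j-\mbf{X},\mbf{X}^{j+1}-\mbf{X}}$ plus the $\zag{2\varepsilon+\varepsilon^2}$-slack, bound each piece, and close a contraction. The one genuine difference is how the TRIP enters the signal term. The paper introduces the subspace $U^j=\spann\skup{\mbf{X},\mbf{X}^j,\mbf{X}^{j+1}}$, all of whose elements have rank at most $3\mbf{r}$, and bounds $\norma{\mbf{I}-\mu_j\mathcal{A}_{\mbf{Q}}^{j*}\mathcal{A}_{\mbf{Q}}^{j}}_{2\rightarrow 2}$ by locating the spectrum of $\mathcal{A}_{\mbf{Q}}^{j*}\mathcal{A}_{\mbf{Q}}^{j}$ in $\uglate{1-\delta_{3\mbf{r}},1+\delta_{3\mbf{r}}}$; this treats $\mu_j=1$ and the NTIHT step size uniformly. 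You instead use polarization plus the parallelogram law with $\mbf{U}=\mbf{X}^{j+1}-\mbf{X}$, $\mbf{V}=\mbf{X}^j-\mbf{X}$, correctly observing that both $\mbf{U}+\mbf{V}$ and $\mbf{U}-\mbf{V}$ have rank at most $3\mbf{r}$ so a single TRIP level suffices; this yields the same constant $\delta_{3\mbf{r}}$ for CTIHT, and your split $\zag{1-\mu_j}\mathcal{I}+\mu_j\zag{\mathcal{I}-\mathcal{A}^*\mathcal{A}}$ with $\aps{1-\mu_j}\leq\delta_{3\mbf{r}}/\zag{1-\delta_{3\mbf{r}}}$ gives $2\delta_{3\mbf{r}}/\zag{1-\delta_{3\mbf{r}}}$ for NTIHT, which is exactly the paper's $\frac{1+\delta_{3\mbf{r}}}{1-\delta_{3\mbf{r}}}-1$. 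Likewise your completion-of-the-square bound $\norma{\mbf{W}}_F\leq A+\sqrt{2B}$ replaces the paper's $\alpha,\beta,\gamma$ splitting with the prefactor $\f\zag{\beta,\gamma}\leq 1$, and both yield the coefficient $\sqrt{4\varepsilon+2\varepsilon^2}$. Net assessment: your route is slightly more elementary (no auxiliary projection operator needed), the paper's is slightly more uniform across the two step-size rules; the resulting constants coincide, and the concluding iteration to \eqref{iterating} and \eqref{approxi} is identical.
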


\begin{remark}
\begin{itemize}
\item[(a)] The unpleasant part of the theorem is that condition \eqref{assump:1+vareps} cannot be checked. It is implied by the stronger condition
$$
\norma{\mbf{Y}^j-\mbf{X}^{j+1}}_F \leq \zag{1+\varepsilon(a)} \norma{\mbf{Y}^j-\mbf{Y}_{\text{BEST}}^j}_F, 
$$
where $\mbf{Y}_{\text{BEST}}^j$ is the best rank-$\mbf{r}$ approximation of $\mbf{Y}^j$, since the best approximation
$\mbf{Y}_{\text{BEST}}^j$ is by definition a better approximation of rank $\mbf{r}$ to $\mbf{Y}^j$ than $\mbf{X}$. 
Due to \eqref{Hr:HOSVD}, \eqref{Hr:HT} and \eqref{Hr:TT}, we can only guarantee that this condition holds with $(1+\varepsilon(a))$ replaced by $C(d) \asymp \sqrt{d}$, but the proof
of the theorem only works for $(1+\varepsilon(a))$. In fact, $\varepsilon(a)$ is close to $0$ as $\|\mathcal{A}\|_{2 \to 2}$ scales like
$\sqrt{n_1\cdot n_2 \cdots n_d/m}$ for reasonable measurement maps with $\delta_{3\mbf{r}} < 1$, see below. 
However, the approximation guarantees for $\mathcal{H}_\mathbf{r}$
are only worst case estimates and one may expect that usually much better approximations are computed that satisfy \eqref{assump:1+vareps}, which only requires
a comparison of the computed approximation error of the Frobenius distance of $\mathbf{Y}^j$ to $\mathbf{X}$ rather than to $\mathbf{Y}^j_{\text{BEST}}$. In fact, during the initial iterations one is usually still far from the original tensor $\mbf{X}$ so that \eqref{assump:1+vareps} will hold.
In any case, the algorithms work in practice
so that the theorem may explain why this is the case.
\item[(b)] The corresponding theorem \cite{tawe13} for the matrix recovery case applies also to approximately low rank matrices -- not only to exactly low rank matrices -- 
and provides approximation guarantees also for this case. This is in principle also contained in our theorem by splitting 
$\mathbf{X} = \mathbf{X}_{\text{BEST}} + \mathbf{X}_{c}$ into the best rank-$\mbf{r}$ approximation and a remainder term $\mathbf{X}_c$, and writing
\[
\mathbf{y} = \mathcal{A}(\mathbf{X}) + \mathbf{e} = \mathcal{A}(\mathbf{X}_{\text{BEST}}) + \mathcal{A}(\mathbf{X}_c) + \mathbf{e} =  \mathcal{A}(\mathbf{X}_{\text{BEST}}) + \widetilde{\mathbf{e}},
\]
where $\widetilde{\mathbf{e}} = \mathcal{A}(\mathbf{X}_c) + \mathbf{e}$. Then the theorem may be applied to $\widetilde{\mathbf{e}}$ instead of $\mathbf{e}$ and 
\eqref{approxi} gives the error estimate 
\[
\norma{\mbf{X}^{j^*+1}-\mbf{X}_{\text{BEST}}}_F \leq \frac{1-a+b(a)}{1-a} \| \mathcal{A}(\mathbf{X}_c) + \mathbf{e} \|_2.
\]
In the matrix case, the right hand side can be further estimated by a sum of three terms (exploiting the restricted isometry property), 
one of them being the nuclear norm of $\mathbf{X}_c$, i.e., the error of best rank-$r$ approximation in the nuclear norm.
In the tensor case, a similar estimate is problematic, in particular, the analogue of the nuclear norm approximation error is unclear.
\item[(c)] In \cite{rascst15} local convergence of a class of algorithms including iterative hard thresholding has been shown, i.e., once an iterate $\mbf{X}^j$ is close enough to the original $\mbf{X}$ then convergence is guaranteed. (The theorem in \cite{rascst15} requires $\mathcal{H}_{\mathbf{r}}$ to be a retraction on the manifold of rank-$\mathbf{r}$ tensors which is in fact true \cite{kestva14, steinlechner2015riemannian}.) Unfortunately, the distance to $\mbf{X}$ which ensures local convergence depends on the curvature at $\mbf{X}$ 
of the manifold of rank-$\mbf{r}$ tensors and is therefore unknown a-priori.
Nevertheless, together with Theorem~\ref{convergenceclassTIHT}, we conclude that the initial iterations 
decrease the distance to the original 
$\mbf{X}$ (if the initial distance is large enough), and if the iterates become sufficiently close to $\mbf{X}$,
then we are guaranteed convergence. 
The theoretical question remains about the ``intermediate phase'', i.e., whether the iterates always 
do come close enough to $\mbf{X}$ at some point.
\item[(d)] In \cite{heinsc15}, Hedge, Indyk, and Schmidt find a way to deal with approximate projections onto model sets satisfying
a relation like \eqref{Hr:approx} within iterative hard thresholding algorithms by working with a second approximate 
projection $\widetilde{\mathcal{H}}_{\mathbf{r}}$ satisfying a so-called head approximation
guarantee of the form $\|\widetilde{\mathcal{H}}_{\mathbf{r}}(\mathbf{X}) \|_F \geq c \| \mathbf{X} \|_F$ for some constant $c > 0$.
Unfortunately, we were only able to find such head approximations for the tensor formats at hand
with constants $c$ that scale unfavorably with $r$ and the dimensions $n_1,\hdots,n_d$, so that in the end one arrives only at trivial estimates for the minimal number of required measurements. 
\end{itemize}
\end{remark}

\begin{proof}[Proof of Theorem~\ref{convergenceclassTIHT}]
We proceed similar to the corresponding proofs for the sparse vector \cite{fora13} and matrix recovery case \cite{tawe13}.
The fact that \eqref{assump:1+vareps} only holds with an additional $\varepsilon = \varepsilon(a)$ requires extra care.

It follows from assumption \eqref{assump:1+vareps} that
\begin{align}
\zag{1+\varepsilon}^2
\norma{\mathbf{Y}^j-\mathbf{X}}_F^2 & \geq  \norma{\mathbf{Y}^j-\mathbf{X}^{j+1}}_F^2  =  \norma{\mathbf{Y}^j-\mathbf{X}+\mathbf{X}-\mathbf{X}^{j+1}}_F^2  \nonumber\\
& = \norma{\mathbf{Y}^j-\mathbf{X}}_F^2+ \norma{\mathbf{X}-\mathbf{X}^{j+1}}_F^2 + 2\interval{\mathbf{Y}^j-\mathbf{X}, \mathbf{X}-\mathbf{X}^{j+1}}.  \label{Thm1:proof:assumvareps}
\end{align}
Subtracting $\norma{\mathbf{Y}^j-\mathbf{X}}_F^2$ and using $\mathbf{Y}^j=\mathbf{X}^j-\mu_j\mathcal{A}^*\zag{\mathcal{A}\zag{\mathbf{X}^j}-\mathbf{y}}=\mathbf{X}^j-\mu_j\mathcal{A}^*\mathcal{A}\zag{\mathbf{X}^j-\mathbf{X}}+\mu_j\mathcal{A}^*\mbf{e}$ gives 
\begin{align}\label{ineq9}
\norma{\mathbf{X}^{j+1}-\mathbf{X}}_F^2 \leq & \, 2 \interval{\mathbf{Y}^{j}-\mathbf{X}, \mathbf{X}^{j+1}-\mathbf{X}}+ \zag{2\varepsilon + \varepsilon^2} \norma{\mathbf{Y}^j-\mathbf{X}}_F^2 \nonumber \\
 = &\, 2 \interval{\mathbf{X}^j-\mathbf{X}, \mathbf{X}^{j+1}-\mathbf{X}}-2\mu_j\interval{\mathcal{A}^*\mathcal{A}\zag{\mathbf{X}^j-\mathbf{X}}, \mathbf{X}^{j+1}-\mathbf{X}}\nonumber \\
&+ 2\mu_j\interval{\mathcal{A}^*\mbf{e},\mbf{X}^{j+1}-\mbf{X}}+\zag{2\varepsilon + \varepsilon^2} \norma{\mathbf{Y}^j-\mathbf{X}}_F^2 \nonumber \\
= & \, 2 \interval{\mathbf{X}^j-\mathbf{X}, \mathbf{X}^{j+1}-\mathbf{X}}-2\mu_j\interval{\mathcal{A}\zag{\mathbf{X}^j-\mathbf{X}}, \mathcal{A}\zag{\mathbf{X}^{j+1}-\mathbf{X}}} \nonumber \\
&+ 2\mu_j\interval{\mbf{e},\mathcal{A}\zag{\mbf{X}^{j+1}-\mbf{X}}}+\zag{2\varepsilon + \varepsilon^2} \norma{\mathbf{Y}^j-\mathbf{X}}_F^2 \nonumber \\
 \leq & \, 2 \interval{\mathbf{X}^j-\mathbf{X}, \mathbf{X}^{j+1}-\mathbf{X}}-2\mu_j\interval{\mathcal{A}\zag{\mathbf{X}^j-\mathbf{X}}, \mathcal{A}\zag{\mathbf{X}^{j+1}-\mathbf{X}}} \nonumber \\
&+ 2\mu_j\sqrt{1+\delta_{3\mbf{r}}}\norma{\mbf{X}^{j+1}-\mbf{X}}_F\norma{\mbf{e}}_2 +\zag{2\varepsilon + \varepsilon^2} \norma{\mathbf{Y}^j-\mathbf{X}}_F^2,
\end{align}
where the last inequality is valid since $\rank\zag{\mbf{X}^{j+1}-\mbf{X}}\leq 2\mbf{r}\leq 3\mbf{r}$ so that $$\interval{\mbf{e},\mathcal{A}\zag{\mbf{X}^{j+1}-\mbf{X}}} \leq \norma{\mathcal{A}\zag{\mbf{X}^{j+1}-\mbf{X}}}_2\norma{\mbf{e}}_2 \leq \sqrt{1+\delta_{3\mbf{r}}}\norma{\mbf{X}^{j+1}-\mbf{X}}_F\norma{\mbf{e}}_2.$$

Now let $U^j$ be the subspace of $\Rd$ spanned by the tensors $\mbf{X}$, $\mbf{X}^j$, and $\mbf{X}^{j+1}$ and denote by $\mathcal{Q}^j: \Rd \rightarrow U^j$ 
the orthogonal projection onto $U^j$. Then $\mathcal{Q}^j\zag{\mbf{X}}=\mbf{X}$, $\mathcal{Q}^j\zag{\mbf{X}^j}=\mbf{X}^j$, and  $\mathcal{Q}^j\zag{\mbf{X}^{j+1}}=\mbf{X}^{j+1}$. Clearly, the rank of $\mathcal{Q}^j\zag{\mbf{Y}}$ is at most $3\mbf{r}$ for all $\mbf{Y} \in \Rd$. Further, we define the operator $\mathcal{A}_{\mbf{Q}}^j : \Rd \rightarrow \Rd$ by $\mathcal{A}_{\mbf{Q}}^j\zag{\mbf{Z}}=\mathcal{A}\zag{\mathcal{Q}^j\zag{\mbf{Z}}}$ for $\mbf{Z} \in \Rd$.

With these notions the estimate
\eqref{ineq9} is continued as
\begin{align}\label{Xj+1-Xr}
\norma{\mathbf{X}^{j+1}-\mathbf{X}}_F^2
 \leq &\, 2 \interval{\mathbf{X}^j-\mathbf{X}, \mathbf{X}^{j+1}-\mathbf{X}}-2\mu_j\interval{\mathcal{A}_{\mathbf{Q}}^j\zag{\mathbf{X}^{j}-\mathbf{X}}, \mathcal{A}_{\mathbf{Q}}^j\zag{\mathbf{X}^{j+1}-\mathbf{X}}} \nonumber \\
&+ 2\mu_j\sqrt{1+\delta_{3\mbf{r}}}\norma{\mbf{X}^{j+1}-\mbf{X}}_F\norma{\mbf{e}}_2 + \zag{2\varepsilon + \varepsilon^2}\norma{\mathbf{Y}^j-\mathbf{X}}_F^2 \nonumber \\
= & \, 2 \interval{\mathbf{X}^j-\mathbf{X}, \zag{\mathbf{X}^{j+1}-\mathbf{X}}-\mu_j\mathcal{A}_{\mathbf{Q}}^{j*}\mathcal{A}_{\mathbf{Q}}^j\zag{\mathbf{X}^{j+1}-\mathbf{X}}} \nonumber \\
& + 2\mu_j\sqrt{1+\delta_{3\mbf{r}}}\norma{\mbf{X}^{j+1}-\mbf{X}}_F\norma{\mbf{e}}_2+ \zag{2\varepsilon + \varepsilon^2}\norma{\mathbf{Y}^j-\mathbf{X}}_F^2 \nonumber \\
= & \, 2 \interval{\mathbf{X}^j-\mathbf{X}, \zag{\mathbf{I}-\mu_j\mathcal{A}_{\mathbf{Q}}^{j*}\mathcal{A}_{\mathbf{Q}}^j}\zag{\mathbf{X}^{j+1}-\mathbf{X}}}\nonumber \\
& + 2\mu_j\sqrt{1+\delta_{3\mbf{r}}}\norma{\mbf{X}^{j+1}-\mbf{X}}_F\norma{\mbf{e}}_2 + \zag{2\varepsilon + \varepsilon^2}\norma{\mathbf{Y}^j-\mathbf{X}}_F^2 \nonumber \\
 \leq &\, 2 \norma{\mathbf{I}-\mu_j\mathcal{A}_{\mathbf{Q}}^{j*}\mathcal{A}_{\mathbf{Q}}^j}_{2\rightarrow 2} \norma{\mathbf{X}^j-\mathbf{X}}_F \norma{\mathbf{X}^{j+1}-\mathbf{X}}_F  \nonumber \\
& + 2\mu_j\sqrt{1+\delta_{3\mbf{r}}}\norma{\mbf{X}^{j+1}-\mbf{X}}_F\norma{\mbf{e}}_2 + \zag{2\varepsilon + \varepsilon^2}\norma{\mathbf{Y}^j-\mathbf{X}}_F^2. 
\end{align}
The last term can be bounded by
\begin{align}
\norma{\mathbf{Y}^j-\mathbf{X}}_F  &= \norma{\mathbf{X}^j-\mu_j\mathcal{A}^*\mathcal{A}\zag{\mathbf{X}^j-\mathbf{X}}+\mu_j\mathcal{A}^*\mbf{e}-\mbf{X}}_F \nonumber \\
&= \norma{\zag{\mathbf{X}^j-\mathbf{X}}-\mu_j\mathcal{A}^*\mathcal{A}\zag{\mathbf{X}^j- \mathbf{X}}+\mu_j\mathcal{A}^*\mbf{e}}_F \nonumber \\
& = \norma{\zag{\mathbf{I}-\mu_j\mathcal{A}^*\mathcal{A}}\zag{\mathbf{X}^j-\mathbf{X}}+\mu_j\mathcal{A}^*\mbf{e}}_F = \norma{\zag{\mathbf{I}-\mu_j\mathcal{A}^*\mathcal{A}_{\mathbf{Q}}^j}\zag{\mathbf{X}^j-\mathbf{X}}+\mu_j\mathcal{A}^*\mbf{e}}_F  \nonumber \\
& \leq \norma{\mathbf{I}-\mu_j\mathcal{A}^*\mathcal{A}_{\mathbf{Q}}^j}_{2\rightarrow 2} \norma{\mathbf{X}^j-\mathbf{X}}_F + \mu_j\norma{\mathcal{A}^*\mbf{e}}_F  \nonumber \\
&\leq  \zag{1+\mu_j\norma{\mathcal{A}}_{2\rightarrow 2}\norma{\mathcal{A}_{\mbf{Q}}^j}_{2\rightarrow 2}} \norma{\mathbf{X}^j-\mathbf{X}}_F +\mu_j\norma{\mathcal{A}}_{2\rightarrow 2}\norma{\mbf{e}}_2 \nonumber \\
& \leq \zag{1+\mu_j\sqrt{1+\delta_{3\mbf{r}}}\norma{\mathcal{A}}_{2\rightarrow 2}} \norma{\mathbf{X}^j-\mathbf{X}}_F+ \mu_j\norma{\mathcal{A}}_{2\rightarrow 2}\norma{\mbf{e}}_2 . \label{est:aux1}
\end{align}
Using that $\zag{u+v}^2\leq 2\zag{u^2+v^2}$ for all $u,v \in \R$, we obtain the estimate
\begin{equation}\label{Y-Xr}
\norma{\mathbf{Y}^j-\mathbf{X}}_F^2 \leq 2\zag{1+\mu_j\sqrt{1+\delta_{3\mbf{r}}}\norma{\mathcal{A}}_{2 \rightarrow 2}}^2 \norma{\mathbf{X}^j-\mathbf{X}}_F^2+ 2\mu_j^2\norma{\mathcal{A}}_{2 \rightarrow 2}^2\norma{\mbf{e}}_2^2.
\end{equation}
Combining inequalities \eqref{Xj+1-Xr} and \eqref{Y-Xr} yields
\begin{align}
& \norma{\mathbf{X}^{j+1}-\mathbf{X}}_F^2  \leq  2 \norma{\mathbf{I}-\mu_j\mathcal{A}_{\mathbf{Q}}^{j*}\mathcal{A}_{\mathbf{Q}}^j}_{2\rightarrow 2} \norma{\mathbf{X}^j-\mathbf{X}}_F \norma{\mathbf{X}^{j+1}-\mathbf{X}}_F \nonumber\\
& + 2\mu_j\sqrt{1+\delta_{3\mbf{r}}}\norma{\mbf{X}^{j+1}-\mbf{X}}_F\norma{\mbf{e}}_2 
 + 2\zag{2\varepsilon+\varepsilon^2} \zag{1+\mu_j\sqrt{1+\delta_{3\mbf{r}}}\norma{\mathcal{A}}_{2\rightarrow 2}}^2 \norma{\mathbf{X}^j-\mathbf{X}}_F^2 \nonumber \\ &+ 2\zag{2\varepsilon+\varepsilon^2}\mu_j^2\norma{\mathcal{A}}_{2\rightarrow 2}^2\norma{\mbf{e}}_2^2.
\end{align}
This implies that there exist $\alpha, \beta, \gamma \in \uglate{0,1}$ such that $\alpha+\beta+\gamma \leq 1$ and
\begin{align}
\zag{1-\alpha-\beta-\gamma} \norma{\mathbf{X}^{j+1}-\mathbf{X}}_F^2 & \leq 2\norma{\mathbf{I}-\mu_j\mathcal{A}_{\mathbf{Q}}^{j*}\mathcal{A}_{\mathbf{Q}}^j}_{2 \rightarrow 2} \norma{\mathbf{X}^j-\mathbf{X}}_F \norma{\mathbf{X}^{j+1}-\mathbf{X}}_F   \label{alfajed} \\
\alpha \norma{\mathbf{X}^{j+1}-\mathbf{X}}_F^2 & \leq 2\mu_j\sqrt{1+\delta_{3\mbf{r}}}\norma{\mathbf{X}^{j+1}-\mathbf{X}}_F\norma{\mbf{e}}_2 \label{alpha}\\
\beta \norma{\mathbf{X}^{j+1}-\mathbf{X}}_F^2 &\leq 2\zag{2\varepsilon+\varepsilon^2} \zag{1+\mu_j\sqrt{1+\delta_{3\mbf{r}}}\norma{\mathcal{A}}_{2 \rightarrow 2}}^2 \norma{\mathbf{X}^j-\mathbf{X}}_F^2, \label{beta} \\
\gamma \norma{\mathbf{X}^{j+1}-\mathbf{X}}_F^2 & \leq 2\zag{2\varepsilon+\varepsilon^2}\mu_j^2\norma{\mathcal{A}}_{2 \rightarrow 2}^2\norma{\mbf{e}}_2^2.\label{gamma}
\end{align}
Canceling one power of $\norma{\mathbf{X}^{j+1}-\mathbf{X}}_F$ in inequalities \eqref{alfajed} and \eqref{alpha},
taking the square root of the inequalities \eqref{beta} and \eqref{gamma}, and summation of all resulting inequalities 
yields
\begin{align}
&\norma{\mathbf{X}^{j+1}-\mathbf{X}}_F  \nonumber\\& \quad \leq \f\zag{\beta,\gamma} \zag{2 \norma{\mathbf{I}-\mu_j\mathcal{A}_{\mathbf{Q}}^{j*}\mathcal{A}_{\mathbf{Q}}^j}_{2 \rightarrow 2}+\sqrt{4\varepsilon+2\varepsilon^2}\zag{1+\mu_j\sqrt{1+\delta_{3\mbf{r}}}\norma{\mathcal{A}}_{2 \rightarrow 2}}} \norma{\mathbf{X}^j-\mathbf{X}}_F \nonumber \\
& \quad + \f\zag{\beta,\gamma} \zag{ 2\mu_j\sqrt{1+\delta_{3\mbf{r}}}+ \sqrt{4\varepsilon+2\varepsilon^2} \mu_j\norma{\mathcal{A}}_{2 \rightarrow 2}} \norma{\mbf{e}}_2 \label{ineq:Xj+1-Xr}
\end{align}
with $\f\zag{\beta,\gamma}= (1-\beta+\sqrt{\beta}-\gamma+\sqrt{\gamma})^{-1}$.
Notice that $f$ is positive and strictly less than $1$ on $\uglate{0,1} \times \uglate{0,1}$ and will therefore be omitted in the following.

Let us now specialize to CTIHT where $\mu_j = 1$. Since $\mathcal{A}_{\mathbf{Q}}^j$ is the restriction of 
$\mathcal{A}$ to the space $U^j$ which contains
only tensors of rank at most $3 \mathbf{r}$, 
we have (with $\mathbf{I}$ denoting the identity operator on $U^j$)
\begin{align*}
\| \mathbf{I}-\mu_j\mathcal{A}_{\mathbf{Q}}^{j*}\mathcal{A}_{\mathbf{Q}}^j\|_{2 \rightarrow 2} & = 
\| \mathbf{I}-\ \mathcal{A}_{\mathbf{Q}}^{j*}\mathcal{A}_{\mathbf{Q}}^j\|_{2 \rightarrow 2} = \sup_{\mathbf{X} \in U_j : \|\mathbf{X}\|_F = 1} | \|\mbf{X}\|_F^2 - \|\mathcal{A}(\mbf{X}) \|_2^2 | \\
& \leq  \sup_{\mbf{X} : \rank(\mathbf{X}) \leq 3 \mbf{r}, \|\mbf{X}\|_F = 1} | \|\mbf{X}\|_F^2 - \|\mathcal{A}(\mbf{X}) \|_2^2 | = \delta_{3 \mathbf{r}}.
\end{align*}
Plugging $\mu_j=1$ and above estimate into \eqref{ineq:Xj+1-Xr} yields
\begin{align}
\norma{\mathbf{X}^{j+1}-\mathbf{X}}_F  \leq & \,  \zag{2 \norma{\mathbf{I}-\mathcal{A}_{\mathbf{Q}}^{j*}\mathcal{A}_{\mathbf{Q}}^j}_{2 \rightarrow 2}+\sqrt{4\varepsilon+2\varepsilon^2}\zag{1+\sqrt{1+\delta_{3\mbf{r}}}\norma{\mathcal{A}}_{2 \rightarrow 2}}} \norma{\mathbf{X}^j-\mathbf{X}}_F \nonumber \\
& + \zag{ 2\sqrt{1+\delta_{3\mbf{r}}}+ \sqrt{4\varepsilon+2\varepsilon^2} \norma{\mathcal{A}}_{2 \rightarrow 2}} \norma{\mbf{e}}_2 \nonumber \\ 
\leq & \, \zag{2 
\delta_{3\mbf{r}}+\sqrt{4\varepsilon+2\varepsilon^2}\zag{1+\sqrt{1+\delta_{3\mbf{r}}}\norma{\mathcal{A}}_{2 \rightarrow 2}}} \norma{\mathbf{X}^j-\mathbf{X}}_F \nonumber \\
& +\zag{ 2\sqrt{1+\delta_{3\mbf{r}}}+ \sqrt{4\varepsilon+2\varepsilon^2} \norma{\mathcal{A}}_{2 \rightarrow 2}} \norma{\mbf{e}}_2. \nonumber
\end{align}
Setting $\kappa := 1 + \sqrt{1+\delta_{3\mathbf{r}}} \|\mathcal{A}\|_{2 \to 2} > 1$, 
the bound $\delta_{3 \mathbf{r}} \leq a/4$ with $a < 1$ and the definition of $\varepsilon = \varepsilon(a)$ in \eqref{def:eps} yield
\[
2 \delta_{3\mbf{r}}+\sqrt{4\varepsilon+2\varepsilon^2}\zag{1+\sqrt{1+\delta_{3\mbf{r}}}\norma{\mathcal{A}}_{2 \rightarrow 2}}
\leq \frac{a}{2} + \sqrt{\frac{4 a^2}{17 \kappa^2} + \frac{2 a^4}{17^2 \kappa^4}} \kappa \leq a \left(\frac{1}{2} + \sqrt{\frac{4}{17} + \frac{2}{17^2}}\right) < a.
\]
Thus, with the definition \eqref{def:ba} of $b=b(a)$ for CTIHT we obtain
\begin{equation*}
\norma{\mbf{X}^{j+1}-\mbf{X}}_F \leq a \norma{\mbf{X}^{j}-\mbf{X}}_F + b \norma{\mbf{e}}_2.
\end{equation*}
Iterating this inequality leads to \eqref{iterating}, which implies a recovery accuracy of $\norma{\mbf{X}^{j+1}-\mbf{X}}_F \leq \frac{1-a+b}{1-a}\norma{\mbf{e}}_2$ 
if $a^{j}\norma{\mbf{X}^0-\mbf{X}}_F \leq \norma{\mbf{e}}_2$. Hence, if $\mbf{e}\neq \mbf{0}$ then after $j^*:=\lceil{\log_{1/a}\zag{\norma{\mbf{X}^0-\mbf{X}}_F/\norma{\mbf{e}}_2}}\rceil$ iterations, \eqref{approxi} holds. 

Let us now consider the variant NTIHT. Since the image of the operator $\mathcal{M}^{j}$ is contained in the set of rank-$\mathbf{r}$ tensors, 
the tensor restricted isometry property yields 
\begin{equation}\label{boundmuj}
\frac{1}{1+\delta_{\mathbf{r}}} \leq \mu_j=\frac{\norma{\mathcal{M}^j\zag{\mathcal{A}^*\zag{\mathbf{y}-\mathcal{A}\zag{\mathbf{X}^j}}}}_F^2}{\norma{\mathcal{A}\zag{\mathcal{M}^j\zag{\mathcal{A}^*\zag{\mathbf{y}-\mathcal{A}\zag{\mathbf{X}^j}}}}}_2^2} \leq \frac{1}{1-\delta_{\mathbf{r}}}.
\end{equation}
Since $\mathcal{Q}^j$ maps onto rank-3$\mbf{r}$ tensors, the TRIP implies that every eigenvalue of $\mathcal{A}_{\mathbf{Q}}^{j*}\mathcal{A}_{\mathbf{Q}}^j$
is contained in the interval $[1-\delta_{3\mathbf{r}}, 1+ \delta_{3\mathbf{r}}]$. Therefore, every eigenvalue of 
$\mathbf{I} - \mu_j \mathcal{A}_{\mathbf{Q}}^{j*}\mathcal{A}_{\mathbf{Q}}^j$ is contained in 
$[1- \frac{1+\delta_{3\mathbf{r}}}{1-\delta_{\mathbf{r}}}, 1- \frac{1-\delta_{3\mathbf{r}}}{1+\delta_{\mathbf{r}}}]$. The magnitude of the lower end point 
is greater than that of the upper end point, giving the operator norm bound
\begin{equation}
\norma{\mathbf{I}-\mu_j\mathcal{A}_{\mathbf{Q}}^{j*}\mathcal{A}_{\mathbf{Q}}^j}_{2 \rightarrow 2} \leq \frac{1+\delta_{3\mathbf{r}}}{1-\delta_{\mathbf{r}}}-1 \leq
\frac{1+\delta_{3\mathbf{r}}}{1-\delta_{3\mathbf{r}}} - 1.
\end{equation}
Hence, plugging the upper bound on $\mu_j$ in \eqref{boundmuj} and the above inequality into  \eqref{ineq:Xj+1-Xr} leads to
\begin{align}
\norma{\mathbf{X}^{j+1}-\mathbf{X}}_F  \leq & \,  \zag{2 \norma{\mathbf{I}-\mu_j\mathcal{A}_{\mathbf{Q}}^{j*}\mathcal{A}_{\mathbf{Q}}^j}_{2 \rightarrow 2}+\sqrt{4\varepsilon+2\varepsilon^2}\zag{1+\mu_j\sqrt{1+\delta_{3\mbf{r}}}\norma{\mathcal{A}}_{2 \rightarrow 2}}} \norma{\mathbf{X}^j-\mathbf{X}}_F \nonumber \\
& + \zag{ 2\mu_j\sqrt{1+\delta_{3\mbf{r}}}+ \sqrt{4\varepsilon+2\varepsilon^2} \mu_j\norma{\mathcal{A}}_{2 \rightarrow 2}} \norma{\mbf{e}}_2 \nonumber \\ 
\leq & \, \zag{2 
\zag{\frac{1+\delta_{3\mbf{r}}}{1-\delta_{3\mbf{r}}}-1}+\sqrt{4\varepsilon+2\varepsilon^2}\zag{1+\frac{\sqrt{1+\delta_{3\mbf{r}}}}{1-\delta_{3\mbf{r}}}\norma{\mathcal{A}}_{2 \rightarrow 2}}} \norma{\mathbf{X}^j-\mathbf{X}}_F \nonumber \\
& +\zag{ 2\frac{\sqrt{1+\delta_{3\mbf{r}}}}{1-\delta_{3\mbf{r}}}+ \frac{\sqrt{4\varepsilon+2\varepsilon^2}}{1-\delta_{3\mbf{r}}} \norma{\mathcal{A}}_{2 \rightarrow 2}} \norma{\mbf{e}}_2. \nonumber
\end{align}
Setting $\nu:= 1 + \frac{\sqrt{1+\delta_{3 \mathbf{r}}}}{1-\delta_{3\mathbf{r}}} \|\mathcal{A}\|_{2 \to 2} \geq 1$, using $\delta_{3 \mathbf{r}} \leq a/(a+8)$
and the definition \eqref{def:eps} of $\varepsilon = \varepsilon(a) = a^2/(17 \nu^2)$, 
gives
\[
2 \zag{\frac{1+\delta_{3\mbf{r}}}{1-\delta_{3\mbf{r}}}-1}+\sqrt{4\varepsilon+2\varepsilon^2}\zag{1+\frac{\sqrt{1+\delta_{3\mbf{r}}}}{1-\delta_{3\mbf{r}}}\norma{\mathcal{A}}_{2 \rightarrow 2}} \leq \frac{a}{2} + \nu \sqrt{\frac{4 a^2}{17\nu^2} + \frac{2 a^2}{17^2 \nu^4}}  < a
\]
so that with the definition of $b$ in \eqref{def:ba} we arrive at
\begin{equation*}
\norma{\mbf{X}^{j+1}-\mbf{X}}_F \leq a \norma{\mbf{X}^{j}-\mbf{X}}_F + b \norma{\mbf{e}}_2.
\end{equation*}
The proof is concluded in the same way as for CTIHT.
\end{proof}

\begin{remark}
For the noiseless scenario where $\norma{\mbf{e}}_2=0$, one may work with a slightly improved definition of $\varepsilon(a)$.
In fact, \eqref{est:aux1} implies then
\begin{equation*}
\norma{\mbf{Y}^j-\mbf{X}}_F \leq \zag{1+\mu_j\sqrt{1+\delta_{3\mbf{r}}}\norma{\mathcal{A}}_{2 \rightarrow 2}} \norma{\mbf{X}^j-\mbf{X}}_F.
\end{equation*}
Following the proof in the same way as above, one finds that the constant $17$ in the definition \eqref{def:eps} of $\varepsilon(a)$ can be improved to $9$.
\end{remark}

\section{Tensor RIP} \label{Section:TRIP}
\label{Sec:TRIP}

Now that we have shown
a (partial) convergence result for the TIHT algorithm based on the TRIP, the question arises which measurement maps
satisfy the TRIP under suitable conditions on the number of measurements in terms of the rank $\mathbf{r}$, the order $d$ and the dimensions $n_1,\hdots,n_d$.
As common in compressive sensing and low rank recovery, we study this question for random measurement maps.
We concentrate first on subgaussian measurement maps and consider maps based on partial random Fourier transform afterwards.

A random variable $X$ is called $L$-subgaussian if there exists a constant $L>0$ such that 
\[
\ocek{\exp\zag{tX}}\leq\exp\zag{L^2t^2/2}
\]
 holds for all $t\in \R$.
We call $\mathcal{A}:\R^{n_1 \times n_2 \times \cdots \times n_d} \rightarrow \R^m$ an $L$-subgaussian measurement ensemble 
if all elements of $\mathcal{A}$, interpreted as a tensor in $\R^{n_1 \times n_2 \times \cdots \times n_d \times m}$, are independent mean-zero, variance one, $L$-subgaussian variables. Gaussian and Bernoulli random measurement ensembles where the entries are
standard normal distributed random variables and Rademacher $\pm 1$ variables (i.e., taking the values $+1$ and $-1$ with equal probability), 
respectively, are special cases of $1$-subgaussian measurement ensembles.

\begin{theorem}\label{glavni}
Fix one of the tensor formats HOSVD, TT, HT (with decomposition tree $T_I$).
For $\delta, \varepsilon \in \zag{0,1}$, a random draw of an $L$-subgaussian measurement ensemble $\mathcal{A} : \R^{n_1 \times n_2 \times \cdots \times n_d} \rightarrow \R^m$ satisfies $\delta_{\mathbf{r}}\leq \delta$ with probability at least $1-\varepsilon$ provided that
\begin{align*}
\text{HOSVD:}& \quad m\geq C_1\delta^{-2}\max \skup{\zag{r^d+dnr} \log \zag{d}, \log\zag{\varepsilon^{-1}}}, \\
\text{TT \& HT:}& \quad m \geq C_2\delta^{-2}\max \skup{\zag{(d-1)r^3+dnr}\log\zag{dr}, \log\zag{\varepsilon^{-1}}}, 
\end{align*}
where  $n=\max\skup{n_i: i \in \uglate{d}}$, $r=\max\skup{r_t: t \in T_I}$.
The constants $C_1,C_2,C_3>0$ only depend on the subgaussian parameter $L$.
\end{theorem}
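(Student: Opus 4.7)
The plan is to follow the standard two-step paradigm used for RIP proofs in compressive sensing and low rank matrix recovery: first establish concentration of $\|\mathcal{A}(\mathbf{X})\|_2^2$ around $\|\mathbf{X}\|_F^2$ for a single fixed rank-$\mathbf{r}$ tensor, then extend this to a uniform bound over the (nonlinear) set
$$ \mathbf{S}_{\mathbf{r}} = \{\mathbf{X} \in \Rd : \rank(\mathbf{X}) \leq \mathbf{r},\ \|\mathbf{X}\|_F = 1\} $$
via an $\eta$-net together with a union bound. The bulk of the work, and the main obstacle, lies in bounding the covering number $\mathcal{N}(\mathbf{S}_{\mathbf{r}},\|\cdot\|_F,\eta)$ for each of the three formats.

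For the pointwise concentration, I would first observe that for a fixed unit-norm $\mathbf{X}$, $\|\mathcal{A}(\mathbf{X})\|_2^2 = \sum_{i=1}^m \langle \mathbf{A}_i,\mathbf{X}\rangle^2$ is a sum of $m$ i.i.d.\ subexponential random variables with mean $1$ and subexponential parameter controlled by $L^2$. A standard Bernstein-type inequality then yields
$$ \mathbb{P}\bigl(\bigl|\|\mathcal{A}(\mathbf{X})\|_2^2 - \|\mathbf{X}\|_F^2\bigr| > \tfrac{\delta}{2}\|\mathbf{X}\|_F^2\bigr) \leq 2\exp(-c_L\delta^2 m). $$
This is essentially the same scalar estimate that drives the matrix-recovery RIP proofs (for instance, the Candès--Plan/Recht/Fazel line of argument, cf.\ \cite{capl11,fapare10}), and can be borrowed off-the-shelf.

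The heart of the proof is the covering bound, which I would obtain separately for the two parameter counts. For the HOSVD case, every $\mathbf{X} \in \mathbf{S}_{\mathbf{r}}$ can be written as $\mathbf{S}\times_1\mathbf{U}_1\times\cdots\times_d \mathbf{U}_d$ with $\|\mathbf{S}\|_F = 1$ and each $\mathbf{U}_k$ on the Stiefel manifold $V_{r_k,n_k}$. The Lipschitz estimate
$$ \|\mathbf{S}\times_1\mathbf{U}_1\times\cdots\times_d\mathbf{U}_d - \mathbf{S}'\times_1\mathbf{U}'_1\times\cdots\times_d\mathbf{U}'_d\|_F \leq \|\mathbf{S}-\mathbf{S}'\|_F + \sum_{k=1}^d \|\mathbf{U}_k-\mathbf{U}'_k\|_{2\to 2}, $$
valid for orthogonal factors, reduces the covering to a product of nets: one for the unit Frobenius ball in $\R^{r_1\times\cdots\times r_d}$ (cardinality $\leq (3/\eta)^{r^d}$ by the standard volumetric estimate) and one for each Stiefel manifold in operator norm (cardinality $\leq (C/\eta)^{n_k r_k}$). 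Choosing $\eta \sim \delta/(c d)$ and multiplying gives $\log|\mathcal{N}_\eta|\lesssim (r^d+dnr)\log d$ (the $\log(1/\delta)$ piece being absorbed into the prefactor $\delta^{-2}$). For the TT and HT formats I would proceed analogously using the parametrisation by transfer tensors $\mathbf{B}_t$ (at most $r^3$ entries each, one per interior node of $T_I$) and $\alpha$-frames $\mathbf{U}_\alpha$ on Stiefel manifolds, keeping the non-root transfer tensors normalised. The delicate point is a Lipschitz estimate for the multilinear map $\tau$ of Section \ref{Section:Intrototensors}: a perturbation in one transfer tensor propagates through the tree and picks up a factor proportional to $r$ from the remaining contractions. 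This is exactly why one obtains $\log(dr)$ instead of $\log d$, and accounting for all $d-1$ transfer tensors plus $d$ frames yields $\log|\mathcal{N}_\eta|\lesssim ((d-1)r^3+dnr)\log(dr)$.

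With the net in hand, a union bound over $\mathcal{N}_\eta$ applied to the Bernstein estimate shows that with probability at least $1-\varepsilon$,
$$ \bigl|\|\mathcal{A}(\mathbf{X})\|_2^2 - \|\mathbf{X}\|_F^2\bigr| \leq \tfrac{\delta}{2}\|\mathbf{X}\|_F^2\quad\text{for all }\mathbf{X}\in\mathcal{N}_\eta, $$
provided $m \gtrsim \delta^{-2}(\log|\mathcal{N}_\eta|+\log(\varepsilon^{-1}))$, which matches the two bounds in the theorem. The final ingredient is a standard lifting from the net to all of $\mathbf{S}_{\mathbf{r}}$: writing an arbitrary rank-$\mathbf{r}$ tensor as the closest net element plus a residual of Frobenius norm at most $\eta$, and exploiting that the residual itself lies in a bounded-rank set (at most $2\mathbf{r}$ in each matricisation), one iterates/telescopes in the familiar way (e.g.\ as in the proof of \cite[Thm.~9.2]{fora13}) to control $\delta_{\mathbf{r}}$ by the discrete deviation, with only a mild loss in constants. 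The main obstacle to watch out for is calibrating the Lipschitz constants in the HT/TT parameterisation so that the $r$-dependence in the covering does not blow up beyond the $\log(dr)$ factor, and verifying that the rank-doubling in the lifting step does not force $\delta_{3\mathbf{r}}$ rather than $\delta_{\mathbf{r}}$ in the final bound, which would only affect the constants.
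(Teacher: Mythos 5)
Your covering-number step is essentially the paper's (Lemmas~\ref{covering} and \ref{coveringHT}): parametrize by a core/transfer tensors plus frames, cover each factor separately, and combine via a multilinear Lipschitz estimate. The one caveat there is that for TT/HT the paper needs a \emph{right-orthogonalized} representation and mixed norms ($\|\cdot\|_{1,2}$ on frames, $\|\cdot\|_{F,1}$ on transfer tensors) to keep the propagation factor at $\sqrt{r}$ rather than $r$; your sketch gestures at this but the calibration is exactly where the work is.

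The genuine gap is in the probabilistic half: the ``standard lifting from the net to all of $\boldsymbol{\mathcal{S}}_{\mathbf{r}}$'' does not exist for these tensor formats. The telescoping you cite (as in the sparse-vector RIP proof) relies on the set being a finite union of subspaces, so that $\mathbf{X}-\overline{\mathbf{X}}$ stays in the same subspace; the matrix-case version (Cand\`es--Plan) instead relies on splitting a rank-$2r$ matrix into two mutually orthogonal rank-$r$ pieces whose Frobenius norms add in quadrature, which closes the self-bounding recursion $\kappa_r \leq \sqrt{1+\delta'} + \sqrt{2}\,\eta\,\kappa_r$. Neither mechanism is available here: the set of tensors with $\rank(\mathbf{X}^{\alpha})\leq r_\alpha$ for all $\alpha$ is a continuum of subspaces, and a tensor of rank $2\mathbf{r}$ cannot in general be written as a sum of boundedly many rank-$\mathbf{r}$ tensors with controlled norms (splitting one matricization destroys the rank bounds on the others), so the recursion $\kappa_{\mathbf{r}} \leq \sqrt{1+\delta'}+\eta\,\kappa_{2\mathbf{r}}$ never terminates. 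The only unconditional fix at a single net scale is to bound $\|\mathcal{A}(\mathbf{X}-\overline{\mathbf{X}})\|_2$ by the unrestricted operator norm $\|\mathcal{A}\|_{2\to 2}\sim\sqrt{n_1\cdots n_d/m}$, which forces $\eta \sim \delta m/(n_1\cdots n_d)$ and injects $\log n$ factors absent from the theorem. The paper avoids all of this by writing $\delta_{\mathbf{r}}=\sup_{\mathbf{X}\in\boldsymbol{\mathcal{T}}}\bigl|\|\mathbf{V}_{\mathbf{X}}\boldsymbol{\xi}\|_2^2-\E\|\mathbf{V}_{\mathbf{X}}\boldsymbol{\xi}\|_2^2\bigr|$ as the supremum of a second-order chaos process and invoking the Krahmer--Mendelson--Rauhut/Dirksen bound (Theorem~\ref{chaos}), whose complexity parameter is Talagrand's $\gamma_2$ functional controlled by the multiscale Dudley integral $\int_0^{1}\sqrt{\log \mathcal{N}(\boldsymbol{\mathcal{S}}_{\mathbf{r}},\|\cdot\|_F,u)}\,du$ --- i.e., it needs covering numbers of the set itself at all scales but no closure of the set under differences. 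To complete your argument you would either have to import that chaining machinery (at which point the Bernstein-plus-union-bound scaffolding is superfluous) or prove a splitting lemma for HOSVD/TT/HT ranks that is not currently known.
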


One may generalize the above theorem to situations where it is no longer required that all entries of the tensor $\mathcal{A}$ are independent, but only
that the sensing tensors $\mbf{A}_i \in \Rd$, $i=1,\hdots,m$, are independent. 
We refer to \cite{DBLP:journals/corr/Dirksen14} for details, in particular to Corollary~5.4 and Example~5.8.
Furthermore, we note that the term $dnr$ in all bounds for $m$ may be refined to $\sum_{i=1}^d n_i r_i$.

The proof of Theorem~\ref{glavni} uses $\varepsilon$-nets and covering numbers, see e.g.\ \cite{ve12} for more background on this topic.

\begin{definition} 
A set $\mathcal{N}_{\varepsilon}^{\boldsymbol{\mathcal{X}}} \subset \boldsymbol{\mathcal{X}}$, where $\boldsymbol{\mathcal{X}}$ is a subset of a normed space, is called an $\varepsilon$-net of $\boldsymbol{\mathcal{X}}$ with respect to the norm $\norma{\cdot}$ if for each $v \in \boldsymbol{\mathcal{X}}$, there exists $v_0 \in \mathcal{N}_{\varepsilon}^{\boldsymbol{\mathcal{X}}}$ with $\norma{v_0-v}\leq \varepsilon$. The minimal cardinality of an $\varepsilon$-net of $\boldsymbol{\mathcal{X}}$ with respect to the norm $\norma{\cdot}$ is denoted by $\mathcal{N}\zag{\boldsymbol{\mathcal{X}},\norma{\cdot},\varepsilon}$ and is called the covering number of $\boldsymbol{\mathcal{X}}$ (at scale $\varepsilon$).
\end{definition}
The following well-known result will be used frequently in the following. 
\begin{lemma}[\cite{ve12}]\label{unitball}
Let $\boldsymbol{\mathcal{X}}$ be a subset of a vector space of real dimension $k$ with norm $\|\cdot\|$, and let $0 < \varepsilon<1$. 
Then there exists an $\varepsilon$-net $\mathcal{N}_{\varepsilon}^{\boldsymbol{\mathcal{X}}}$ 
satisfying $\mathcal{N}_{\varepsilon}^{\boldsymbol{\mathcal{X}}} \subset \boldsymbol{\mathcal{X}}$ and
$\aps{\mathcal{N}_{\varepsilon}^{\boldsymbol{\mathcal{X}}}} \leq \frac{\Vol\zag{\boldsymbol{\mathcal{X}}+\frac{\varepsilon}{2}\boldsymbol{\mathcal{B}}}}{\Vol\zag{\frac{\varepsilon}{2}\boldsymbol{\mathcal{B}}}}$,
where $\frac{\varepsilon}{2}\boldsymbol{\mathcal{B}}$ is an $\varepsilon/2$ ball with respect to the norm $\norma{\cdot}$ and $\boldsymbol{\mathcal{X}}+\frac{\varepsilon}{2}\boldsymbol{\mathcal{B}}=\skup{x+y: x \in \mathcal{N}_{\varepsilon}^{\boldsymbol{\mathcal{X}}}, y \in \frac{\varepsilon}{2}\boldsymbol{\mathcal{B}}}$. Specifically, if $\boldsymbol{\mathcal{X}}$ is a subset of the $\|\cdot\|$-unit ball then $\boldsymbol{\mathcal{X}}+\frac{\varepsilon}{2}\boldsymbol{\mathcal{B}}$ is contained in the $\zag{1+\frac{\varepsilon}{2}}$-ball and thus 
\begin{equation*}
\aps{\mathcal{N}_{\varepsilon}^{\boldsymbol{\mathcal{X}}}} \leq \frac{\zag{1+\varepsilon/2}^k}{\zag{\varepsilon/2}^k}= \zag{1+\frac{2}{\varepsilon}}^k< \zag{3/\varepsilon}^k.
\end{equation*}
\end{lemma}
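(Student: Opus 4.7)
The plan is to use the standard packing-then-covering trick together with a volume comparison. I would first produce the net as a maximal $\varepsilon$-separated subset of $\boldsymbol{\mathcal{X}}$, and then bound its cardinality by a disjointness/volume argument. The only genuinely subtle point is the existence of such a maximal set in a possibly non-compact $\boldsymbol{\mathcal{X}}$, which I would dispense with using Zorn's lemma (or, if one wants to avoid it, by running a greedy construction and using the forthcoming volume bound to cap the iteration at finitely many steps whenever $\Vol\zag{\boldsymbol{\mathcal{X}}+\tfrac{\varepsilon}{2}\boldsymbol{\mathcal{B}}}$ is finite).

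Concretely, first I would invoke Zorn's lemma to extract a set $\mathcal{N}_{\varepsilon}^{\boldsymbol{\mathcal{X}}} \subset \boldsymbol{\mathcal{X}}$ that is maximal with respect to the property $\norma{x-y}>\varepsilon$ for all distinct $x,y \in \mathcal{N}_{\varepsilon}^{\boldsymbol{\mathcal{X}}}$. Maximality immediately forces the covering property: if some $v \in \boldsymbol{\mathcal{X}}$ satisfied $\norma{v-x}>\varepsilon$ for every $x \in \mathcal{N}_{\varepsilon}^{\boldsymbol{\mathcal{X}}}$, then $\mathcal{N}_{\varepsilon}^{\boldsymbol{\mathcal{X}}}\cup\{v\}$ would still be $\varepsilon$-separated, contradicting the choice. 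Hence every $v \in \boldsymbol{\mathcal{X}}$ has some $v_0 \in \mathcal{N}_{\varepsilon}^{\boldsymbol{\mathcal{X}}}$ with $\norma{v-v_0}\leq \varepsilon$, so $\mathcal{N}_{\varepsilon}^{\boldsymbol{\mathcal{X}}}$ is indeed an $\varepsilon$-net.

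Next I would establish the volume bound. The strict separation of the centers plus the triangle inequality yield that the closed balls $x+\tfrac{\varepsilon}{2}\boldsymbol{\mathcal{B}}$ for $x \in \mathcal{N}_{\varepsilon}^{\boldsymbol{\mathcal{X}}}$ are pairwise disjoint (indeed, a common point would place two centers within distance $\varepsilon$). Each such ball is contained in $\bigcup_{x \in \mathcal{N}_\varepsilon^{\boldsymbol{\mathcal{X}}}}\bigl(x+\tfrac{\varepsilon}{2}\boldsymbol{\mathcal{B}}\bigr) = \mathcal{N}_\varepsilon^{\boldsymbol{\mathcal{X}}}+\tfrac{\varepsilon}{2}\boldsymbol{\mathcal{B}}$, which is the set denoted $\boldsymbol{\mathcal{X}}+\tfrac{\varepsilon}{2}\boldsymbol{\mathcal{B}}$ in the statement. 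Identifying the ambient $k$-dimensional vector space with $\R^k$ and using Lebesgue measure together with translation invariance gives
\[
\aps{\mathcal{N}_{\varepsilon}^{\boldsymbol{\mathcal{X}}}} \cdot \Vol\zag{\tfrac{\varepsilon}{2}\boldsymbol{\mathcal{B}}} \;=\; \sum_{x \in \mathcal{N}_{\varepsilon}^{\boldsymbol{\mathcal{X}}}} \Vol\bigl(x+\tfrac{\varepsilon}{2}\boldsymbol{\mathcal{B}}\bigr) \;\leq\; \Vol\zag{\boldsymbol{\mathcal{X}}+\tfrac{\varepsilon}{2}\boldsymbol{\mathcal{B}}},
\]
which rearranges to the first claimed inequality.

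Finally, for the specialization to $\boldsymbol{\mathcal{X}}\subset \boldsymbol{\mathcal{B}}$, I would observe that by the triangle inequality $\boldsymbol{\mathcal{X}}+\tfrac{\varepsilon}{2}\boldsymbol{\mathcal{B}} \subset (1+\tfrac{\varepsilon}{2})\boldsymbol{\mathcal{B}}$, and then use the homogeneity $\Vol\zag{t\boldsymbol{\mathcal{B}}}=t^k \Vol\zag{\boldsymbol{\mathcal{B}}}$ in $\R^k$ to obtain the ratio $\bigl((1+\tfrac{\varepsilon}{2})/\tfrac{\varepsilon}{2}\bigr)^k = (1+2/\varepsilon)^k$. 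The strict estimate $1+\tfrac{2}{\varepsilon}<\tfrac{3}{\varepsilon}$ is equivalent to $\varepsilon<1$, which is precisely the hypothesis. There is no real obstacle beyond this; the volume-packing argument is standard, and the only care needed is in the strict separation used to keep the balls disjoint and in guaranteeing existence of the maximal set via Zorn.
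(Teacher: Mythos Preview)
The paper does not actually prove this lemma; it is quoted from \cite{ve12} and used as a black box. Your argument is the standard packing/volume proof that one finds in Vershynin's notes, and it is correct as written, including your handling of the slightly unusual convention in the statement that $\boldsymbol{\mathcal{X}}+\tfrac{\varepsilon}{2}\boldsymbol{\mathcal{B}}$ denotes the Minkowski sum of the \emph{net} with the ball rather than of $\boldsymbol{\mathcal{X}}$ itself.
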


It is crucial for the proof of Theorem~\ref{glavni} to estimate the covering numbers of the set of unit Frobenius norm rank-$\mbf{r}$ tensors with respect
to the different tensor formats. We start with the HOSVD.
 
\begin{lemma}[Covering numbers related to HOSVD] \label{covering}
The covering numbers of
\begin{equation*}
\boldsymbol{\mathcal{S}}_{\mathbf{r}}=\skup{\mathbf{X} \in \R^{n_1 \times n_2 \times \cdots \times n_d} : \rank_{\operatorname{HOSVD}}\zag{\mathbf{X}}\leq \mathbf{r}, \norma{\mathbf{X}}_F=1}
\end{equation*}
with respect to the Frobenius norm satisfy
\begin{equation}
\mathcal{N}\zag{\boldsymbol{\mathcal{S}}_{\mathbf{r}},\norma{\cdot}_F,\varepsilon}\leq \zag{3\zag{d+1}/\varepsilon}^{r_1r_2\cdots r_d+\sum_{i=1}^{d}n_ir_i}.
\label{eqcovering}
\end{equation} 
\end{lemma}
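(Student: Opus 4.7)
The plan is to parametrize every $\mathbf{X} \in \boldsymbol{\mathcal{S}}_{\mathbf{r}}$ through an HOSVD $\mathbf{X} = \mathbf{S} \times_1 \mathbf{U}_1 \times_2 \cdots \times_d \mathbf{U}_d$ with $\mathbf{U}_i \in \R^{n_i \times r_i}$ having orthonormal columns and core $\mathbf{S} \in \R^{r_1 \times \cdots \times r_d}$ satisfying $\norma{\mathbf{S}}_F = \norma{\mathbf{X}}_F = 1$. Thus $\mathbf{S}$ lies in the Frobenius unit sphere $\boldsymbol{\mathcal{D}} \subset \R^{r_1 \cdots r_d}$, while each $\mathbf{U}_i$ lies in the Stiefel-type set $\boldsymbol{\mathcal{O}}_i := \skup{\mathbf{V} \in \R^{n_i \times r_i} : \mathbf{V}^T \mathbf{V} = \mathbf{I}_{r_i}}$, which is contained in the operator-norm unit ball of $\R^{n_i \times r_i}$ (since orthonormal columns imply $\norma{\mathbf{V}}_{2 \to 2} = 1$).

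Setting $\eta := \varepsilon/(d+1)$, I apply Lemma~\ref{unitball} separately to each factor. This yields an $\eta$-net $\mathcal{N}_\eta^{\boldsymbol{\mathcal{D}}} \subset \boldsymbol{\mathcal{D}}$ with respect to $\norma{\cdot}_F$ of cardinality at most $(3/\eta)^{r_1 r_2 \cdots r_d}$, and for each $i \in \uglate{d}$ an $\eta$-net $\mathcal{N}_\eta^{\boldsymbol{\mathcal{O}}_i} \subset \boldsymbol{\mathcal{O}}_i$ with respect to $\norma{\cdot}_{2 \to 2}$ of cardinality at most $(3/\eta)^{n_i r_i}$, since the ambient vector space has real dimension $n_i r_i$.

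The crux is a Lipschitz estimate for the HOSVD parametrization. For two decompositions $\mathbf{X}$ and $\widetilde{\mathbf{X}}$ (with the second one built from elements of the nets), I telescope
\begin{align*}
\mathbf{X}-\widetilde{\mathbf{X}} &= (\mathbf{S}-\widetilde{\mathbf{S}}) \times_1 \mathbf{U}_1 \times_2 \cdots \times_d \mathbf{U}_d \\
&\quad + \sum_{i=1}^d \widetilde{\mathbf{S}} \times_1 \widetilde{\mathbf{U}}_1 \times \cdots \times_{i-1} \widetilde{\mathbf{U}}_{i-1} \times_i (\mathbf{U}_i-\widetilde{\mathbf{U}}_i) \times_{i+1} \mathbf{U}_{i+1} \times \cdots \times_d \mathbf{U}_d.
\end{align*}
Since $k$-mode multiplication by a matrix with orthonormal columns preserves the Frobenius norm and $\norma{\mathbf{Y} \times_k \mathbf{A}}_F \leq \norma{\mathbf{A}}_{2 \to 2} \norma{\mathbf{Y}}_F$ in general, together with $\norma{\mathbf{S}}_F = \norma{\widetilde{\mathbf{S}}}_F = 1$, the triangle inequality yields
\[
\norma{\mathbf{X}-\widetilde{\mathbf{X}}}_F \leq \norma{\mathbf{S}-\widetilde{\mathbf{S}}}_F + \sum_{i=1}^d \norma{\mathbf{U}_i - \widetilde{\mathbf{U}}_i}_{2 \to 2}.
\]
Choosing $\widetilde{\mathbf{S}}$ and each $\widetilde{\mathbf{U}}_i$ to be a nearest element in the respective net makes each of the $d+1$ summands at most $\eta$, hence $\norma{\mathbf{X} - \widetilde{\mathbf{X}}}_F \leq (d+1)\eta = \varepsilon$. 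Taking the $\varepsilon$-net of $\boldsymbol{\mathcal{S}}_{\mathbf{r}}$ to consist of all tensors $\widetilde{\mathbf{S}} \times_1 \widetilde{\mathbf{U}}_1 \times_2 \cdots \times_d \widetilde{\mathbf{U}}_d$ built from elements of $\mathcal{N}_\eta^{\boldsymbol{\mathcal{D}}}$ and the $\mathcal{N}_\eta^{\boldsymbol{\mathcal{O}}_i}$ and multiplying cardinalities produces the claimed exponent $r_1 r_2 \cdots r_d + \sum_{i=1}^d n_i r_i$ with base $3(d+1)/\varepsilon$.

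The only subtlety lies in verifying the isometry property in every telescoping term: one uses that for $j<i$ the factors $\widetilde{\mathbf{U}}_j$ have orthonormal columns, which holds \emph{because} Lemma~\ref{unitball} places the net inside $\boldsymbol{\mathcal{O}}_j$, while for $j>i$ the $\mathbf{U}_j$ are orthonormal by construction of the HOSVD of $\mathbf{X}$. No finer geometric analysis of the Stiefel manifold is required, since the ambient-dimension volume bound of Lemma~\ref{unitball} already gives the correct $(3/\eta)^{n_i r_i}$ count. The scaling $\eta = \varepsilon/(d+1)$ is precisely what turns the $d+1$ summands into an overall $\varepsilon$-error, explaining the factor $d+1$ in the stated bound.
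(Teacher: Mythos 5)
Your proof is correct, and it follows the same overall skeleton as the paper's: an HOSVD parametrization, separate volumetric nets (Lemma~\ref{unitball}) for the core and for each factor at scale $\varepsilon/(d+1)$, a telescoping decomposition of $\mathbf{X}-\widetilde{\mathbf{X}}$, and a per-term bound of $\varepsilon/(d+1)$. The one genuine difference is how the factor matrices are covered and how the telescoping terms are controlled. You cover each Stiefel set $\boldsymbol{\mathcal{O}}_i$ in the operator norm $\|\cdot\|_{2\to 2}$ and then invoke the general submultiplicativity $\|\mathbf{Y}\times_k \mathbf{A}\|_F \le \|\mathbf{A}\|_{2\to 2}\|\mathbf{Y}\|_F$; this works for any Tucker representation with orthonormal factors and a unit-norm core, with no further structure required. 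The paper instead covers $\boldsymbol{\mathcal{O}}_i$ in the weaker column-wise norm $\|\cdot\|_{1,2}$ and compensates by exploiting the \emph{all-orthogonality} of the HOSVD core tensor $\mathbf{S}$ (the cross terms $\langle \mathbf{S}_{i_j=k_j},\mathbf{S}_{i_j=l_j}\rangle$ vanish for $k_j\neq l_j$), which is why its telescoping keeps the original core $\mathbf{S}$ in every factor-difference term. Since both unit balls contain $\boldsymbol{\mathcal{O}}_i$ and the volumetric count $(3(d+1)/\varepsilon)^{n_i r_i}$ is the same for either norm, the two routes yield identical bounds; your version is the more elementary one here, while the paper's $\|\cdot\|_{1,2}$/all-orthogonality technique is the one that carries over to its subsequent HT/TT covering lemma, where the analogous $\sqrt{r}$ losses are tracked explicitly.
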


\begin{proof}
The proof follows a similar strategy as the one of \cite[Lemma~3.1]{capl11}.
The HOSVD decomposition $\mathbf{X}=\mathbf{S} \times_1 \mathbf{U}_1 \times_2 \mathbf{U}_2 \times \cdots \times_d \mathbf{U}_d$ of any $\mathbf{X} \in \boldsymbol{\mathcal{S}}_{\mathbf{r}}$ obeys $\norma{\mathbf{S}}_F=1$. Our argument constructs an $\varepsilon$-net for $\boldsymbol{\mathcal{S}}_{\mathbf{r}}$ by covering the sets of matrices $\mathbf{U}_1,\mathbf{U}_2,\ldots,\mathbf{U}_d$ with orthonormal columns and the set of unit Frobenius norm tensors $\mathbf{S}$. 
For simplicity we assume that $n_1=n_2=\ldots=n_d=n$ and $r_1=r_2=\ldots=r_d=r$ since the general case requires only a straightforward modification.

The set $\boldsymbol{\mathcal{D}}$ of all-orthogonal $d$-th order tensors $\mathbf{X} \in \R^{r \times r \times \cdots \times r}$ with unit Frobenius norm
is contained in $\boldsymbol{\mathcal{F}}=\skup{\mathbf{X} \in \R^{r \times r \times \cdots \times r} : \norma{\mathbf{X}}_F=1}$.
Lemma~\ref{unitball} therefore provides an $\varepsilon/\zag{d+1}$-net
 $\mathcal{N}_{\varepsilon/\zag{d+1}}^{\boldsymbol{\mathcal{F}}}$ with respect to the Frobenius norm of cardinality
$\left|\mathcal{N}_{\varepsilon/\zag{d+1}}^{\boldsymbol{\mathcal{F}}}\right| \leq \zag{3\zag{d+1}/\varepsilon}^{r^d}$.
For covering $\boldsymbol{\mathcal{O}}_{n,r}=\skup{\mathbf{U} \in \R^{n \times r} : \mathbf{U}^*\mathbf{U}=\mathbf{I}}$, 
it is beneficial to use the norm $\norma{\cdot}_{1,2}$ defined as
$$ \norma{\mathbf{V}}_{1,2}=\max_i\norma{\mathbf{V}\zag{:,i}}_{2},$$
where $\mbf{V}\zag{:,i}$ denotes the $i$-th column of $\mbf{V}$.
Since the elements of $\boldsymbol{\mathcal{O}}_{n,r}$ have normed columns, it holds
$\boldsymbol{\mathcal{O}}_{n,r} \subset \boldsymbol{\mathcal{Q}}_{n,r} =\skup{\mathbf{V} \in \R^{n \times r}: \norma{\mathbf{V}}_{1,2}\leq 1}$.  
Lemma~\ref{unitball} 
gives $\mathcal{N}\zag{\boldsymbol{\mathcal{O}}_{n,r},\norma{\cdot}_{1,2},\varepsilon/\zag{d+1}} \leq \zag{3\zag{d+1}/\varepsilon}^{nr}$, i.e.,
there exists an $\varepsilon/\zag{d+1}$-net $\mathcal{N}_{\varepsilon/\zag{d+1}}^{\boldsymbol{\mathcal{O}}_{n,r}}$ of this cardinality.

Then the set
\begin{equation*}
\mathcal{N}_{\varepsilon}^{\boldsymbol{\mathcal{S}}_{\mathbf{r}}}:=\left\{\overline{\mathbf{S}} \times_1 \overline{\mathbf{U}}_1 \times_2 \overline{\mathbf{U}}_2 \times \cdots \times_d \overline{\mathbf{U}}_d : \overline{\mathbf{S}}\in \mathcal{N}_{\varepsilon/\zag{d+1}}^{\boldsymbol{\mathcal{D}}} \text{ and }\overline{\mathbf{U}}_i \in \mathcal{N}_{\varepsilon/\zag{d+1}}^{\boldsymbol{\mathcal{O}}_{n,r}} \text{ for all } i \in \uglate{d} \right\},
\end{equation*} 
obeys 
$$
\left|\mathcal{N}_{\varepsilon}^{\boldsymbol{\mathcal{S}}_{\mathbf{r}}} \right|
\leq \mathcal{N}\zag{\boldsymbol{\mathcal{D}},\norma{\cdot}_F,\varepsilon/\zag{d+1}}\uglate{\mathcal{N}\zag{\boldsymbol{\mathcal{O}}_{n,r}, \norma{\cdot}_{1,2}, \varepsilon/\zag{d+1}}}^d  \leq \zag{3\zag{d+1}/\varepsilon}^{r^d+dnr}.
$$ 
It remains to show that $\mathcal{N}_{\varepsilon}^{\boldsymbol{\mathcal{S}}_{\mathbf{r}}}$ is an $\varepsilon$-net for $\boldsymbol{\mathcal{S}}_{\mathbf{r}}$, i.e., that 
for all $\mathbf{X} \in \boldsymbol{\mathcal{S}}_{\mathbf{r}}$ there exists $\overline{\mathbf{X}}\in \mathcal{N}_{\varepsilon}^{\boldsymbol{\mathcal{S}}_{\mathbf{r}}}$ with $\norma{\mathbf{X}-\overline{\mathbf{X}}}_F\leq \varepsilon$. To this end, we fix $\mathbf{X} \in \boldsymbol{\mathcal{S}}_{\mathbf{r}}$ and decompose $\mathbf{X}$ as $\mathbf{X}=\mathbf{S} \times_1 \mathbf{U}_1 \times_2 \mathbf{U}_2 \times \cdots \times_d \mathbf{U}_d$. Then there exists $\overline{\mathbf{X}}=\overline{\mathbf{S}} \times_1 \overline{\mathbf{U}}_1 \times_2 \overline{\mathbf{U}}_2 \times \cdots \times_d \overline{\mathbf{U}}_d \in \mathcal{N}_{\varepsilon}^{\boldsymbol{\mathcal{S}}_{\mathbf{r}}}$ with $\overline{\mathbf{U}}_i \in \mathcal{N}_{\varepsilon/\zag{d+1}}^{\boldsymbol{\mathcal{O}}_{n,r}}$, for all $i \in \uglate{d}$ and $\overline{\mathbf{S}} \in \mathcal{N}_{\varepsilon/\zag{d+1}}^{\boldsymbol{\mathcal{D}}}$ obeying $\norma{\mathbf{U}_i-\overline{\mathbf{U}}_i}_{1,2} \leq \varepsilon/\zag{d+1}$, for all $i \in \uglate{d}$ and $\norma{\mathbf{S}-\overline{\mathbf{S}}}_F \leq \varepsilon/\zag{d+1}$. This gives
\begin{align}
\norma{\mathbf{X}-\overline{\mathbf{X}}}_F & = \norma{\mathbf{S} \times_1 \mathbf{U}_1 \times \cdots \times_d \mathbf{U}_d-
\overline{\mathbf{S}} \times_1 \overline{\mathbf{U}}_1 \times \cdots \times_d \overline{\mathbf{U}}_d}_F \notag \\
& = \left\| \mathbf{S} \times_1 \mathbf{U}_1 \times_2 \mathbf{U}_2 \times \cdots \times_d \mathbf{U}_d \pm \mathbf{S} \times_1 \mathbf{U}_1 \times \mathbf{U}_2 \times \cdots \times_{d-1} \mathbf{U}_{d-1} \times_d \overline{\mathbf{U}}_d  \right. \notag\\
& \pm \left. \mathbf{S} \times_1 \mathbf{U}_1 \times_2 \mathbf{U}_2 \times \cdots \times_{d-2} \mathbf{U}_{d-2} \times_{d-1} \overline{\mathbf{U}}_{d-1} \times_d \overline{\mathbf{U}}_d  \right. \notag\\ 
& \pm \cdots \pm \left. \mathbf{S} \times_1 \overline{\mathbf{U}}_1  \times \cdots \times_d \overline{\mathbf{U}}_d - \overline{\mathbf{S}} \times_1 \overline{\mathbf{U}}_1 \times \cdots \times_d \overline{\mathbf{U}}_d \right\|_F \notag \\
& \leq  \norma{\mathbf{S} \times_1 \mathbf{U}_1 \times_2 \mathbf{U}_2 \times \cdots \times_d \zag{\mathbf{U}_d-\overline{\mathbf{U}}_d}}_F \notag  \\
& + \norma{\mathbf{S} \times_1 \mathbf{U}_1 \times_2 \mbf{U}_2 \times \cdots \times_{d-1} \zag{ \mathbf{U}_{d-1} - \overline{\mathbf{U}}_{d-1}} \times_d \overline{\mathbf{U}}_d}_F \notag \\
& +  \cdots \, + \norma{\mbf{S} \times_1 \zag{\mbf{U}_1-\overline{\mbf{U}}_1} \times_2 \overline{\mbf{U}}_2 \times \cdots \times_d \overline{\mbf{U}}_d}_F \notag\\
& + \norma{\zag{\mathbf{S}-\overline{\mathbf{S}}} \times_1 \overline{\mathbf{U}}_1 \times_2 \overline{\mathbf{U}}_2 \times \cdots \times_d \overline{\mathbf{U}}_d}_F.
\label{last:term:HOSVD}
\end{align}
For the first $d$ terms note that by unitarity $\sum_{i_j} \mathbf{U}_j\zag{i_j,k_j}\mathbf{U}_j\zag{i_j,l_j} = \delta_{k_jl_j}$ and $\sum_{i_j} \overline{\mathbf{U}}_j\zag{i_j,k_j}\overline{\mathbf{U}}_j\zag{i_j,l_j} = \delta_{k_jl_j}$, for all $j \in \uglate{d}$, and 
$\interval{\mbf{S}_{i_j=k_j}, \mbf{S}_{i_j=l_j}}=0$  for all  $j \in \uglate{d}$ whenever $k_j\neq l_j$.
Therefore, we obtain
\begin{align*}
& \norma{\mathbf{S} \times_1 \mathbf{U}_1 \times_2 \mathbf{U}_2 \times \cdots \times_{j-1} \mathbf{U}_{j-1} \times_j \zag{\mathbf{U}_j-\overline{\mathbf{U}}_j} \times_{j+1} \overline{\mbf{U}}_{j+1} \times \cdots \times_d \overline{\mbf{U}}_d}_F^2  \\ 
& =  \sum_{i_1,\ldots,i_d} \uglate{ \zag{\mathbf{S} \times_1 \mathbf{U}_1 \times_2 \mbf{U}_2 \times \cdots \times_{j-1} \mathbf{U}_{j-1} \times_j \zag{\mathbf{U}_j-\overline{\mathbf{U}}_j} \times_{j+1} \overline{\mbf{U}}_{j+1} \times \cdots \times_d \overline{\mbf{U}}_d} \zag{i_1,\ldots,i_d}}^2 \\
& = \sum_{i_1,\ldots,i_d} \sum_{k_1,\ldots,k_d} \sum_{l_1,\ldots,l_d} \mathbf{S}\zag{k_1,\ldots,k_d} \mathbf{S}\zag{l_1,\ldots,l_d} \mathbf{U}_1\zag{i_1,k_1}  \mathbf{U}_1\zag{i_1,l_1} \mbf{U}_2\zag{i_2,k_2} \mbf{U}_2\zag{i_2,l_2}  \\
& \cdot \ldots \cdot \zag{\mathbf{U}_j-\overline{\mathbf{U}}_j}\zag{i_j,k_j} \zag{\mathbf{U}_j-\overline{\mathbf{U}}_j}\zag{i_j,l_j} \cdot \ldots \cdot \overline{\mbf{U}}_d\zag{i_d,k_d} \overline{\mbf{U}}_d\zag{i_d,l_d} \\
&= \sum_{i_j} \sum_{k_1,\ldots,k_d} \sum_{l_j} \mathbf{S}\zag{k_1,\ldots,k_j,\ldots,k_d}\mathbf{S}\zag{k_1,\ldots,l_j,\ldots,k_d} \zag{\mathbf{U}_j-\overline{\mathbf{U}}_j}\zag{i_j,k_j} \zag{\mathbf{U}_j-\overline{\mathbf{U}}_j}\zag{i_j,l_j}  \\
&=\sum_{i_j}\sum_{k_1,k_2,\ldots,k_d}\mathbf{S}\zag{k_1,k_2,\ldots,k_d}^2\zag{\zag{\mathbf{U}_j-\overline{\mathbf{U}}_j}\zag{i_j,k_j}}^2 \leq \norma{\mathbf{U}_j-\overline{\mathbf{U}}_j}_{1,2}^2 \norma{\mathbf{S}}_F^2 = \norma{\mathbf{U}_j-\overline{\mathbf{U}}_j}_{1,2}^2 \\ 
& \leq \zag{\varepsilon/\zag{d+1}}^2.
\end{align*}
In order to bound the last term in \eqref{last:term:HOSVD}, observe that the unitarity of the matrices $\overline{\mathbf{U}}_i$ gives
\begin{equation*}
\norma{\zag{\mathbf{S}-\overline{\mathbf{S}}} \times_1 \overline{\mathbf{U}}_1 \times \cdots \times_d \overline{\mathbf{U}}_d}_F=\norma{\mathbf{S}-\overline{\mathbf{S}}}_F \leq \varepsilon/\zag{d+1}.
\end{equation*} 
This completes the proof. 
\end{proof}

Next, we bound the covering numbers related to the HT decomposition, which includes the TT decomposition as a special case.

\begin{lemma}[Covering numbers related to HT-decomposition]\label{coveringHT} 
For a given HT-tree $T_I$, 
the covering numbers of the set of unit norm, rank-$\mathbf{r}$ tensors
$$ 
\boldsymbol{\mathcal{S}}_{\mathbf{r}}^{\operatorname{HT}}=\skup{\mbf{X} \in \R^{n_1 \times n_2 \times \cdots \times n_d}: \rank_{\operatorname{HT}}\zag{\mbf{X}} \leq \mbf{r}_{\operatorname{HT}}, \norma{\mbf{X}}_F=1}
$$
satisfy
\begin{align} \label{coveringHTmin}
\mathcal{N}\zag{\boldsymbol{\mathcal{S}}_{\mathbf{r}}^{\operatorname{HT}}, \norma{\cdot}_F, \varepsilon} \leq  \zag{3(2d-1)\sqrt{r}/\varepsilon}^{\sum_{t \in \mathcal{I}\zag{T_{I}}} r_t r_{t_1} r_{t_2} + \sum_{i=1}^d r_i n_i } \quad \text{for  } 0\leq \varepsilon \leq 1, 
\end{align}
where $r=\max\skup{r_t: t \in T_{I}}$, and $t_1$, $t_2$ are the left and the right son of a node $t$, respectively.
\end{lemma}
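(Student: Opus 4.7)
The plan is to build a product $\varepsilon$-net over the factors of an orthogonalized HT parametrization, mirroring the HOSVD argument in Lemma~\ref{covering}. First, every $\mathbf{X} \in \boldsymbol{\mathcal{S}}_{\mathbf{r}}^{\operatorname{HT}}$ can be represented by $\mathbf{X}=\tau\zag{\skup{\mathbf{B}_t}_{t\in\mathcal{I}(T_I)},\skup{\mathbf{U}_\alpha}_{\alpha\in\mathcal{L}(T_I)}}$ in which the implicit basis $\skup{\mathbf{u}^\alpha_\ell}$ is orthonormal at every non-root node. Concretely: each leaf frame $\mathbf{U}_\alpha$ has orthonormal columns (so $\mathbf{U}_\alpha$ lies in the $\norma{\cdot}_{1,2}$-unit ball); each non-root transfer tensor has $\mathbf{B}_t^{\{1\}}(\mathbf{B}_t^{\{1\}})^T=\mathbf{I}_{r_t}$, forcing $\norma{\mathbf{B}_t}_F=\sqrt{r_t}$; and $\norma{\mathbf{B}_D}_F=\norma{\mathbf{X}}_F=1$. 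Setting $\eta_1:=\varepsilon/((2d-1)\sqrt{r})$ and $\eta_2:=\varepsilon/(2d-1)$, Lemma~\ref{unitball} supplies a $\norma{\cdot}_{1,2}$-$\eta_1$-net of the leaf Stiefel set of size $\leq (3(2d-1)\sqrt{r}/\varepsilon)^{n_\alpha r_\alpha}$, a Frobenius $\eta_2$-net of the non-root transfer-tensor Stiefel set (contained in the Frobenius ball of radius $\sqrt{r_t}\leq\sqrt{r}$) of size $\leq (3(2d-1)\sqrt{r}/\varepsilon)^{r_t r_{t_1} r_{t_2}}$, and a Frobenius $\eta_2$-net of the unit Frobenius ball for $\mathbf{B}_D$ of size $\leq (3(2d-1)\sqrt{r}/\varepsilon)^{r_{D_1} r_{D_2}}$ (using $1\leq\sqrt{r}$ and $r_D=1$). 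The product of these cardinalities already matches \eqref{coveringHTmin}.

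It then remains to check that replacing all factors of $\mathbf{X}$ by their nearest net representatives really gives an $\varepsilon$-approximation $\overline{\mathbf{X}}$. I would telescope across the $2d-1$ factors in a post-order DFS traversal of $T_I$, bounding each single-factor swap separately. The central technical step, proved inductively up the tree, is the operator-norm propagation estimate
\[
\norma{\mathbf{U}^s-\widetilde{\mathbf{U}}^s}_{2\to 2}\leq\norma{\mathbf{U}^{t}-\widetilde{\mathbf{U}}^{t}}_{2\to 2}
\]
for every ancestor $s$ of the swapped node $t$, where $\mathbf{U}^\alpha\in\R^{n_\alpha\times r_\alpha}$ collects the basis $\skup{\mathbf{u}^\alpha_\ell}$ and $\widetilde{\cdot}$ denotes the state with the single swap applied. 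The induction step rests on the matricization identity $(\mathbf{u}^s_\ell)^{s_1}=\mathbf{U}^{s_1}\,\mathbf{B}_s(\ell,:,:)\,(\mathbf{U}^{s_2})^T$, on the orthonormality of the sibling basis $\mathbf{U}^{s_2}$, and on the identity $\norma{\sum_\ell x_\ell\,\mathbf{B}_s(\ell,:,:)}_F=\norma{x}_2$, which is just $\mathbf{B}_s^{\{1\}}(\mathbf{B}_s^{\{1\}})^T=\mathbf{I}_{r_s}$ in disguise. Propagating to the root and using $\norma{\mathbf{B}_D}_F=1$ gives the single-swap bound $\norma{\mathbf{X}-\widetilde{\mathbf{X}}}_F\leq\norma{\mathbf{U}^t-\widetilde{\mathbf{U}}^t}_{2\to 2}$, which is in turn bounded by $\norma{\mathbf{B}_t-\overline{\mathbf{B}}_t}_F\leq\eta_2$ for a non-root interior swap (using $\norma{\mathbf{U}^t-\widetilde{\mathbf{U}}^t}_F=\norma{\mathbf{B}_t-\overline{\mathbf{B}}_t}_F$ from orthonormality of $\skup{\mathbf{u}^{t_1}_i\otimes\mathbf{u}^{t_2}_j}$), by $\sqrt{r_\alpha}\,\norma{\mathbf{U}_\alpha-\overline{\mathbf{U}}_\alpha}_{1,2}\leq\sqrt{r}\eta_1=\eta_2$ for a leaf swap, and by $\norma{\mathbf{B}_D-\overline{\mathbf{B}}_D}_F\leq\eta_2$ at the root. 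Summing the $2d-1$ contributions yields $\norma{\mathbf{X}-\overline{\mathbf{X}}}_F\leq(2d-1)\eta_2=\varepsilon$.

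The hard part is the operator-norm propagation step. A naive Frobenius-norm propagation would accrue a factor of $\sqrt{r_s}$ at every internal level of the tree and produce a disastrous $r^{\Theta(d)}$ blow-up; it is precisely the hierarchical Stiefel structure $\mathbf{B}_s^{\{1\}}(\mathbf{B}_s^{\{1\}})^T=\mathbf{I}_{r_s}$ that collapses the propagation constant to one, so that only a single factor of $\sqrt{r}$ survives in the final bound (arising only at the leaves, when converting $\norma{\cdot}_{1,2}$ to $\norma{\cdot}_{2\to 2}$). A secondary, more bookkeeping-level check is that the post-order telescoping preserves at every intermediate stage the orthonormality of all ``context'' bases fed into the propagation lemma; this is automatic because the net points at non-root nodes themselves satisfy the Stiefel constraints and because orthonormality is closed under the composition rule $\mathbf{u}^s_\ell=\sum_{i,j}\mathbf{B}_s(\ell,i,j)\,\mathbf{u}^{s_1}_i\otimes\mathbf{u}^{s_2}_j$.
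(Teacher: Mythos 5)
Your proposal is correct and follows essentially the same route as the paper's proof: right-orthogonalize the HT parametrization, take product $\varepsilon/(2d-1)$-scale nets over the $2d-1$ factors (leaf frames in $\norma{\cdot}_{1,2}$, transfer tensors, root tensor), telescope over single-factor swaps, and use the orthogonality of the untouched context so that each swap contributes at most $\varepsilon/(2d-1)$, with the single $\sqrt{r}$ loss absorbed into the net scale. The only differences are presentational: you cover the non-root transfer tensors in the Frobenius norm over a radius-$\sqrt{r}$ ball where the paper uses the slice-wise $\norma{\cdot}_{F,1}$ norm over its unit ball (same cardinality and same per-swap error), and you package the error control as an inductive operator-norm contraction up the tree, whereas the paper carries out the equivalent estimate per term via an explicit index computation and the trace-duality bound $\interval{\Delta,\square}\leq\norma{\Delta}_{2\rightarrow 2}\norma{\square}_*$.
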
 

The proof requires a non-standard orthogonalization of the HT-decomposition. (The standard orthogonalization leads to worse bounds, in both TT and HT case.)
We say that a tensor $\mbf{B}_t \in \C^{r_t \times r_{t_1} \times r_{t_2}}$ is \textit{right-orthogonal} if $\zag{\mbf{B}_t^{\{2,3\}}}^T\mbf{B}_t^{\{2,3\}}=\mbf{I}_{r_t}$. We call an HT-decomposition \textit{right-orthogonal} if all transfer tensors $\mbf{B}_t$, for $t \in \mathcal{I}(T_I) \backslash \{t_{\text{root}}\}$, i.e.\ except for the root, are \textit{right orthogonal} and all frames $\mbf{U}_i$ have orthogonal columns.
For the sake of simple notation, we write the right-orthogonal HT-decomposition of a tensor $\mbf{X} \in \R^{n_1 \times n_2 \times n_3 \times n_4}$ with the corresponding HT-tree as in Figure~\ref{HTcovering} as
\begin{equation}\label{4:rightorthogonal}
  \mbf{X}=\mbf{B}_{\{1,2,3,4\}} \bigtriangledown \left(\mbf{B}_{\{1,2\}} \bigtriangledown \mbf{U}_1 \bigtriangledown \mbf{U}_2 \right)\bigtriangledown \left(\mbf{B}_{\{3,4\}} \bigtriangledown \mbf{U}_3 \bigtriangledown \mbf{U}_4\right).
  \end{equation}
  \begin{figure*}[h]
\centering
\setlength\fboxsep{0pt}
\setlength\fboxrule{0.5pt}
\includegraphics[scale=1, trim=120 325 360 350, clip]{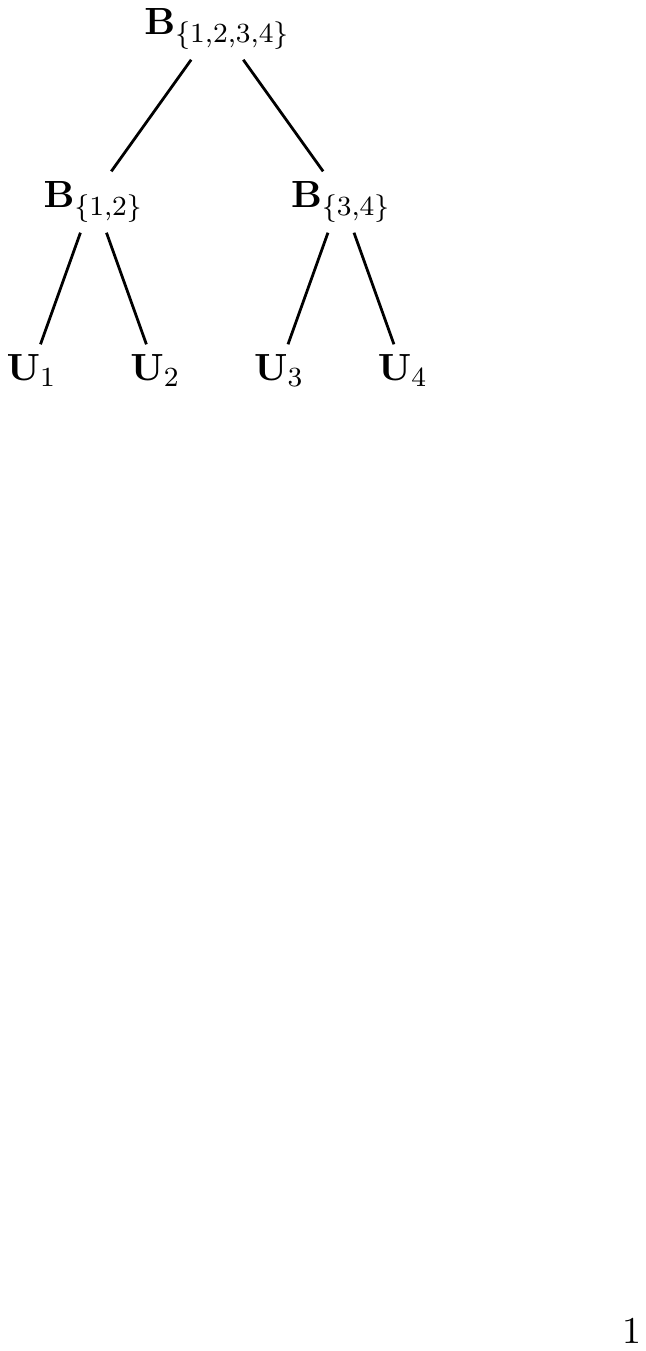}
\caption{Tree for the HT-decomposition with $d=4$}
\label{HTcovering}
\end{figure*}
In fact, the above decomposition can be written as
$$ 
  \mbf{X}=\mbf{B}_{\{1,2,3,4\}} \bigtriangledown \left(\mbf{B}_{\{1,2\}} \times_2 \mbf{U}_1 \times_3 \mbf{U}_2 \right)\bigtriangledown \left(\mbf{B}_{\{3,4\}} \times_2 \mbf{U}_3 \times_3 \mbf{U}_4\right).
$$
since $\mbf{U}_i$ is a matrix for all $i \in \uglate{4}$. However, for simplicity, we are going to use the notation as in \eqref{4:rightorthogonal}.
A right-orthogonal HT-decomposition can be obtained as follows from the standard orthogonal HT-decomposition (see  \cite{grasedyck2010hierarchical}), where in particular, all frames $\mbf{U}_i$ have orthogonal columns. 

We first compute the QR-decomposition of the flattened transfer tensors $\mbf{B}_t^{\{2,3\}} = \mbf{Q}_t^{\{2,3\}} \mbf{R}_t$ 
for all nodes $t$ at the
highest possible level $\ell=p-1$. The level $\ell$ of the tree is defined as the set of all nodes having the distance of exactly $\ell$ to the root. We denote the level $\ell$ of the tree $T_I$ as $T_I^{\ell}=\skup{t \in T_I:\level(t)=\ell}$. (For example, for tree $T_I$ as in Figure \ref{HTcovering}, $T_I^0=\{\{1,2,3,4\}\}$, $T_I^1=\{\{1,2\},\{3,4\}\}$, $T_I^2=\{\{1\},\{2\},\{3\},\{4\}\}$.) The $\mbf{Q}_t$'s are then right-orthogonal by construction.
In order to obtain a representation of the same tensor, we have to replace the tensors $\mbf{B}_{t'}$ with nodes at lower level $p-2$
by $\bar{\mbf{B}}_{t'} = \mbf{B}_{t'} \times_2 \mbf{R}_{t_{\text{left}}} \times_3 \mbf{R}_{t_{\text{right}}}$, where
$t_{\text{left}}$ corresponds to the left son of $t'$ and  $t_{\text{right}}$ to the right son. We continue
by computing the QR-decompositions of $\bar{\mbf{B}}_{t'}^{\{2,3\}}$ with $t'$ at level $p-2$ and so on until we finally updated
the root $\mbf{B}_{\{1,2,\hdots,d\}}$ (which may remain the only non right-orthogonal transfer tensor). 
We illustrate this right-orthogonalization process for an HT-decomposition of the form \eqref{4:rightorthogonal} related to the HT-tree of Figure~\ref{HTcovering}:
\begin{align*} 
  \mbf{X} & =\mbf{B}_{\{1,2,3,4\}} \bigtriangledown \left(\mbf{B}_{\{1,2\}} \bigtriangledown \mbf{U}_1 \bigtriangledown \mbf{U}_2 \right)\bigtriangledown \left(\mbf{B}_{\{3,4\}} \bigtriangledown \mbf{U}_3 \bigtriangledown \mbf{U}_4\right) \\
  & = \mbf{B}_{\{1,2,3,4\}} \bigtriangledown \left( \left[\mbf{Q}_{\{1,2\}} \times_1 \mbf{R}_{\{1,2\}}\right] \bigtriangledown \mbf{U}_1 \bigtriangledown \mbf{U}_2 \right)\bigtriangledown \left(\left[\mbf{Q}_{\{3,4\}} \times_1 \mbf{R}_{\{3,4\}}\right] \bigtriangledown \mbf{U}_3 \bigtriangledown \mbf{U}_4\right) \\
  & = \left[ \mbf{B}_{\{1,2,3,4\}} \times_2 \mbf{R}_{\{1,2\}} \times_3 \mbf{R}_{\{3,4\}} \right] \bigtriangledown \left(\mbf{Q}_{\{1,2\}} \bigtriangledown \mbf{U}_1 \bigtriangledown \mbf{U}_2 \right)\bigtriangledown \left(\mbf{Q}_{\{3,4\}} \bigtriangledown \mbf{U}_3 \bigtriangledown \mbf{U}_4\right). \\
  \end{align*}
The second identity is easily verified by writing out the expressions with index notation. The last expression is a right-orthogonal
HT decomposition with root tensor 
$\overline{\mbf{B}}_{\{1,2,3,4\}} = \mbf{B}_{\{1,2,3,4\}} \times_2 \mbf{R}_{\{1,2\}} \times_3 \mbf{R}_{\{3,4\}}$.

\begin{proof}[Proof of Lemma~\ref{coveringHT}]
For the sake of better readability, we will show the result for the special cases of the order $4$ HT-decomposition as in 
Figure~\ref{HTcovering} as well as for the special case of the TT decomposition for arbitary $d$.
The general case is then done analogously.

For the HT-tree $T_I$ as in Figure~\ref{HTcovering} we have $T_I=\skup{\{1,2,3,4\},\{1,2\},\{1\},\{2\},\{3,4\},\{3\},\{4\}}$ and the number of nodes is $\aps{T_I}=2d-1=7$. We  have to show that for $T_I$ as in Figure~\ref{HTcovering}, the covering numbers of 
$$ 
\boldsymbol{\mathcal{S}}_{\mathbf{r}}^{\text{HT}}=\skup{\mbf{X} \in \R^{n_1 \times n_2 \times \cdots \times n_d}: \rank_{\text{HT}}\zag{\mbf{X}} \leq \mbf{r}_{\text{HT}}, \norma{\mbf{X}}_F=1}, 
$$
satisfy
$$
\mathcal{N}\zag{\boldsymbol{\mathcal{S}}_{\mathbf{r}}^{\text{HT}}, \norma{\cdot}_F, \varepsilon} \leq  \zag{21\sqrt{r}/\varepsilon}^{r_{\{1,2,3,4\}} r_{\{1,2\}} r_{\{3,4\}} + r_{\{1,2\}} r_1 r_2 +r_{\{3,4\}} r_3 r_4+ \sum_{i=1}^4 r_i n_i } \quad \text{for  } 0\leq \varepsilon \leq 1. 
$$
For simplicity, we treat the case that $r_t=r$  for all $t \in T_I$ and $n_i = n$ for $i \in \uglate{4}$. We will use the right-orthogonal HT-decomposition introduced above and we cover the admissible components $\mbf{U}_i$ and $\mbf{B}_t$ in  
\eqref{4:rightorthogonal} separately, for all $t \in T_I$ and $i \in \uglate{4}$. 

We introduce the set of right-orthogonal tensors $\bm{\mathcal{O}}_{r,r,r}^{\text{right}}=\skup{\mbf{U} \in \R^{r \times r \times r}: {\mbf{U}^{\{2,3\}}}^T \mbf{U}^{\{2,3\}}=\mbf{I}_r}$ which we will cover with respect to the norm
\begin{equation}\label{norma1F}
  \norma{\mbf{U}}_{F,1}:=\max_{i}\norma{\mbf{U}\zag{i,:,:}}_F.
\end{equation}
The set $\bm{\mathcal{Q}}_{r,r,r}^{\text{right}}:=\skup{\mbf{X} \in \R^{r \times r \times r}: \norma{\mbf{X}}_{F,1}\leq 1}$ contains $\bm{\mathcal{O}}_{r,r,r}^{\text{right}}$. Thus, by Lemma~\ref{unitball} there is an $\varepsilon/\zag{7\sqrt{r}}$-set $\mathcal{N}_{\varepsilon/\zag{7\sqrt{r}}}^{\bm{\mathcal{O}}_{r,r,r}^{\text{right}}}$ for 
$\bm{\mathcal{O}}_{r,r,r}^{\text{right}}$ obeying
$$ \aps{\mathcal{N}_{\varepsilon/\zag{7\sqrt{r}}}^{\bm{\mathcal{O}}_{r,r,r}^{\text{right}}}} \leq \zag{3\cdot 7\sqrt{r}/\varepsilon}^{r^3}= \zag{21\sqrt{r}/\varepsilon}^{r^3}.$$
For the frames $\mbf{U}_i \in \R^{n \times r}$ with $i \in \uglate{4}$, we define the set $\bm{\mathcal{O}}_{n,r}=\skup{\mbf{U} \in \R^{n \times r}: \mbf{U}^T \mbf{U}=\mbf{I}_r}$ which we cover with respect to 
$$ \norma{\mbf{U}}_{1,2}:=\max_i \norma{\mbf{U}\zag{:,i}}_2.$$
Clearly, $\bm{\mathcal{O}}_{n,r} \subseteq \bm{\mathcal{Q}}_{n,r}:=\skup{\mbf{X} \in \R^{n \times r}: \norma{\mbf{X}}_{1,2}\leq 1}$ since the elements of an orthonormal set are unit normed. Again by Lemma~\ref{unitball}, there is an $\varepsilon/\zag{7\sqrt{r}}$-set $\mathcal{N}_{\varepsilon/\zag{7\sqrt{r}}}^{\bm{\mathcal{O}}_{n,r}}$ for $\bm{\mathcal{O}}_{n,r}$ obeying
$$ \aps{\mathcal{N}_{\varepsilon/\zag{7\sqrt{r}}}^{\bm{\mathcal{O}}_{n,r}}} \leq \zag{21\sqrt{r}/\varepsilon}^{nr}.$$
Finally, to cover $\mbf{B}_{\{1,2,3,4\}}$, we define the set $\bm{\mathcal{F}}_{r,r}=\skup{\mbf{X} \in \R^{1\times r \times r}: \norma{\mbf{X}}_F=1}$ which has an $\varepsilon/\zag{7\sqrt{r}}$-net $\mathcal{N}_{\varepsilon/\zag{7\sqrt{r}}}^{\bm{\mathcal{F}}_{r,r}}$ of cardinality at most $\zag{21\sqrt{r}/\varepsilon}^{r^2}$.
We now define
\begin{align*}
  \mathcal{N}_{\varepsilon}^{\bm{\mathcal{S}}_{\mbf{r}}^{\text{HT}}}
  :=&\left\{\overline{\mbf{B}}_{\{1,2,3,4\}} \bigtriangledown \left(\overline{\mbf{B}}_{\{1,2\}} \bigtriangledown \overline{\mbf{U}}_1  \bigtriangledown  \overline{\mbf{U}}_2\right) \bigtriangledown \left(\overline{\mbf{B}}_{\{3,4\}} \bigtriangledown \overline{\mbf{U}}_3  \bigtriangledown  \overline{\mbf{U}}_4\right)  : \right. \\ & \left.   \overline{\mbf{B}}_{\{1,2\}}, \overline{\mbf{B}}_{\{3,4\}} \in \mathcal{N}_{\varepsilon/\zag{7\sqrt{r}}}^{\bm{\mathcal{O}}_{r,r,r}^{\text{right}}},\overline{\mbf{B}}_{\{1,2,3,4\}} \in \mathcal{N}_{\varepsilon/\zag{7\sqrt{r}}}^{\bm{\mathcal{F}}_{r,r}}, \overline{\mbf{U}}_i \in \mathcal{N}_{\varepsilon/\zag{7\sqrt{r}}}^{\bm{\mathcal{O}}_{n,r}} \text{ for all } i \in \uglate{4} \right\}
  \end{align*}
  and remark that
 \begin{equation*}
 \mathcal{N}\zag{\bm{\mathcal{S}}_{\mbf{r}}^{\text{HT}}, \norma{\cdot}_F, \varepsilon} \leq \aps{\mathcal{N}_{\varepsilon/\zag{7\sqrt{r}}}^{\bm{\mathcal{O}}_{r,r,r}^{\text{right}}}}^{2} \aps{\mathcal{N}_{\varepsilon/\zag{7\sqrt{r}}}^{\bm{\mathcal{O}}_{n,r}}}^{4} \aps{\mathcal{N}_{\varepsilon/\zag{7\sqrt{r}}}^{\bm{\mathcal{F}}_{r,r}}} \leq \zag{21\sqrt{r}/\varepsilon}^{3r^3+4nr}.
 \end{equation*}
It remains to show that for any  
$\mbf{X} \in \bm{\mathcal{S}}_{\mbf{r}}^{\text{HT}}$ 
there exists  
$\overline{\mbf{X}} \in \mathcal{N}_{\varepsilon}^{\bm{\mathcal{S}}_{\mbf{r}}^{\text{HT}}}$  
such that $\norma{\mbf{X}-\overline{\mbf{X}}}_F \leq 1$.
For $\mbf{X}=\mbf{B}_{\{1,2,3,4\}} \bigtriangledown \left(\mbf{B}_{\{1,2\}} \bigtriangledown \mbf{U}_1  \bigtriangledown  \mbf{U}_2 \right)\bigtriangledown \left( \mbf{B}_{\{3,4\}} \bigtriangledown \mbf{U}_3  \bigtriangledown \mbf{U}_4\right)$, we choose $\overline{\mbf{X}}=\overline{\mbf{B}}_{\{1,2,3,4\}} \bigtriangledown \left(\overline{\mbf{B}}_{\{1,2\}} \bigtriangledown \overline{\mbf{U}}_1  \bigtriangledown  \overline{\mbf{U}}_2\right) \bigtriangledown \left( \overline{\mbf{B}}_{\{3,4\}} \bigtriangledown \overline{\mbf{U}}_3  \bigtriangledown  \overline{\mbf{U}}_4 \right)  \in \mathcal{N}_{\varepsilon}^{\bm{\mathcal{S}}_{\mbf{r}}^{\text{HT}}}$ 
such that
$\overline{\mbf{B}}_{\{1,2,3,4\}} \in \bm{\mathcal{F}}_{r,r}$, $\overline{\mbf{B}}_{\{1,2\}},\overline{\mbf{B}}_{\{3,4\}} \in \bm{\mathcal{O}}_{r,r,r}^{\text{right}}$, $\overline{\mbf{U}}_i \in \bm{\mathcal{O}}_{n,r}$ for all $i \in \uglate{4}$ and
\begin{align*}
& \norma{\mbf{U}_i-\overline{\mbf{U}}_i}_{1,2} \leq \frac{\varepsilon}{7\sqrt{r}} \quad \text{for all } i \in \uglate{4}, \\
& \norma{\mbf{B}_{\{1,2,3,4\}}-\overline{\mbf{B}}_{\{1,2,3,4\}}}_F\leq \frac{\varepsilon}{7\sqrt{r}}, \\
& \norma{\mbf{B}_{\{1,2\}}-\overline{\mbf{B}}_{\{1,2\}}}_{F,1} \leq \frac{\varepsilon}{7\sqrt{r}}, \quad \text{ and } \norma{\mbf{B}_{\{3,4\}}-\overline{\mbf{B}}_{\{3,4\}}}_{F,1} \leq \frac{\varepsilon}{7\sqrt{r}}.
\end{align*}
Applying the triangle inequality results in
\begin{align}
\norma{\mbf{X}-\overline{\mbf{X}}}_F &\leq \norma{\mbf{B}_{\{1,2,3,4\}} \bigtriangledown \left(\mbf{B}_{\{1,2\}} \bigtriangledown \mbf{U}_1  \bigtriangledown  \mbf{U}_2 \right) \bigtriangledown \left( \mbf{B}_{\{3,4\}} \bigtriangledown \mbf{U}_3  \bigtriangledown \zag{\mbf{U}_4-\overline{\mbf{U}}_4}\right)}_F \label{coveringHTeq1} \\ 
& + \norma{\mbf{B}_{\{1,2,3,4\}} \bigtriangledown \left(\mbf{B}_{\{1,2\}} \bigtriangledown \mbf{U}_1  \bigtriangledown  \mbf{U}_2\right) \bigtriangledown \left(\mbf{B}_{\{3,4\}} \bigtriangledown \zag{\mbf{U}_3 -\overline{\mbf{U}}_3} \bigtriangledown \overline{\mbf{U}}_4 \right)}_F \nonumber \\
&+\norma{\mbf{B}_{\{1,2,3,4\}} \bigtriangledown \left( \mbf{B}_{\{1,2\}} \bigtriangledown \mbf{U}_1  \bigtriangledown  \mbf{U}_2 \right)\bigtriangledown \left(\zag{\mbf{B}_{\{3,4\}}-\overline{\mbf{B}}_{\{3,4\}}} \bigtriangledown \overline{\mbf{U}}_3  \bigtriangledown \overline{\mbf{U}}_4 \right)}_F  \label{coveringHTeq2} \\
&+ \cdots +\norma{\zag{\mbf{B}_{\{1,2,3,4\}}-\overline{\mbf{B}}_{\{1,2,3,4\}}} \bigtriangledown \left(\overline{\mbf{B}}_{\{1,2\}} \bigtriangledown \overline{\mbf{U}}_1  \bigtriangledown  \overline{\mbf{U}}_2\right) \bigtriangledown \left(\overline{\mbf{B}}_{\{3,4\}} \bigtriangledown \overline{\mbf{U}}_3  \bigtriangledown \overline{\mbf{U}}_4 \right)}_F. \label{triangle4}
\end{align}
To estimate \eqref{coveringHTeq1}, we use orthogonality of $\mbf{U}_i$, $i \in \uglate{4}$, and the right-orthogonality of 
$\mbf{B}_{\{1,2\}}$, $\mbf{B}_{\{3,4\}}$ to obtain
\begin{align*}
& \norma{\mbf{B}_{\{1,2,3,4\}} \bigtriangledown \left(\mbf{B}_{\{1,2\}} \bigtriangledown \mbf{U}_1  \bigtriangledown  \mbf{U}_2 \right)\bigtriangledown \left(\mbf{B}_{\{3,4\}} \bigtriangledown \mbf{U}_3  \bigtriangledown \zag{\mbf{U}_4-\overline{\mbf{U}}_4\right)}}_F^2 \\
&= \sum_{i_1,\ldots,i_4} \sum_{\substack{j_1,\ldots,j_4 \\ k_1,\ldots, k_4}} \sum_{\substack{ j_{12}, \\ k_{12}}} \sum_{\substack{ j_{34}, \\ k_{34}}}  \mbf{B}_{\{1,2,3,4\}}\zag{1,j_{12},j_{34}} \mbf{B}_{\{1,2,3,4\}}\zag{1,k_{12},k_{34}} \mbf{B}_{\{1,2\}}\zag{j_{12},j_{1},j_2}  \mbf{B}_{\{1,2\}}\zag{k_{12},k_{1},k_2} \\ & \cdot \mbf{U}_1 \zag{i_1,j_1} \mbf{U}_1 \zag{i_1,k_1} \mbf{U}_2\zag{i_2,j_2} \mbf{U}_2\zag{i_2,k_2}\mbf{B}_{\{3,4\}}\zag{j_{34},j_{3},j_4}  \mbf{B}_{\{3,4\}}\zag{k_{34},k_{3},k_4} \\ & \cdot \mbf{U}_3 \zag{i_3,j_3} \mbf{U}_3 \zag{i_3,k_3} \zag{\mbf{U}_4-\overline{\mbf{U}}_4}\zag{i_4,j_4} \zag{\mbf{U}_4-\overline{\mbf{U}}_4}\zag{i_4,k_4} \\
&=\sum_{i_4} \sum_{\substack{j_3,j_4 \\  k_4}} \sum_{j_{12}} \sum_{\substack{ j_{34}, \\ k_{34}}}  \mbf{B}_{\{1,2,3,4\}}\zag{1,j_{12},j_{34}} \mbf{B}_{\{1,2,3,4\}}\zag{1,j_{12},k_{34}}    \mbf{B}_{\{3,4\}}\zag{j_{34},j_{3},j_4}  \mbf{B}_{\{3,4\}}\zag{k_{34},j_{3},k_4} \\ & \cdot \zag{\mbf{U}_4-\overline{\mbf{U}}_4}\zag{i_4,j_4} \zag{\mbf{U}_4-\overline{\mbf{U}}_4}\zag{i_4,k_4} = \interval{\Delta\mbf{U}_4, \square\mbf{B}_{\{3,4\}}} \leq \norma{\Delta\mbf{U}_4}_{2 \rightarrow 2} \norma{\square\mbf{B}_{\{3,4\}}}_*
\end{align*}
where 
\begin{align*}
 \Delta\mbf{U}_4\zag{j_4,k_4}=&\sum_{i_4} \zag{\mbf{U}_4-\overline{\mbf{U}}_4}\zag{i_4,j_4} \zag{\mbf{U}_4-\overline{\mbf{U}}_4}\zag{i_4,k_4} = (\mbf{U}_4 - \overline{\mbf{U}}_4)^T(\mbf{U}_4-\overline{\mbf{U}}_4)(j_4,k_4) ,\\
\square\mbf{B}_{\{3,4\}}\zag{j_4,k_4}=&\sum_{j_3} \sum_{j_{12}} \sum_{ j_{34}, k_{34}}  \mbf{B}_{\{1,2,3,4\}}\zag{1,j_{12},j_{34}} \mbf{B}_{\{1,2,3,4\}}\zag{1,j_{12},k_{34}}  \\ & \cdot \mbf{B}_{\{3,4\}}\zag{j_{34},j_{3},j_4}  \mbf{B}_{\{3,4\}}\zag{k_{34},j_{3},k_4}. 
\end{align*}
Since the Frobenius norm dominates the spectral norm, we have
\begin{equation}\label{DeltaU4}
  \norma{\Delta\mbf{U}_4}_{2\rightarrow 2} = \norma{\mbf{U}_4-\overline{\mbf{U}}_4}_{2 \rightarrow 2}^2 
  \leq \norma{\mbf{U}_4-\overline{\mbf{U}}_4}_F^2 \leq r \norma{\mbf{U}_4-\overline{\mbf{U}}_4}_{1,2}^2.
  \end{equation}
Since $\square\overline{\mbf{B}}_{\{3,4\}}$ is symmetric and positive semidefinite, it holds
$$1=\norma{\overline{\mbf{X}}}_F^2=\interval{\mbf{I},\square\overline{\mbf{B}}_{\{3,4\}}}=\tr\zag{\square\overline{\mbf{B}}_{\{3,4\}}}
=\norma{\square\overline{\mbf{B}}_{\{3,4\}}}_*.$$
Hence, 
\begin{align*}
&\norma{\mbf{B}_{\{1,2,3,4\}} \bigtriangledown \left( \mbf{B}_{\{1,2\}} \bigtriangledown \mbf{U}_1  \bigtriangledown  \mbf{U}_2 \right)\bigtriangledown \left( \mbf{B}_{\{3,4\}} \bigtriangledown \mbf{U}_3  \bigtriangledown \zag{\mbf{U}_4-\overline{\mbf{U}}_4 \right)}}_F \leq \sqrt{r}  \norma{\mbf{U}_4-\overline{\mbf{U}}_4}_{1,2} \leq \frac{\varepsilon}{7}.
\end{align*}
A similar procedure leads to the estimates
\begin{align*}
&\norma{\mbf{B}_{\{1,2,3,4\}} \bigtriangledown \left( \mbf{B}_{\{1,2\}} \bigtriangledown \zag{\mbf{U}_1-\overline{\mbf{U}}_1}  \bigtriangledown  \overline{\mbf{U}}_2 \right) \bigtriangledown \left( \overline{\mbf{B}}_{\{3,4\}} \bigtriangledown \overline{\mbf{U}}_3  \bigtriangledown \overline{\mbf{U}}_4\right)}_F \leq \sqrt{r}  \norma{\mbf{U}_1-\overline{\mbf{U}}_1}_{1,2} \leq \frac{\varepsilon}{7}, \\
&\norma{\mbf{B}_{\{1,2,3,4\}} \bigtriangledown \left( \mbf{B}_{\{1,2\}} \bigtriangledown \mbf{U}_1  \bigtriangledown  \zag{\mbf{U}_2-\overline{\mbf{U}}_2}\right) \bigtriangledown \left( \overline{\mbf{B}}_{\{3,4\}} \bigtriangledown \overline{\mbf{U}}_3  \bigtriangledown \overline{\mbf{U}}_4\right)}_F \leq \sqrt{r}  \norma{\mbf{U}_2-\overline{\mbf{U}}_2}_{1,2} \leq \frac{\varepsilon}{7}, \\
&\norma{\mbf{B}_{\{1,2,3,4\}} \bigtriangledown \left( \mbf{B}_{\{1,2\}} \bigtriangledown \mbf{U}_1  \bigtriangledown  \mbf{U}_2 \right)\bigtriangledown \left( \mbf{B}_{\{3,4\}} \bigtriangledown \zag{\mbf{U}_3-\overline{\mbf{U}}_3}  \bigtriangledown \overline{\mbf{U}}_4\right)}_F \leq \sqrt{r}  \norma{\mbf{U}_3-\overline{\mbf{U}}_3}_{1,2} \leq \frac{\varepsilon}{7} .
\end{align*}
Since  $\overline{\mbf{U}}_i$ is orthogonal for all $i \in \uglate{4}$ and $\overline{\mbf{B}}_{\{1,2\}}, \overline{\mbf{B}}_{\{3,4\}}$ are right-orthogonal, we similarly estimate \eqref{coveringHTeq2},
\begin{align*}
&\norma{\mbf{B}_{\{1,2,3,4\}} \bigtriangledown \zag{\mbf{B}_{\{1,2\}} \bigtriangledown \mbf{U}_1  \bigtriangledown  \mbf{U}_2} \bigtriangledown \zag{ \zag{\mbf{B}_{\{3,4\}}-\overline{\mbf{B}}_{\{3,4\}}} \bigtriangledown \overline{\mbf{U}}_3  \bigtriangledown \overline{\mbf{U}}_4}}_F \\
&= \sum_{i_1,\ldots,i_4} \sum_{\substack{j_1,\ldots,j_4 \\ k_1,\ldots, k_4}} \sum_{\substack{j_{12}, \\ k_{12}}} \sum_{\substack{ j_{34}, \\ k_{34}}}  \mbf{B}_{\{1,2,3,4\}}\zag{1,j_{12},j_{34}} \mbf{B}_{\{1,2,3,4\}}\zag{1,k_{12},k_{34}} \mbf{B}_{\{1,2\}}\zag{j_{12},j_{1},j_2}  \mbf{B}_{\{1,2\}}\zag{k_{12},k_{1},k_2} \\ & \cdot \mbf{U}_1 \zag{i_1,j_1} \mbf{U}_1 \zag{i_1,k_1} \mbf{U}_2\zag{i_2,j_2} \mbf{U}_2\zag{i_2,k_2}\zag{\mbf{B}_{\{3,4\}}-\overline{\mbf{B}}_{\{3,4\}}}\zag{j_{34},j_{3},j_4}  \zag{\mbf{B}_{\{3,4\}}-\overline{\mbf{B}}_{\{3,4\}}}\zag{k_{34},k_{3},k_4} \\ & \cdot \overline{\mbf{U}}_3 \zag{i_3,j_3} \overline{\mbf{U}}_3 \zag{i_3,k_3} \overline{\mbf{U}}_4\zag{i_4,j_4} \overline{\mbf{U}}_4\zag{i_4,k_4} \\
&= \sum_{j_3,j_4} \sum_{ j_{12} } \sum_{\substack{ j_{34}, \\ k_{34}}}  \mbf{B}_{\{1,2,3,4\}}\zag{1,j_{12},j_{34}} \mbf{B}_{\{1,2,3,4\}}\zag{1,j_{12},k_{34}}  \zag{\mbf{B}_{\{3,4\}}-\overline{\mbf{B}}_{\{3,4\}}}\zag{j_{34},j_{3},j_4}  \\ & \cdot \zag{\mbf{B}_{\{3,4\}}-\overline{\mbf{B}}_{\{3,4\}}}\zag{k_{34},j_{3},j_4} = \interval{\Delta\mbf{B}_{\{3,4\}}, \square\mbf{B}_{\{1,2,3,4\}}} \leq \norma{\Delta\mbf{B}_{\{3,4\}}}_{2 \rightarrow 2} \norma{\square\mbf{B}_{\{1,2,3,4\}}}_*
\end{align*}
where 
\begin{align*}
 \Delta\mbf{B}_{\{3,4\}}\zag{j_{34},k_{34}}=&\sum_{j_3,j_4}  \zag{\mbf{B}_{\{3,4\}}-\overline{\mbf{B}}_{\{3,4\}}}\zag{j_{34},j_{3},j_4}  \zag{\mbf{B}_{\{3,4\}}-\overline{\mbf{B}}_{\{3,4\}}}\zag{k_{34},j_{3},j_4}\\
 = & \left(\mbf{B}_{\{3,4\}}^{\{2,3\}}-\overline{\mbf{B}}_{\{3,4\}}^{\{2,3\}}\right)^T \left(\mbf{B}_{\{3,4\}}^{\{2,3\}}-\overline{\mbf{B}}_{\{3,4\}}^{\{2,3\}}\right)(j_{34},k_{34}) \\
\square\mbf{B}_{\{1,2,3,4\}}\zag{j_{34},k_{34}}=&\sum_{j_{12}} \mbf{B}_{\{1,2,3,4\}}\zag{1,j_{12},j_{34}} \mbf{B}_{\{1,2,3,4\}}\zag{1,j_{12},k_{34}} . 
\end{align*}
The spectral norm of $\Delta\mbf{B}_{\{3,4\}}$ can be estimated as
\begin{align}\label{DeltaB34}
  \norma{\Delta\mbf{B}_{\{3,4\}}}_{2\rightarrow 2} 
  & = \norma{\mbf{B}_{\{3,4\}}^{\{2,3\}}-\overline{\mbf{B}}_{\{3,4\}}^{\{2,3\}}}_{2 \rightarrow 2}^2
  \leq 
\norma{\mbf{B}_{\{3,4\}}-\overline{\mbf{B}}_{\{3,4\}}}_F^2  
\leq r \norma{\mbf{B}_{\{3,4\}}-\overline{\mbf{B}}_{\{3,4\}}}_{F,1}^2.
  \end{align}
Since $\square\overline{\mbf{B}}_{\{1,2,3,4\}}$ is symmetric and positive semidefinite
$$1=\norma{\overline{\mbf{X}}}_F^2=\interval{\mbf{I},\square\overline{\mbf{B}}_{\{1,2,3,4\}}}=\tr\zag{\square\overline{\mbf{B}}_{\{1,2,3,4\}}}
=\norma{\square\overline{\mbf{B}}_{\{1,2,3,4\}}}_*.$$
Hence, 
\begin{align*}
&\norma{\mbf{B}_{\{1,2,3,4\}} \bigtriangledown \left( \mbf{B}_{\{1,2\}} \bigtriangledown \mbf{U}_1  \bigtriangledown  \mbf{U}_2 \right)\bigtriangledown \left( \zag{\mbf{B}_{\{3,4\}}-\overline{\mbf{B}}_{\{3,4\}}} \bigtriangledown \overline{\mbf{U}}_3  \bigtriangledown \overline{\mbf{U}}_4\right)}_F \leq \sqrt{r}  \norma{\mbf{B}_{\{3,4\}}-\overline{\mbf{B}}_{\{3,4\}}}_{F,1} \leq \frac{\varepsilon}{7}.
\end{align*}
A similar procedure leads to the following estimates
\begin{align*}
&\norma{\zag{\mbf{B}_{\{1,2,3,4\}}-\overline{\mbf{B}}_{\{1,2,3,4\}}} \bigtriangledown \left( \overline{\mbf{B}}_{\{1,2\}} \bigtriangledown \overline{\mbf{U}}_1  \bigtriangledown  \overline{\mbf{U}}_2\right) \bigtriangledown \left(\overline{\mbf{B}}_{\{3,4\}} \bigtriangledown \overline{\mbf{U}}_3  \bigtriangledown \overline{\mbf{U}}_4\right)}_F \leq  \norma{\mbf{B}_{\{1,2,3,4\}}-\overline{\mbf{B}}_{\{1,2,3,4\}}}_{F} \leq \frac{\varepsilon}{7}, \\
&\norma{\mbf{B}_{\{1,2,3,4\}} \bigtriangledown \left(\zag{\mbf{B}_{\{1,2\}}-\overline{\mbf{B}}_{\{1,2\}}} \bigtriangledown \overline{\mbf{U}}_1  \bigtriangledown  \overline{\mbf{U}}_2\right) \bigtriangledown \left(\overline{\mbf{B}}_{\{3,4\}} \bigtriangledown \overline{\mbf{U}}_3  \bigtriangledown \overline{\mbf{U}}_4\right)}_F \leq \sqrt{r}  \norma{\mbf{B}_{\{1,2\}}-\overline{\mbf{B}}_{\{1,2\}}}_{F,1} \leq \frac{\varepsilon}{7}.
\end{align*}
Plugging the bounds into \eqref{triangle4} completes the proof for the HT-tree of Figure~\ref{HTcovering}.

\medskip

Let us now consider the TT-decomposition for tensors of order $d \geq 3$ as illustrated in Figure~\ref{TTdecomp}. We start with a right-orthogonal decomposition  (see also the discussion after Lemma~\ref{coveringHT}) of the form
\begin{align*}
\mbf{X}\zag{i_1,i_2,\ldots,i_d} & =\sum_{j_1, j_{23\ldots d}} \sum_{j_2, j_{3\ldots d}} \cdots \sum_{j_{d-1,d}, j_d} \mbf{B}_{\{1,2,\ldots, d\}} \zag{1,j_1,j_{23\ldots d}}\mbf{U}_1\zag{i_1,j_1}  \mbf{B}_{\{2,3,\ldots, d\}} \zag{j_{23\ldots d}, j_2, j_{3\ldots d}} 
\\
&\qquad \cdot \mbf{U}_2\zag{i_2,j_2}  \cdots  \mbf{B}_{\{d-1, d\}}\zag{j_{d-1,d}, j_{d-1},j_d} \mbf{U}_{d-1}\zag{i_{d-1},j_{d-1}} \mbf{U}_d\zag{i_d,j_d}. 
\end{align*}
As for the general HT-decomposition, we write this as
\begin{equation}\label{HT-decomp-simple}
\mbf{X}= \mbf{B}_{\{1,2,3,\ldots, d\}} \triangledown \mbf{U}_1 \triangledown \left( \mbf{B}_{\{2,3,\ldots, d\}} \triangledown\mbf{U}_2 \triangledown \left( \cdots \triangledown \left( \mbf{B}_{\{d-1,d\}}\triangledown \mbf{U}_{d-1} \triangledown \mbf{U}_d \right) \cdots \right) \right).
\end{equation}
\begin{figure*}[h]
\centering
\includegraphics[scale=0.8, trim=90 345 385 270, clip]{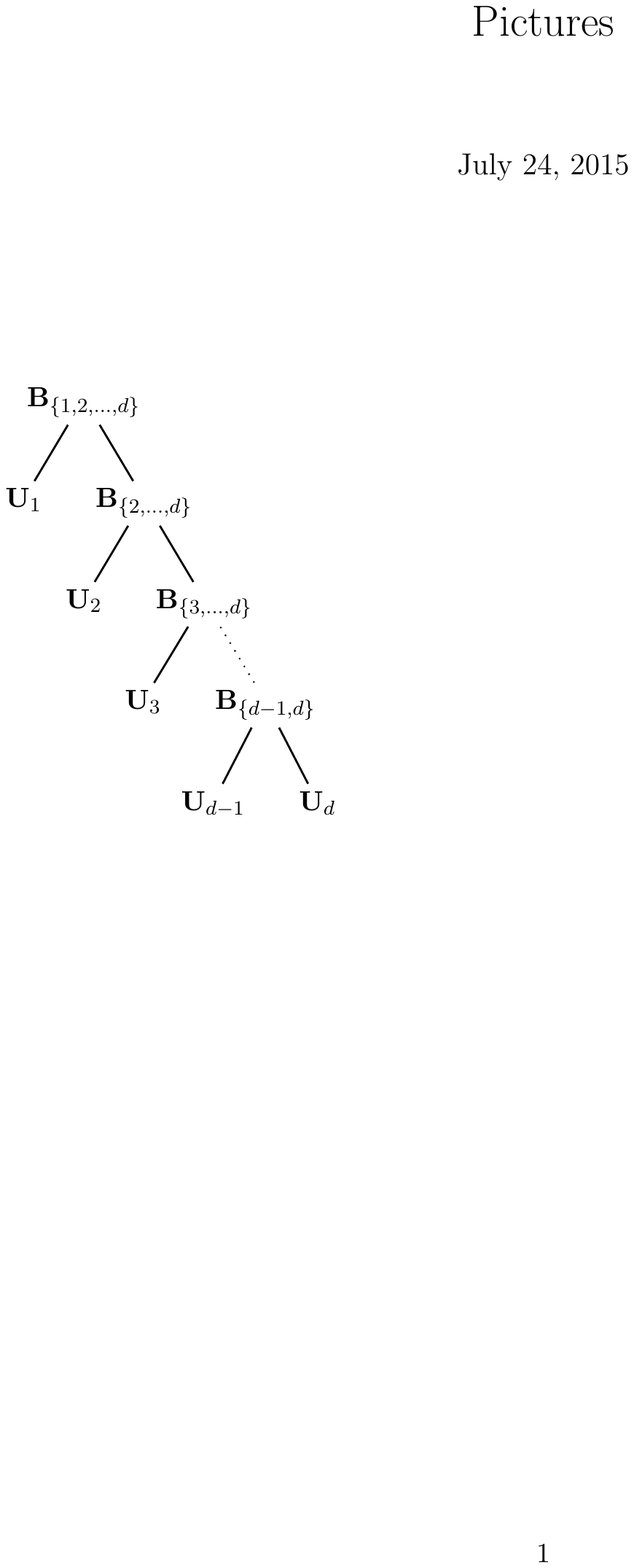}
\caption{TT decomposition}
\label{TTdecomp}
\end{figure*}
As above, we cover each set of admissible components $\mathbf{U}_i$, $\mathbf{B}_t$ separately, and then combine
these components in order to obtain a covering of 
\[
 \boldsymbol{\mathcal{S}}_{\mathbf{r}}^{\operatorname{TT}}=\skup{\mbf{X} \in \R^{n_1 \times n_2 \times \cdots \times n_d}: \rank_{\operatorname{TT}}\zag{\mbf{X}} \leq \mbf{r}_{\operatorname{TT}}, \norma{\mbf{X}}_F=1}
\]
with respect to the Frobenius norm, that is, we form
\begin{align*}
\mathcal{N}_{\varepsilon}^{\bm{\mathcal{S}}_{\mbf{r}}^{\text{TT}}} := & \left\{
\overline{\mbf{B}}_{\{1,2,3,\ldots, d\}} \triangledown \overline{\mbf{U}}_1 \triangledown \left( \overline{\mbf{B}}_{\{2,3,\ldots, d\}} \triangledown\overline{\mbf{U}}_2 \triangledown \left( \cdots \triangledown \left( \overline{\mbf{B}}_{\{d-1,d\}}\triangledown \overline{\mbf{U}}_{d-1} \triangledown \overline{\mbf{U}}_d \right) \cdots \right) \right)
:  \overline{\mbf{U}}_i \in \mathcal{N}_{\varepsilon/\zag{(2d-1)\sqrt{r}}}^{\bm{\mathcal{O}}_{n,r}}, \right. \\ 
&\left. \;\;
\overline{\mbf{B}}_{\{1,\hdots,d\}} \in \mathcal{N}_{\varepsilon/\zag{(2d-1)\sqrt{r}}}^{\bm{\mathcal{F}}_{r,r}},
\overline{\mbf{B}}_{\{j, j+1,\ldots, d\}} \in \mathcal{N}_{\varepsilon/\zag{(2d-1)\sqrt{r}}}^{\bm{\mathcal{O}}_{r,r,r}^{\text{right}}}, i \in [d-1], j = 2, \hdots, d-1\right\}.
\end{align*}
In order to show that $\mathcal{N}_{\varepsilon}^{\bm{\mathcal{S}}_{\mbf{r}}^{\text{TT}}}$ forms an $\varepsilon$-net of
$\boldsymbol{\mathcal{S}}_{\mathbf{r}}^{\operatorname{TT}}$ we choose an arbitrary $\mathbf{X} \in \boldsymbol{\mathcal{S}}_{\mathbf{r}}^{\operatorname{TT}}$ with right-orthogonal decomposition of the form \eqref{HT-decomp-simple} and
for each $\mbf{U}_i$ and $\mbf{B}_{\{j,\hdots,d\}}$ the closest corresponding points $\overline{\mbf{U}}_i \in  \mathcal{N}_{\varepsilon/\zag{(2d-1)\sqrt{r}}}^{\bm{\mathcal{O}}_{n,r}}$, $\overline{\mbf{B}}_{\{1,\hdots,d\}} \in \mathcal{N}_{\varepsilon/\zag{(2d-1)\sqrt{r}}}^{\bm{\mathcal{F}}_{r,r}}$, $\overline{\mbf{B}}_{\{j, j+1,\ldots, d\}} \in \mathcal{N}_{\varepsilon/\zag{(2d-1)\sqrt{r}}}^{\bm{\mathcal{O}}_{r,r,r}^{\text{right}}}$, $j=2,\hdots,d-1$ resulting in $\mbf{X} \in \mathcal{N}_{\varepsilon}^{\bm{\mathcal{S}}_{\mbf{r}}^{\text{TT}}}$.
The triangle inequality yields
\begin{align}
\norma{\mbf{X}-\overline{\mbf{X}}}_F  \leq &\norma{\mbf{B}_{\{1,2,\ldots, d\}}\triangledown \mbf{U}_1\triangledown \left(
\mbf{B}_{\{2,\hdots,d\}} \triangledown \cdots
\left(\mbf{B}_{\{d-1,d\}} \triangledown \mbf{U}_{d-1} \triangledown \zag{\mbf{U}_d-\overline{\mbf{U}}_d} \right) \cdots \right)}_F \nonumber 
\\ &+ \norma{\mbf{B}_{\{1,2,\ldots, d\}} \triangledown\mbf{U}_1 \triangledown \left( \mbf{B}_{\{2,\ldots,d\}} \triangledown \cdots \triangledown\left( \mbf{B}_{\{d-1,d\}} \triangledown \left( \mbf{U}_{d-1} - \overline{\mbf{U}}_{d-1}\right)  \triangledown
\overline{\mbf{U}}_d \right) \cdots \right)}_F \nonumber\\
& + \cdots +\norma{\zag{ \mbf{B}_{\{1,2,\ldots, d\}}-\overline{\mbf{B}}_{\{1,2,\ldots, d\}}} \triangledown \overline{\mbf{U}}_1  \triangledown \left( \overline{\mbf{B}}_{\{2,\hdots,d\}} \triangledown  \cdots \triangledown \zag{ \overline{\mbf{B}}_{\{d-1,d\}} \triangledown \overline{\mbf{U}}_{d-1} \triangledown \overline{\mbf{U}}_d} \cdots\right)}_F. \label{proofTTdiff}
\end{align}
We need to bound terms of the form
\begin{align}
&\norma{\mbf{B}_{\{1,2,\ldots, d\}}\triangledown \mbf{U}_1 \triangledown  \cdots \triangledown \left(\mbf{B}_{\{q-1,q,\ldots, d\}}\triangledown\zag{\mbf{U}_q-\overline{\mbf{U}}_q} \triangledown \left( \overline{\mbf{B}}_{\{q,q+1,\ldots, d\}} \triangledown\cdots \triangledown\overline{\mbf{U}}_d \right) \cdots \right)}_F,  
\; q \in \uglate {d} \label{coveringTTeq1}\\
\text{and } & \norma{\mbf{B}_{\{1,2,\ldots, d\}}\triangledown \mbf{U}_1 \triangledown  \cdots \triangledown\mbf{U}_{p-1} \triangledown\left( \zag{\mbf{B}_{\{p, p+1,\ldots, d\}}-\overline{\mbf{B}}_{\{p,p+1,\ldots, d\}}}\triangledown\overline{\mbf{U}}_{p} \triangledown \left( \cdots \triangledown\overline{\mbf{U}}_d \right) \cdots \right)}_F, \;
p\in \uglate{d-1}. \label{coveringTTeq2}
\end{align} 
To estimate \eqref{coveringTTeq1}, we use orthogonality of $\mbf{U}_q$, $\overline{\mbf{U}}_q$, $q \in \uglate{d}$, and right-orthogonality of $\mbf{B}_{\{p,p+1\ldots,d\}}$, $\overline{\mbf{B}}_{\{p,p+1,\ldots,d\}}$, $p=2,3,\ldots,d-1 $, to obtain
\begin{align*}
&\norma{\mbf{B}_{\{1,2,\ldots, d\}} \triangledown \mbf{U}_1 \triangledown  \cdots \triangledown \left(\mbf{B}_{\{q-1,q,\ldots, d\}}\triangledown\zag{\mbf{U}_q-\overline{\mbf{U}}_q} \triangledown \left( \overline{\mbf{B}}_{\{q,q+1,\ldots, d\}} \triangledown\cdots \triangledown\overline{\mbf{U}}_d \right) \cdots \right)}_F^2 \\
&= \sum_{i_1,\ldots,i_d} \sum_{\substack{j_1,\ldots,j_d \\ k_1,\ldots, k_d}} \sum_{\substack{j_{23\ldots d}, \\ j_{3\ldots d}, \\ \ldots,j_{d-1,d}}} \sum_{\substack{k_{23\ldots d}, \\ k_{3\ldots d}, \\ \ldots,k_{d-1,d}}}  \mbf{B}_{\{1,2,\ldots, d\}}\zag{1,j_1,j_{23\ldots d}} \mbf{B}_{\{1,2,\ldots, d\}}\zag{1,k_1,k_{23\ldots d}} \mbf{U}_1 \zag{i_1,j_1} \mbf{U}_1 \zag{i_1,k_1} \\
& \cdots \mbf{B}_{\{q-1,q,\ldots, d\}}\zag{j_{q-1,q\ldots d},j_{q-1},j_{q\ldots d}} \mbf{B}_{\{q-1,q,\ldots, d\}}\zag{k_{q-1,q\ldots d},k_{q-1},k_{q\ldots d}} \\
& \cdot \zag{\mbf{U}_q-\overline{\mbf{U}}_q}\zag{i_q,j_q} \zag{\mbf{U}_q-\overline{\mbf{U}}_q}\zag{i_q,k_q} \overline{\mbf{B}}_{\{q,q+1,\ldots, d\}}\zag{j_{q,q+1\ldots d},j_{q},j_{q+1\ldots d}} \\
& \cdot \overline{\mbf{B}}_{\{q, q+1,\ldots, d\}}\zag{k_{q,q+1\ldots d},k_{q},k_{q+1\ldots d}} \cdots \overline{\mbf{U}}_d \zag{i_d,j_d} \overline{\mbf{U}}_d \zag{i_d,k_d} \\
&= \sum_{i_q} \sum_{\substack{j_1,\ldots,j_q \\ k_q}} \sum_{\substack{j_{23\ldots d}, \\ j_{3\ldots d},\ldots, \\ j_{q+1\ldots d}}} \sum_{\substack{k_{23\ldots d}, \\ k_{3\ldots d},\ldots, \\ k_{q \ldots d}}}  \mbf{B}_{\{1,2,\ldots, d\}}\zag{1,j_1,j_{23\ldots d}} \mbf{B}_{\{1,2,\ldots, d\}}\zag{1,j_1,k_{23\ldots d}}  \\
 & \cdots  \mbf{B}_{\{q-1,q,\ldots, d\}}\zag{j_{q-1,q\ldots d},j_{q-1},j_{q\ldots d}} \mbf{B}_{\{q-1,q,\ldots, d\}}\zag{k_{q-1,q\ldots d},j_{q-1},k_{q\ldots d}} \\
& \cdot \zag{\mbf{U}_q-\overline{\mbf{U}}_q}\zag{i_q,j_q} \zag{\mbf{U}_q-\overline{\mbf{U}}_q}\zag{i_q,k_q} \\ & \cdot \overline{\mbf{B}}_{\{q,q+1,\ldots, d\}}\zag{j_{q,q+1\ldots d},j_{q},j_{q+1\ldots d}}  \overline{\mbf{B}}_{\{q, q+1,\ldots, d\}}\zag{k_{q,q+1\ldots d},k_{q},j_{q+1\ldots d}}  \\
& = \interval{\Delta\mbf{U}_q, \Box\overline{\mbf{B}}_{\{q,q+1, \ldots, d\}}} \leq \norma{\Delta\mbf{U}_q}_{2\rightarrow 2} \norma{\Box\overline{\mbf{B}}_{\{q,q+1 ,\ldots, d\}}}_*,
\end{align*}
where 
\begin{align*}
 \Delta\mbf{U}_q\zag{j_q,k_q}=&\sum_{i_q} \zag{\mbf{U}_q-\overline{\mbf{U}}_q}\zag{i_q,j_q} \zag{\mbf{U}_q-\overline{\mbf{U}}_q}\zag{i_q,k_q}, \\
 \Box\overline{\mbf{B}}_{\{q,q+1, \ldots, d\}} \zag{j_q,k_q}=& \sum_{j_1,\ldots,j_{q-1}} \sum_{\substack{j_{23\ldots d}, \\ j_{3\ldots d}, \\ \ldots,j_{q+1,\ldots,d}}} \sum_{\substack{k_{23\ldots d}, \\ k_{3\ldots d}, \\ \ldots,k_{q,\ldots,d}}}  \mbf{B}_{\{1,2,\ldots, d\}}\zag{1,j_1,j_{23\ldots d}} \mbf{B}_{\{1,2,\ldots, d\}}\zag{1,j_1,k_{23\ldots d}}  \\
 & \cdots  \mbf{B}_{\{q-1,q,\ldots, d\}}\zag{j_{q-1,q\ldots d},j_{q-1},j_{q\ldots d}} \mbf{B}_{\{q-1,q,\ldots, d\}}\zag{k_{q-1,q\ldots d},j_{q-1},k_{q\ldots d}} \\
& \cdot \overline{\mbf{B}}_{\{q,q+1,\ldots, d\}}\zag{j_{q,q+1\ldots d},j_{q},j_{q+1\ldots d}}  \overline{\mbf{B}}_{\{q, q+1,\ldots, d\}}\zag{k_{q,q+1\ldots d},k_{q},j_{q+1\ldots d}} . 
\end{align*}
We have
\begin{equation}\label{DeltaUq}
  \norma{\Delta\mbf{U}_q}_{2\rightarrow 2} = \| \mbf{U}_q-\overline{\mbf{U}}_q \|_{2 \rightarrow 2}^2 
\leq \norma{\mbf{U}_q-\overline{\mbf{U}}_q}_F^2 \leq r \norma{\mbf{U}_q-\overline{\mbf{U}}_q}_{1,2}^2.
  \end{equation}
Since $\Box\overline{\mbf{B}}_{\{q,q+1,\ldots, d\}}$ is symmetric and positive semidefinite
$$1=\norma{\overline{\mbf{X}}}_F^2=\interval{\mbf{I},\Box\overline{\mbf{B}}_{\{q,q+1 ,\ldots, d\}}}=\tr\zag{\Box\overline{\mbf{B}}_{\{q,q+1,\ldots, d\}}}=\norma{\Box\overline{\mbf{B}}_{\{q,q+1, \ldots, d\}}}_*.$$
Hence, 
\begin{align*}
\norma{\mbf{B}_{\{1,2,\ldots, d\}}\triangledown \mbf{U}_1 \triangledown  \cdots \triangledown \left(\mbf{B}_{\{q-1,q,\ldots, d\}}\triangledown\zag{\mbf{U}_q-\overline{\mbf{U}}_q} \triangledown \left( \overline{\mbf{B}}_{\{q,q+1,\ldots, d\}} \triangledown\cdots \triangledown\overline{\mbf{U}}_d \right) \cdots \right)}_F
& \leq \sqrt{r}  \norma{\mbf{U}_q-\overline{\mbf{U}}_q}_{1,2}\\
& \leq \frac{\varepsilon}{2d-1}.
\end{align*}
In a similar way, distinguishing the cases $p=1$ and $p=2,\hdots,d-1$, we estimate terms of the form \eqref{coveringTTeq2}
as
\begin{align*}
\norma{\mbf{B}_{\{1,2,\ldots, d\}}\triangledown \mbf{U}_1 \triangledown  \cdots \triangledown\mbf{U}_p \triangledown\left( \zag{\mbf{B}_{\{p, p+1,\ldots, d\}}-\overline{\mbf{B}}_{\{p,p+1,\ldots, d\}}}\triangledown\overline{\mbf{U}}_{p+1} \triangledown \left( \cdots \triangledown\overline{\mbf{U}}_d \right) \cdots \right)}_F
\leq \frac{\varepsilon}{2d-1}, \quad q \in \uglate{d-1}.
\end{align*}
Plugging the bounds into \eqref{proofTTdiff} completes the proof for the TT decomposition.
\end{proof}

The proof of Theorem~\ref{glavni} also requires a recent deviation bound \cite{krmera14,Dirksen13} 
for random variables of the form $X=\sup_{\mbf{B} \in \boldsymbol{\mathcal{B}}} \aps{\norma{\mbf{B}\boldsymbol{\xi}}_2^2-\mathbb{E}\norma{\mbf{B}\boldsymbol{\xi}}_2^2}$ in terms of a complexity parameter of the set of 
matrices $\boldsymbol{\mathcal{B}}$ involving covering numbers. 
In order to state it, we introduce  the radii of a set of matrices $\boldsymbol{\mathcal{B}}$ in the Frobenius norm, the operator norm, and the Schatten-4 norm as  
\begin{equation*}
d_F\zag{\boldsymbol{\mathcal{B}}}:=\sup_{\mbf{B}\in \boldsymbol{\mathcal{B}}}\norma{\mathbf{B}}_F, \,  d_{2\rightarrow 2}\zag{\boldsymbol{\mathcal{B}}}:=\sup_{\mbf{B}\in \boldsymbol{\mathcal{B}}}\norma{\mathbf{B}}_{2\rightarrow 2}, \,
d_4\zag{\boldsymbol{\mathcal{B}}}:=\sup_{\mbf{B}\in \boldsymbol{\mathcal{B}}} \norma{\mbf{B}}_{S_4}=\sup_{\mbf{B}\in \boldsymbol{\mathcal{B}}} \zag{\tr\zag{\mbf{B}^T\mbf{B}}^2}^{1/4}.
\end{equation*}
The complexity parameter is Talagrand's $\gamma_2$-functional $\gamma_2\zag{\boldsymbol{\mathcal{B}}, \norma{\cdot}_{2 \rightarrow 2}}$. 
We do not give the precise definition here, but refer to \cite{ta14-1} for details.
For us, it is only important that it can be bounded in terms of covering numbers via a Dudley type integral \cite{Dudley1967290,  ta14-1} as
\begin{equation}\label{gamma2}
\gamma_2\zag{\boldsymbol{\mathcal{B}},\norma{\cdot}_{2 \rightarrow 2}} \leq C \int_{0}^{d_{2\to 2}\zag{\boldsymbol{\mathcal{B}}}} \sqrt{\log\mathcal{N}\zag{\boldsymbol{\mathcal{B}}, \norma{\cdot}_{2 \rightarrow 2}, u}}du.
\end{equation}
We will use the following result from \cite[Theorem 6.5]{Dirksen13} which is a slightly refined version of the main result of \cite{krmera14}.
\begin{theorem} \label{chaos}
Let $\boldsymbol{\mathcal{B}}$ be a set of matrices, and let $\boldsymbol{\xi}$ be a random vector whose entries $\xi_j$ are independent, mean-zero, variance $1$ and $L$-subgaussian random variables. Set
\begin{align*}
E &= \gamma_2\zag{\boldsymbol{\mathcal{B}}, \norma{\cdot}_{2 \rightarrow 2}} \zag{\gamma_2\zag{\boldsymbol{\mathcal{B}}, \norma{\cdot}_{2 \rightarrow 2}}+d_F\zag{\boldsymbol{\mathcal{B}}}}+d_F\zag{\boldsymbol{\mathcal{B}}}d_{{2 \rightarrow 2}}\zag{\boldsymbol{\mathcal{B}}} \\
V &= d_{4 }^2\zag{\boldsymbol{\mathcal{B}}}, 
\text{ and } U=d_{{2 \rightarrow 2}}^2\zag{\boldsymbol{\mathcal{B}}}.
\end{align*}
Then, for $t>0$,
\begin{equation*}
\vjer{\sup_{\mathbf{B} \in \boldsymbol{\mathcal{B}}} \aps{\norma{\mathbf{B}\boldsymbol{\xi}}_2^2- \E \norma{\mathbf{B}\boldsymbol{\xi}}_2^2} \geq c_1E+t} \leq 2 \exp \zag{-c_2\min\skup{\frac{t^2}{V^2}, \frac{t}{U}}}.
\end{equation*}
The constants $c_1, c_2$ only depend on $L$.
\end{theorem}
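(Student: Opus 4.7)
The plan is to prove this tail bound for the supremum of a second-order chaos via the generic chaining framework of Talagrand in Bernstein-increment form, following the refinement by Dirksen of the Krahmer-Mendelson-Rauhut inequality. The argument splits into a pointwise Hanson-Wright estimate, an increment bound for the process $Z_{\mbf{B}} := \norma{\mbf{B}\boldsymbol{\xi}}_2^2 - \E\norma{\mbf{B}\boldsymbol{\xi}}_2^2$, a chaining step, and a tail-upgrade step.

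First I would establish a Hanson-Wright inequality for each single matrix $\mbf{A} = \mbf{B}^T\mbf{B}$: standard decoupling (replacing the diagonal quadratic form by a bilinear form $\boldsymbol{\xi}^T\mbf{A}\boldsymbol{\xi}'$ in an independent copy) together with the subgaussian moment method for $L$-subgaussian $\boldsymbol{\xi}$ yields
$$
\vjer{\aps{\boldsymbol{\xi}^T\mbf{A}\boldsymbol{\xi} - \E\boldsymbol{\xi}^T\mbf{A}\boldsymbol{\xi}} \geq u} \leq 2\exp\zag{-c\min\skup{u^2/\norma{\mbf{A}}_F^2,\; u/\norma{\mbf{A}}_{2\rightarrow 2}}}.
$$
Because $\norma{\mbf{B}^T\mbf{B}}_F = \norma{\mbf{B}}_{S_4}^2$ and $\norma{\mbf{B}^T\mbf{B}}_{2\rightarrow 2} = \norma{\mbf{B}}_{2\rightarrow 2}^2$, the supremal Hanson-Wright constants over $\boldsymbol{\mathcal{B}}$ are precisely $V = d_4^2\zag{\boldsymbol{\mathcal{B}}}$ and $U = d_{2\rightarrow 2}^2\zag{\boldsymbol{\mathcal{B}}}$, identifying the tail parameters in the theorem.

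Next comes the increment bound. Applying the single-point inequality to $\mbf{A}_{12} = \mbf{B}_1^T\mbf{B}_1 - \mbf{B}_2^T\mbf{B}_2$ and using the factorisation $\mbf{A}_{12} = \zag{\mbf{B}_1 - \mbf{B}_2}^T\mbf{B}_1 + \mbf{B}_2^T\zag{\mbf{B}_1 - \mbf{B}_2}$ together with the submultiplicativity estimates
$$
\norma{\mbf{A}_{12}}_F \leq 2 d_F\zag{\boldsymbol{\mathcal{B}}} \norma{\mbf{B}_1 - \mbf{B}_2}_{2\rightarrow 2}, \qquad \norma{\mbf{A}_{12}}_{2\rightarrow 2} \leq 2 d_{2\rightarrow 2}\zag{\boldsymbol{\mathcal{B}}} \norma{\mbf{B}_1 - \mbf{B}_2}_{2\rightarrow 2},
$$
I obtain a Bernstein-type tail for $Z_{\mbf{B}_1} - Z_{\mbf{B}_2}$ in which the only metric on $\boldsymbol{\mathcal{B}}$ that appears is $\norma{\cdot}_{2\rightarrow 2}$, weighted by the radius $d_F\zag{\boldsymbol{\mathcal{B}}}$ in the subgaussian regime and by $d_{2\rightarrow 2}\zag{\boldsymbol{\mathcal{B}}}$ in the subexponential regime. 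This is the structural reason that only a single $\gamma_2$-functional with respect to $\norma{\cdot}_{2\rightarrow 2}$ features in $E$, rather than a mixed $\gamma_1/\gamma_2$ expression.

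The main step is the chaining itself. I would fix an admissible partition sequence on $\boldsymbol{\mathcal{B}}$ nearly attaining $\gamma_2\zag{\boldsymbol{\mathcal{B}}, \norma{\cdot}_{2\rightarrow 2}}$ together with corresponding chain points $\pi_n\zag{\mbf{B}}$, telescope $Z_{\mbf{B}} - Z_{\pi_0(\mbf{B})}$, and invoke the Bernstein increment bound at each level with matching union-bound scales. Summing the subgaussian contributions produces a term of order $d_F\zag{\boldsymbol{\mathcal{B}}} \gamma_2\zag{\boldsymbol{\mathcal{B}}, \norma{\cdot}_{2\rightarrow 2}}$, the subexponential contributions a term of order $\gamma_2\zag{\boldsymbol{\mathcal{B}}, \norma{\cdot}_{2\rightarrow 2}}^2$, and the starting point $\pi_0$ contributes the residual $d_F\zag{\boldsymbol{\mathcal{B}}} d_{2\rightarrow 2}\zag{\boldsymbol{\mathcal{B}}}$ term; these combine to give $c_1 E$. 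Running the chaining uniformly in $t$ (rather than only in expectation) with a Chernoff-style summation of the failure probabilities level by level yields the advertised mixed tail $2\exp\zag{-c_2 \min\skup{t^2/V^2, t/U}}$. The main obstacle is arranging the two-regime scales so that the final complexity term is purely $\gamma_2$ and not a combination of $\gamma_1$ and $\gamma_2$; this is the technical novelty of Dirksen's refinement over \cite{krmera14} and rests critically on the observation that both increment metrics factor through $\norma{\cdot}_{2\rightarrow 2}$ on $\boldsymbol{\mathcal{B}}$.
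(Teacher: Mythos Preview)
The paper does not prove this theorem at all: it is quoted verbatim as \cite[Theorem~6.5]{Dirksen13}, described as ``a slightly refined version of the main result of \cite{krmera14}'', and used as a black box in the proof of Theorem~\ref{glavni}. So there is no ``paper's own proof'' to compare against; your sketch is an outline of the argument in the cited references, and in that respect it is on target: Hanson--Wright for a single matrix, Bernstein-type increment bounds with both regimes governed by $\norma{\cdot}_{2\to 2}$, and generic chaining with an admissible sequence achieving $\gamma_2(\boldsymbol{\mathcal{B}},\norma{\cdot}_{2\to 2})$.

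One small correction on attribution: the fact that only a single $\gamma_2$-functional (and no $\gamma_1$) appears in $E$ is already in Krahmer--Mendelson--Rauhut \cite{krmera14}; it follows, as you correctly identify, from both increment scales factoring through $\norma{\mbf{B}_1-\mbf{B}_2}_{2\to 2}$. Dirksen's refinement is rather in the variance parameter: \cite{krmera14} has $V$ on the order of $d_F(\boldsymbol{\mathcal{B}})\, d_{2\to 2}(\boldsymbol{\mathcal{B}})$, whereas \cite{Dirksen13} sharpens this to $V = d_4^2(\boldsymbol{\mathcal{B}})$ via a more careful treatment of the fluctuation at the chaining endpoint. Your step~1 already isolates the right quantity $\norma{\mbf{B}^T\mbf{B}}_F = \norma{\mbf{B}}_{S_4}^2$, so the ingredients are in place; you have just labelled the wrong one as the novelty.
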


\begin{proof}[Proof of Theorem~\ref{glavni}]
We write 
$$
\mathcal{A}\zag{\mathbf{X}}=\mathbf{V}_{\mathbf{X}}\boldsymbol{\xi},
$$
where $\boldsymbol{\xi}$ is an $L$-subgaussian random vector of length $n_1 n_2\cdots n_d m$ and 
$\mathbf{V}_{\mathbf{X}}$ is the $m \times  n_1n_2\cdots n_d m$ block-diagonal matrix 
$$ \mathbf{V}_{\mathbf{X}}=\frac{1}{\sqrt{m}}
\begin{bmatrix}
\mathbf{x}^T & \mbf{0} & \cdots & \mbf{0} \\
\mbf{0} & \mathbf{x}^T & \cdots & \mbf{0} \\
\vdots & \vdots & \ddots & \vdots \\
\mbf{0} &\mbf{0} & \cdots  & \mathbf{x}^T
\end{bmatrix},
$$
with $\mathbf{x}$ being the vectorized version of the tensor $\mathbf{X}$. 
With this notation the restricted isometry constant is given by
\[
\delta_{\mathbf{r}} = \sup_{\mathbf{X} \in \boldsymbol{\mathcal{T}}} \left| \|  \mathbf{V}_{\mathbf{X}} \boldsymbol{\xi}\|_2^2- \E \|\mathbf{V}_{\mathbf{X}} \boldsymbol{\xi}\|_2^2 \right|,
\]
where in the HOSVD case $\boldsymbol{\mathcal{T}}=\boldsymbol{\mathcal{S}}_{\mbf{r}}=\skup{\mbf{X} \in \Rd : \rank_{\text{HOSVD}}\zag{\mbf{X}}\leq\mbf{r}, \norma{\mbf{X}}_F=1}$, 
and in the HT-case (including the TT case) $\bm{\mathcal{T}}=\bm{\mathcal{S}}_{\mbf{r}}^{\text{HT}}=\skup{\mbf{X} \in \Rd: \rank_{\text{HT}}\zag{\mbf{X}} \leq \mbf{r}, \norma{\mbf{X}}_{F}=1} $.
Theorem~\ref{chaos} provides a general probabilistic bound for expressions in the form of the right hand side above in terms of the 
diameters $d_F(\boldsymbol{\mathcal{B}})$, $d_{2\rightarrow 2}(\boldsymbol{\mathcal{B}})$, and $d_{4}(\boldsymbol{\mathcal{B}})$ of the set $\boldsymbol{\mathcal{B}}:=\skup{\mathbf{V}_{\mathbf{X}}:\mathbf{X} \in \boldsymbol{\mathcal{T}}}$, as well as in terms of Talagrand's functional $\gamma_2(\boldsymbol{\mathcal{B}}, \norma{\cdot}_{2 \rightarrow 2})$.
It is straightforward to see that $d_F(\boldsymbol{\mathcal{B}}) = 1$, since $\norma{\mbf{X}}_F=1$, for all $\mbf{X} \in \boldsymbol{\mathcal{T}}$. Furthermore, for all $\mbf{X} \in \boldsymbol{\mathcal{T}}$,
\begin{equation}\label{VxVxT}
m\mbf{V}_{\mbf{X}} \mbf{V}_{\mbf{X}}^T= 
\begin{bmatrix}
\mbf{x}^T\mbf{x} & \mbf{0} & \cdots & \mbf{0} \\
\mbf{0} & \mbf{x}^T\mbf{x} & \cdots & \mbf{0} \\
\vdots & \vdots & \ddots & \vdots \\
\mbf{0} & \mbf{0} & \cdots & \mbf{x}^T \mbf{x} \\
 \end{bmatrix}=
\begin{bmatrix}
\norma{\mbf{x}}_2^2 & \mbf{0} & \cdots & \mbf{0} \\
\mbf{0} & \norma{\mbf{x}}_2^2 & \cdots & \mbf{0} \\
\vdots & \vdots & \ddots & \vdots \\
\mbf{0} & \mbf{0} & \cdots & \norma{\mbf{x}}_2^2 \\
 \end{bmatrix}=
\mbf{I}_m,
\end{equation}
so that $\norma{\mbf{V}_{\mbf{X}}}_{2\rightarrow 2}=\frac{1}{\sqrt{m}}$ 
and $d_{2\rightarrow 2}(\boldsymbol{\mathcal{B}})=\frac{1}{\sqrt{m}}$.  
(Since 
the operator norm of a block-diagonal matrix is the maximum of the operator norm of its diagonal blocks
we obtain
\begin{equation}\label{gammaoperator2}
\norma{\mbf{V}_{\mbf{X}}}_{2 \rightarrow 2} =\frac{1}{\sqrt{m}} \norma{\mbf{x}}_2= \frac{1}{\sqrt{m}} \norma{\mbf{X}}_F.)
\end{equation}
From the cyclicity of the trace and \eqref{VxVxT} it follows that
\begin{equation}
\norma{\mbf{V}_{\mbf{X}}}_{S_4}^4=\tr\left[\zag{\mbf{V}_{\mbf{X}}^T\mbf{V}_{\mbf{X}}}^2\right]=
\tr\left[\zag{\mbf{V}_{\mbf{X}}\mbf{V}_{\mbf{X}}^T}^2\right]=
\tr\left[\zag{\frac{1}{m}\mbf{I}_m}^2\right]=\tr\zag{\frac{1}{m^2}\mbf{I}_m}=\frac{1}{m},
\end{equation}
for all $\mbf{V}_{\mbf{X}} \in \bm{\mathcal{B}}$. Thus, $d_4^2\zag{\bm{\mathcal{B}}}=\sup_{\mbf{V}_{\mbf{X}} \in \bm{\mathcal{B}}} \norma{\mbf{V}_{\mbf{X}}}_{S_4}^2=\frac{1}{\sqrt{m}}$.
Using observation \eqref{gammaoperator2}, the bound of the $\gamma_2$-functional via the Dudley type integral \eqref{gamma2} 
yields
\begin{equation} \label{gammaboundweuse}
\gamma_2\zag{\boldsymbol{\mathcal{B}}, \norma{\cdot}_{2 \rightarrow 2}} \leq  C \frac{1}{\sqrt{m}} \int_{0}^{1} \sqrt{\log\zag{\mathcal{N}\zag{\boldsymbol{\mathcal{S}}_{\mathbf{r}}, \norma{\cdot}_F,u}}}\, du,
\end{equation}
where $\boldsymbol{\mathcal{S}}_{\mathbf{r}}$ is replaced by $\boldsymbol{\mathcal{S}}_{\mathbf{r}}^{\text{HT}}$ in the HT case.

Let us first continue with the HOSVD case.
Using the bound \eqref{eqcovering} for $\mathcal{N}\zag{\boldsymbol{\mathcal{S}}_{\mathbf{r}},\norma{\cdot}_F,u}$ and the triangle inequality we reach
\begin{align}
\gamma_2\zag{\boldsymbol{\mathcal{B}}, \norma{\cdot}_{2 \rightarrow 2 }} & \leq C \frac{1}{\sqrt{m}} \int_{0}^1 \sqrt{\zag{r_1r_2\cdots r_d + \sum_{i=1}^d n_ir_i} \log\zag{3\zag{d+1}/u}} \,du \nonumber \\
& = C \sqrt{\frac{r_1r_2 \cdots r_d + \sum_{i=1}^dn_ir_i}{m}} \int_0^1 \sqrt{\log\zag{d+1} + \log\zag{3/u}} \, du \nonumber\\
& \leq C \sqrt{\frac{r_1r_2 \cdots r_d + \sum_{i=1}^dn_ir_i}{m}} \zag{ \sqrt{\log\zag{d+1}} + \int_0^1\sqrt{\log\zag{3/u}} \, du }\nonumber\\
&  \leq \tilde{C} \sqrt{\frac{\zag{r_1r_2\cdots r_d+ \sum_{i=1}^{d}n_ir_i}\log \zag{d}}{m}} \leq \tilde{C} \sqrt{\frac{\zag{r^d+dnr}\log(d)}{m}},\label{gammaHOSVD}
\end{align}
where $r:=\max\skup{r_i: i \in \uglate{d}}$ and $n:=\max\skup{n_i:i \in \uglate{d}}$.

Let us now consider the HT case (including the TT case).
Using  the bound \eqref{gammaboundweuse} of the $\gamma_2$-functional via  Dudley type integral 
and the covering number bound \eqref{coveringHTmin} for $\mathcal{N}\zag{\boldsymbol{\mathcal{S}}^{\text{HT}}_{\mathbf{r}},\norma{\cdot}_F,u}$, we obtain
\begin{align}
\gamma_2\zag{\bm{\mathcal{B}}, \norma{\cdot}_{2 \rightarrow 2}}
& \leq  C \frac{1}{\sqrt{m}} \int_{0}^{1} \sqrt{\log\zag{\mathcal{N}\zag{\bm{\mathcal{S}}^{\text{HT}}_{\mathbf{r}}, \norma{\cdot}_F,u}}} \,du \notag\\
& \leq C \frac{1}{\sqrt{m}}\sqrt{\sum_{t \in \mathcal{I}\zag{T_I}}r_t r_{t_1} r_{t_2} + \sum_{i=1}^d r_i n_i} \cdot \int_{0}^{1}\sqrt{\log \zag{3(2d-1)\sqrt{r}/u}} \,du.\notag\\
&\leq \tilde{C}_1 \sqrt{\frac{\zag{\sum_{t \in \mathcal{I}(T_I)} r_{t}r_{t_1} r_{t_2} + \sum_{i=1}^d r_i n_i }\log \zag{(2d-1)\sqrt{r}}}{m}} \nonumber\\
 &\leq \tilde{C}_1 \sqrt{\frac{\zag{(d-1)r^3+dnr}\log \zag{(2d-1)\sqrt{r}}}{m}}.\label{gammaTT}
\end{align}
In order to apply Theorem~\ref{chaos} we note that 
\begin{align*}
E& = \gamma_2\zag{\boldsymbol{\mathcal{B}}, \norma{\cdot}_{2 \rightarrow 2}} \zag{\gamma_2\zag{\boldsymbol{\mathcal{B}}, \norma{\cdot}_{2 \rightarrow 2}} + d_F\zag{\boldsymbol{\mathcal{B}}}}+d_F\zag{\boldsymbol{\mathcal{B}}}d_{2 \rightarrow 2}\zag{\boldsymbol{\mathcal{B}}}\\&= \gamma_2^2\zag{\boldsymbol{\mathcal{B}}, \norma{\cdot}_{2 \rightarrow 2}} + \gamma_2\zag{\boldsymbol{\mathcal{B}}, \norma{\cdot}_{2 \rightarrow 2}} + \frac{1}{\sqrt{m}}, \\
V& = d_4^2\zag{\boldsymbol{\mathcal{B}}}=\frac{1}{\sqrt{m}},
\qquad U = d_{2\rightarrow 2}^2\zag{\boldsymbol{\mathcal{B}}}=\frac{1}{m}.
\end{align*}
The bound on $m$ of Theorem~\ref{glavni}  
ensures that $c_1 E \leq \delta/2$ and that 
$2\exp\zag{-c_2 \min\skup{\frac{t^2}{V^2}, \frac{t}{U}}} \leq \varepsilon$ with $t = \delta/2$ (provided constants are chosen appropriately).
Therefore, the claim follow from Theorem~\ref{chaos}.
\end{proof}

\section{Random Fourier measurements}
\label{Sec:Fourier}

While subgaussian measurements often provide benchmark guarantees in compressive sensing and low rank recovery
in terms of the minimal number of required measurements, they lack of any structure and therefore are of limited use in practice.
In particular, no fast multiplication routines are available for them.
In order to overcome such limitations, structured random measurement matrices have been studied in 
compressive sensing \cite{ra10,fora13,krra14,carota06}
and low rank matrix recovery \cite{care09,cata10,forawa11,krra14} and almost optimal recovery guarantees have been shown.

In this section, we extend one particular construction of a randomized Fourier transform from the matrix case \cite[Section 1]{forawa11} 
to the tensor case. The measurement map 
\[
\mathcal{A}:\Cd \rightarrow \C^m, \quad \mathcal{A} = \frac{1}{\sqrt{m}} \mathcal{R}_{\bm{\varOmega}} \mathcal{F}_d \mathcal{D}
\]
is the composition of a random sign flip map $\mathcal{D} : \Cd \to \Cd$ defined componentwise as $\mathcal{D}(\mbf{X})\zag{j_1,\hdots,j_d} = \epsilon_{j_1,\hdots,j_d} \mbf{X}\zag{j_1,\hdots,j_d}$
with the $\epsilon_{j_1,\hdots,j_d}$ being independent $\pm 1$ Rademacher variables, a $d$-dimensional Fourier transform
\[
\mathcal{F}_d : \Cd \to \Cd, \quad \mathcal{F}_d(\mbf{X})\zag{j_1,\hdots,j_d} =  
\sum_{k_1=1}^{n_1} \cdots \sum_{k_d=1}^{n_d}\mbf{X}\zag{k_1,\hdots,k_d} e^{-2 \pi i \sum_{\ell=1}^d \frac{k_\ell j_\ell}{n_\ell}},
\]
and a random subsampling operator $\mathcal{R}_{\bm{\varOmega}} : \Cd \to \C^{\bm{\varOmega}}=\C^m$, $\mathcal{R}_{\bm{\varOmega}}(\mbf{X})_{\mathbf{j}} = \mbf{X}\zag{\mathbf{j}}$ for
$\mathbf{j} \in \bm{\varOmega} \subset [n_1] \times \cdots \times [n_d]$, where $\bm{\varOmega}$ is selected uniformly at random
among all subsets of $[n_1] \times \cdots \times [n_d]$ of cardinality $m$. Instead of the $d$-dimensional Fourier transform, we can also use the $1$-dimensional Fourier transform applied to the vectorized version of a tensor $\mbf{X}$ without changes in the results below. Since the Fourier transform can be applied 
quickly in $\mathcal{O}(n^d \log^d n)$, $n = \max{\{n_\ell : \ell \in [d]\}}$, operations using the FFT, the map $\mathcal{A}$ runs with this computational complexity -- as opposed to the trivial running time of $\mathcal{O}(n^{2d})$ for unstructured measurement maps.
By vectorizing tensors in $\Cd$, the map $\mathcal{A}$ can be written as a partial random Fourier matrices with randomized column signs.

The randomized Fourier map $\mathcal{A}$ satisfies the TRIP for an almost optimal number of measurements as shown
by the next result.
\begin{theorem}\label{FourierTRIP}
Let $\mathcal{A}: \Cd \rightarrow \C^m$ be the randomized Fourier map described above.
Then $\mathcal{A}$ satisfies the TRIP with tensor restricted isometry constant $\delta_{\mbf{r}}$ with probability 
exceeding $1-2e^{-\eta}$ as long as
\begin{equation}\label{m:Fourier}
m \geq C\delta_{\mbf{r}}^{-1}\zag{1+\eta} \log^2(n^d) \max\skup{\delta_{\mbf{r}}^{-1}\zag{1+\eta} \log^2(n^d), f(n,d,r)},
\end{equation}
where 
\begin{align*}
&f(n,d,r)=\zag{r^d+dnr}\log\zag{d} \quad \text{ for the HOSVD case },\\
& f(n,d,r)=\zag{dr^3+dnr}\log\zag{dr} \quad \text{ for  the TT and HT case},
\end{align*}
 $n=\max \skup{n_i: i \in \uglate{d}}$ and $r=\max\skup{ r_t: t \in T_I}$. 
\end{theorem}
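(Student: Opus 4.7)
The plan is to reduce Theorem~\ref{FourierTRIP} to the restricted isometry theorem for randomly subsampled bounded orthonormal systems with random column signs, and then to combine that with the covering number estimates already established in Section~\ref{Sec:TRIP}. Identifying $d$-th order tensors in $\Cd$ with their vectorizations in $\C^N$, $N = n_1 n_2 \cdots n_d$, the measurement map takes the form
\[
\mathcal{A}\zag{\mbf{X}} = \tfrac{1}{\sqrt{m}}\, \mbf{R}_{\bm{\varOmega}}\, \mbf{F}\, \mbf{D}_{\epsilon}\, \vecc\zag{\mbf{X}},
\]
where $\mbf{F} \in \C^{N \times N}$ is the unitary tensorized DFT (whose entries have modulus $1/\sqrt{N}$), $\mbf{D}_{\epsilon}$ is diagonal with i.i.d.\ Rademacher entries, and $\mbf{R}_{\bm{\varOmega}}$ subsamples $m$ rows uniformly at random. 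Writing $\bm{\mathcal{T}}$ for the set of vectorized unit-norm rank-$\mbf{r}$ tensors (in the HOSVD, TT, or HT format), the TRIP constant of $\mathcal{A}$ is
\[
\delta_{\mbf{r}} = \sup_{\mbf{x} \in \bm{\mathcal{T}}} \aps{\, \norma{\mathcal{A}\zag{\mbf{x}}}_2^2 - \norma{\mbf{x}}_2^2\, }.
\]

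Next, I would invoke the general RIP bound of Krahmer, Mendelson and Rauhut \cite{krmera14} (see also \cite{DBLP:journals/corr/Dirksen14}) for such twisted partial bounded orthonormal systems. For any $\eta > 0$, that result gives, with probability at least $1 - 2 e^{-\eta}$,
\[
\delta_{\mbf{r}} \leq C \zag{ \sqrt{\tfrac{(1+\eta)\log^2(N)\,\gamma_2^2\zag{\bm{\mathcal{T}},\norma{\cdot}_2}}{m}} + \tfrac{(1+\eta)\log^2(N)}{\sqrt{m}} },
\]
up to universal constants, where $\gamma_2$ is Talagrand's functional. This is the structured counterpart of Theorem~\ref{chaos} used in the subgaussian analysis, and the two summands correspond respectively to the variance-type term and the subgaussian tail term coming from the coherence $1/\sqrt{N}$ of $\mbf{F}$.

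The third step is to bound $\gamma_2\zag{\bm{\mathcal{T}},\norma{\cdot}_2}$ by Dudley's inequality \eqref{gamma2} using the covering number estimates of Lemma~\ref{covering} (HOSVD case) and Lemma~\ref{coveringHT} (TT and HT case), carried out exactly as in the derivations of \eqref{gammaHOSVD} and \eqref{gammaTT} in the proof of Theorem~\ref{glavni}. This yields
\[
\gamma_2\zag{\bm{\mathcal{T}},\norma{\cdot}_2} \leq C' \sqrt{f(n,d,r)}
\]
with $f(n,d,r)$ as in the theorem statement. Plugging this bound into the structured RIP estimate and requiring each summand to be at most $\delta_{\mbf{r}}/2$ translates directly into the two conditions $m \gtrsim \delta_{\mbf{r}}^{-2}(1+\eta)^2\log^4(n^d)$ and $m \gtrsim \delta_{\mbf{r}}^{-1}(1+\eta)\log^2(n^d)\, f(n,d,r)$, which together are exactly captured by the maximum in \eqref{m:Fourier}.

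The main obstacle is the clean application of the Krahmer--Mendelson--Rauhut theorem, which is usually formulated for sparse vectors or low-rank matrices, to the abstract set $\bm{\mathcal{T}}$ of vectorized low-rank tensors. Concretely, one must verify that the low-rank-tensor structure enters the final estimate only through the $\gamma_2$-functional (so that the geometric complexity is governed by the covering numbers of Section~\ref{Sec:TRIP}), and that the Fourier coherence $1/\sqrt{N}$ contributes only the benign overhead $\log^2\zag{n^d}$ rather than anything scaling with $r$ or $d$. Once this structural reduction is in place, the remainder is a routine Dudley-integral computation that reuses the covering number bounds developed for the subgaussian case.
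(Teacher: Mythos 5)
Your overall architecture coincides with the paper's: identify $\delta_{\mbf{r}}$ as a supremum of $|\,\|\mathcal{A}(\mbf{X})\|_2^2-\|\mbf{X}\|_F^2\,|$ over the unit-norm low-rank set $\bm{\mathcal{T}}$, invoke a black-box set-restricted-isometry theorem for the randomly signed, subsampled Fourier ensemble whose sample complexity is governed by $\gamma_2(\bm{\mathcal{T}},\|\cdot\|_F)$, and then recycle the Dudley/covering-number bounds \eqref{gammaHOSVD} and \eqref{gammaTT} to get $\gamma_2(\bm{\mathcal{T}},\|\cdot\|_F)\lesssim \sqrt{f(n,d,r)}$. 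That is exactly the paper's proof. The gap is in the key external input. The chaos bound of \cite{krmera14} (Theorem~\ref{chaos} here) controls $\sup_{\mbf{B}}|\,\|\mbf{B}\boldsymbol{\xi}\|_2^2-\E\|\mbf{B}\boldsymbol{\xi}\|_2^2\,|$ for a subgaussian vector $\boldsymbol{\xi}$; if you condition on $\bm{\varOmega}$ and let $\boldsymbol{\xi}$ be the Rademacher signs, the induced matrices are $\mbf{V}_{\mbf{X}}=\tfrac{1}{\sqrt{m}}\mbf{R}_{\bm{\varOmega}}\mbf{F}\diag(\vecc(\mbf{X}))$, and their operator norm is governed by $\|\vecc(\mbf{X})\|_\infty$ rather than by $\|\mbf{X}\|_F/\sqrt{m}$. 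For a rank-one coordinate tensor (unit Frobenius norm, admissible for every $\mbf{r}$) one gets $\|\mbf{V}_{\mbf{X}}\|_{2\to 2}=1$, so $d_{2\to2}(\bm{\mathcal{B}})\asymp d_F(\bm{\mathcal{B}})=1$, the quantity $E$ in Theorem~\ref{chaos} does not decay with $m$, and the theorem yields nothing. There is no ``twisted bounded orthonormal system'' version in \cite{krmera14}; the correct tool is the Krahmer--Ward mechanism extended to arbitrary sets by Oymak, Recht and Soltanolkotabi \cite{DBLP:journals/corr/OymakRS15, krwa11}, which is precisely the paper's Theorem~\ref{THFourier}. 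Your displayed two-term inequality is morally of that shape, but it is not a consequence of the reference you invoke, and the ``main obstacle'' you flag (that the set enters only through $\gamma_2$ and the coherence only through $\log$ factors) is exactly what that theorem, not \cite{krmera14}, supplies.

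A smaller point: your final translation has an algebra slip. Requiring your first summand to be at most $\delta_{\mbf{r}}/2$ gives $m\gtrsim \delta_{\mbf{r}}^{-2}(1+\eta)\log^2(n^d)\, f(n,d,r)$, not $\delta_{\mbf{r}}^{-1}(1+\eta)\log^2(n^d)\, f(n,d,r)$, so the two-summand bound you wrote does not literally reproduce the two branches of the maximum in \eqref{m:Fourier}. To land on the stated condition you have to argue from the exact form of Theorem~\ref{THFourier} (with $d_F(\bm{\mathcal{T}})=1$) and carry out the case distinction the paper alludes to, rather than splitting your ad hoc bound into halves.
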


To prove Theorem~\ref{FourierTRIP} we use a special case of Theorem~3.3 in \cite{DBLP:journals/corr/OymakRS15} for 
the partial Fourier matrix with randomized column signs, which generalizes the main result of \cite{krwa11}.
Using that the Gaussian width of a set $T$ is equivalent to $\gamma_2(T,\|\cdot\|_2)$ by 
Talagrand's majorizing theorem \cite{talagrand2001, talagrand2005}, this result reads in our notation as follows.

\begin{theorem}\label{THFourier}
Let $\bm{\mathcal{T}} \subset \Cd$ and let $\mathcal{A} : \Cd \to \C^m$ be the randomized Fourier map as described above.
Then for $0<\delta<1$
$$ \sup_{\mbf{X} \in \mathcal{T}} \aps{\norma{\mathcal{A}(\mbf{X})}_2^2-\norma{\mbf{X}}_2^2} \leq \delta \cdot \zag{d_F\zag{\bm{\mathcal{T}}}}^2,$$
holds with probability at least $1-2e^{-\eta}$ as long as
\begin{equation}\label{m:Fourier:abstract}
m \geq C \delta^{-2} \zag{1+\eta}^2\zag{\log(n_1\cdots n_d)}^4 \max\skup{1,\frac{\gamma_2\zag{\bm{\mathcal{T}}, \norma{\cdot}_F}}{\zag{d_F\zag{\bm{\mathcal{T}}}}^2}}.
\end{equation}
\end{theorem}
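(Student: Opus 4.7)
The plan is to deduce this directly from Theorem~3.3 of Oymak--Recht--Soltanolkotabi \cite{DBLP:journals/corr/OymakRS15}, which establishes exactly this kind of uniform deviation bound for partial DFT matrices composed with random column-sign flips, but phrased in terms of the Gaussian width $w(\mathcal{T}):=\E\sup_{\mbf{X}\in\mathcal{T}}\langle\mbf{G},\mbf{X}\rangle$ of the signal set (with $\mbf{G}$ a standard Gaussian tensor). Talagrand's majorizing measure theorem \cite{talagrand2001,talagrand2005} asserts $w(\mathcal{T})\asymp\gamma_2(\mathcal{T},\|\cdot\|_F)$ up to absolute constants, and combining these two facts delivers Theorem~\ref{THFourier} verbatim. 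Since Theorem~\ref{chaos} is stated in terms of $\gamma_2$, it is convenient to carry out the final matching in that language rather than in terms of Gaussian width.

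For completeness I would sketch the internal argument of \cite{DBLP:journals/corr/OymakRS15}. First vectorize to $\C^{N}$ with $N=n_1\cdots n_d$, so that $\mathcal{A}$ becomes $m^{-1/2}\mathcal{R}_{\bm{\varOmega}}F_d D_\epsilon$, where $F_d$ denotes the $d$-dimensional DFT matrix and $D_\epsilon=\diag(\epsilon)$ the Rademacher sign diagonal. The pivotal structural fact is the incoherence estimate $\|F_d\|_{\max}\leq 1$. Expanding
\[
\|\mathcal{A}(\mbf{X})\|_2^2-\|\mbf{X}\|_F^2
\]
as a mean-zero quadratic chaos in $\epsilon$ (with matrix depending on $\bm{\varOmega}$ and $\mbf{X}$) plus a mean-zero sum in the sampling indicators, one applies a Hanson--Wright / suprema-of-chaos concentration inequality in the spirit of Theorem~\ref{chaos} to obtain single-point tail bounds; the randomness in $\bm{\varOmega}$ is folded in either conditionally on $\epsilon$ or via a Matrix Bernstein step.

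The uniform bound over $\mbf{X}\in\mathcal{T}$ is then obtained by a two-level (``bowling scheme'') generic chaining, whose increments are controlled by the chaos bound and whose metric on $\mathcal{T}$ is linked to the operator-norm metric on the chaos matrices by means of the incoherence $\|F_d\|_{\max}\leq 1$, each such translation costing a factor of $\sqrt{\log N}$. I expect the main obstacle to be the sharpness of the polylogarithmic factor: a one-level chaining produces a suboptimal $\log^{6}(N)$ or worse, whereas the $\log^{4}(N)=(\log(n_1\cdots n_d))^{4}$ appearing in \eqref{m:Fourier:abstract} requires the refined two-scale chaining of \cite{DBLP:journals/corr/OymakRS15}. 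The expectation bound is then upgraded to the high-probability tail $1-2e^{-\eta}$ by Talagrand's concentration for suprema of empirical processes, and the final cosmetic step through Talagrand's majorizing measure theorem replaces the Gaussian width by $\gamma_2(\mathcal{T},\|\cdot\|_F)$, producing the statement as written.
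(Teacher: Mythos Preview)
Your proposal is correct and matches the paper's approach exactly: the paper simply states that Theorem~\ref{THFourier} is a special case of Theorem~3.3 in \cite{DBLP:journals/corr/OymakRS15}, with Talagrand's majorizing measure theorem used to pass from Gaussian width to $\gamma_2$, and provides no further argument. Your additional sketch of the internal chaining argument goes well beyond what the paper offers, but the core deduction is identical.
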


\begin{proof}[Proof of Theorem~\ref{FourierTRIP}]
We use $\bm{\mathcal{T}} = \bm{\mathcal{S}}_{\mbf{r}}$ and $\bm{\mathcal{T}} = \bm{\mathcal{S}}_{\mbf{r}}^{\text{HT}}$ and
recall that $d_F(\bm{\mathcal{T}}) = 1$. Moreover, $\gamma_2(\bm{\mathcal{T}}, \|\cdot\|_F)$ has been estimated in \eqref{gammaHOSVD} and
\eqref{gammaTT}. By distinguishing cases, one then verifies that \eqref{m:Fourier} implies \eqref{m:Fourier:abstract} so that
Theorem~\ref{THFourier} implies Theorem~\ref{FourierTRIP}.
\end{proof}

Using recent improved estimates for the standard RIP for random partial Fourier matrices \cite{bo14,DBLP:conf/soda/HavivR16} 
in connection with techniques from
\cite{DBLP:journals/corr/OymakRS15} it may be possible to improve Theorem~\ref{THFourier} and thereby \eqref{m:Fourier} 
in terms of logarithmic factors.

\section{Numerical results}\label{sec:Numerics}
\label{Sec:Numerics}

We present numerical results for recovery of third order tensors $\mathbf{X} \in \R^{n_1 \times n_2 \times n_3}$ 
and the HOSVD format which illustrate
that tensor iterative hard thresholding works very well despite the fact that we only have a partial recovery result.
We ran experiments for both versions (CTIHT and NTIHT) of the algorithm and for Gaussian random measurement maps,
randomized Fourier measurement maps (where $\mbf{X} \in \C^{n_1 \times n_2 \times n_3}$), and tensor completion, i.e., recovery from randomly chosen entries of the tensor.
(No theoretical investigations are yet available for the latter scenario).

For other related numerical results, we refer to papers \cite{demoraisgoulart:hal-01132367, GengAcc}, where they have considered a slightly different versions of the tensor iterative hard thresholding algorithm and compared it with NTIHT.

We consider recovery of a cubic tensor, i.e., $n_1=n_2=n_3=10$, with equal and unequal ranks of its unfoldings, respectively, (first and second experiment) and of a non-cubic tensor $\mathbf{X} \in \R^{6 \times 10 \times 15}$ with equal ranks of the unfoldings, i.e., 
$r_1=r_2=r_3=r$ (third experiment). For fixed tensor dimensions $n_1 \times n_2 \times n_3$, 
fixed HOSVD-rank $\mathbf{r}=\zag{r_1,r_2,r_3}$ 
and a fixed number of measurements $m$ we performed $200$ simulations. 
We say that an algorithm successfully recovers the original tensor $\mathbf{X}_0$ if the reconstruction 
$\mathbf{X}^{\#}$ satisfies $\norma{\mathbf{X}_0-\mathbf{X}^{\#}}_F<10^{-3}$ for Gaussian measurement maps and Fourier measurement ensembles, and  
$\mathbf{X}^{\#}$ such that $\norma{\mathbf{X}_0-\mathbf{X}^{\#}}_F<2.5\cdot10^{-3}$ for tensor completion.
The algorithm stops in iteration $j$ if $\norma{\mathbf{X}^{j+1}-\mathbf{X}^{j}}_F<10^{-4}$ in which case we say that the algorithm converged, or it stops if it reached $5000$ iterations.

A Gaussian linear mapping $\mathcal{A}:\R^{n_1 \times n_2 \times n_3} \rightarrow \R^m$ is defined by tensors $\mathbf{A}_k \in \R^{n_1 \times n_2 \times n_3}$ via $\uglate{\mathcal{A}\zag{\mathbf{X}}}\zag{k}=\interval{\mathbf{X},\mathbf{A}_k}$, for all $k \in \uglate{m}$, where the entries of the tensors $\mathbf{A}_k$ are i.i.d.\ Gaussian $\mathcal{N}\zag{0,\frac{1}{m}}$.
The tensor $\mathbf{X}^0 \in \R^{n_1 \times n_2 \times n_3}$ of rank $\mathbf{r}=\zag{r_1,r_2,r_3}$ is generated via its Tucker decomposition $\mathbf{X}^0=\mathbf{S} \times_1 \mathbf{U}_1 \times_2 \mathbf{U}_2 \times_3 \mathbf{U}_3$: Each of the elements of the tensor $\mathbf{S}$ is taken independently from the normal distribution $\mathcal{N}\zag{0,1}$, and the components $\mathbf{U}_k \in \R^{n_k \times r_k}$ are the first $r_k$ left singular vectors of a matrix $\mathbf{M}_k \in \R^{n_k \times n_k}$ whose elements are also drawn independently from the normal distribution $\mathcal{N}\zag{0,1}$.

We have used the toolbox TensorLab \cite{TensorLab} 
for computing the HOSVD decomposition of a given tensor and the truncation operator $\mathcal{H}_r$. 
By exploiting the Fast Fourier Transform (FFT), the measurement operator $\mathcal{A}$ from Section~\ref{Sec:Fourier} 
related to the Fourier transform and its adjoint $\mathcal{A}^*$ can be applied efficiently which leads to reasonable
run-times for comparably large tensor dimensions, see Table~\ref{TableRuntimes}. 

The numerical results for low rank tensor recovery obtained via the NTIHT algorithm for Gaussian measurement maps are presented in Figures~\ref{figsim},~\ref{figsim2}, and~\ref{figsim3}.
In Figure~\ref{figsim} and~\ref{figsim2} we present the recovery results for low rank tensors of size $10 \times 10 \times 10$. The horizontal axis represents the number of measurements taken with respect to the number of degrees of freedom of an arbitrary tensor of this size. To be more precise, for a tensor of size $n_1 \times n_2 \times n_3$, the number $\bar{n}$ on the horizontal axis represents $m=\left\lceil n_1n_2n_2 \frac{\bar{n}}{100} \right\rceil$ measurements. The vertical axis represents the percentage of successful recovery.

Finally, in Table~\ref{TableRes} we present numerical results for third order tensor recovery via the CTIHT and the NTIHT algorithm. We consider Gaussian measurement maps, Fourier measurement ensembles, and tensor completion. With $m_0$ we denote the minimal number of measurements that are necessary to get full recovery and with $m_1$ we denote the maximal number of measurements for which we do not manage to recover any out of $200$ tensors.

\begin{figure}[!t]
\centering
\includegraphics[scale=0.6]{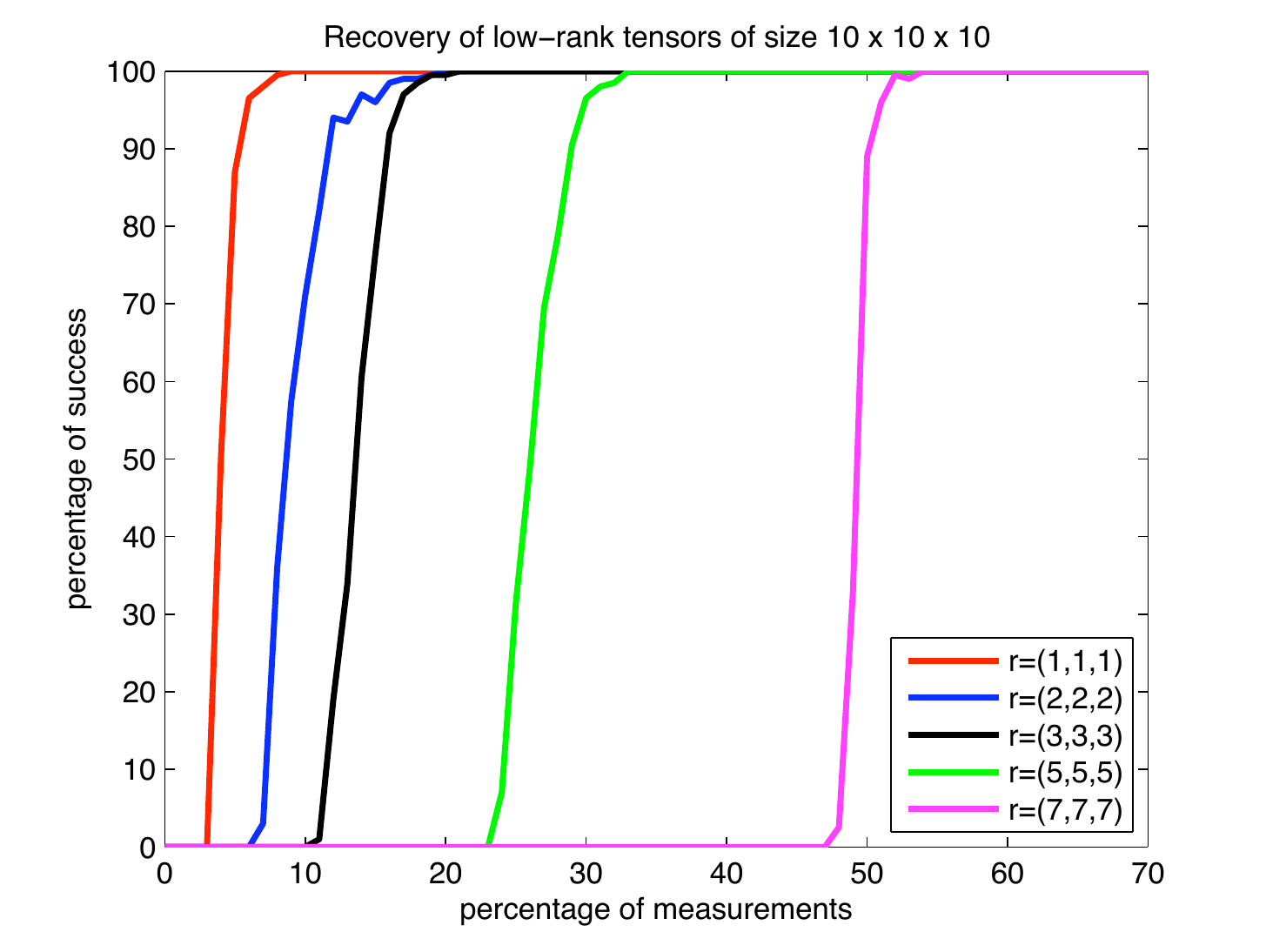}
\caption{Recovery of low rank 10 x 10 x 10 tensors of the same rank via NTIHT}
\label{figsim}
\end{figure}

\begin{figure}[!t]
\centering
\includegraphics[scale=0.6]{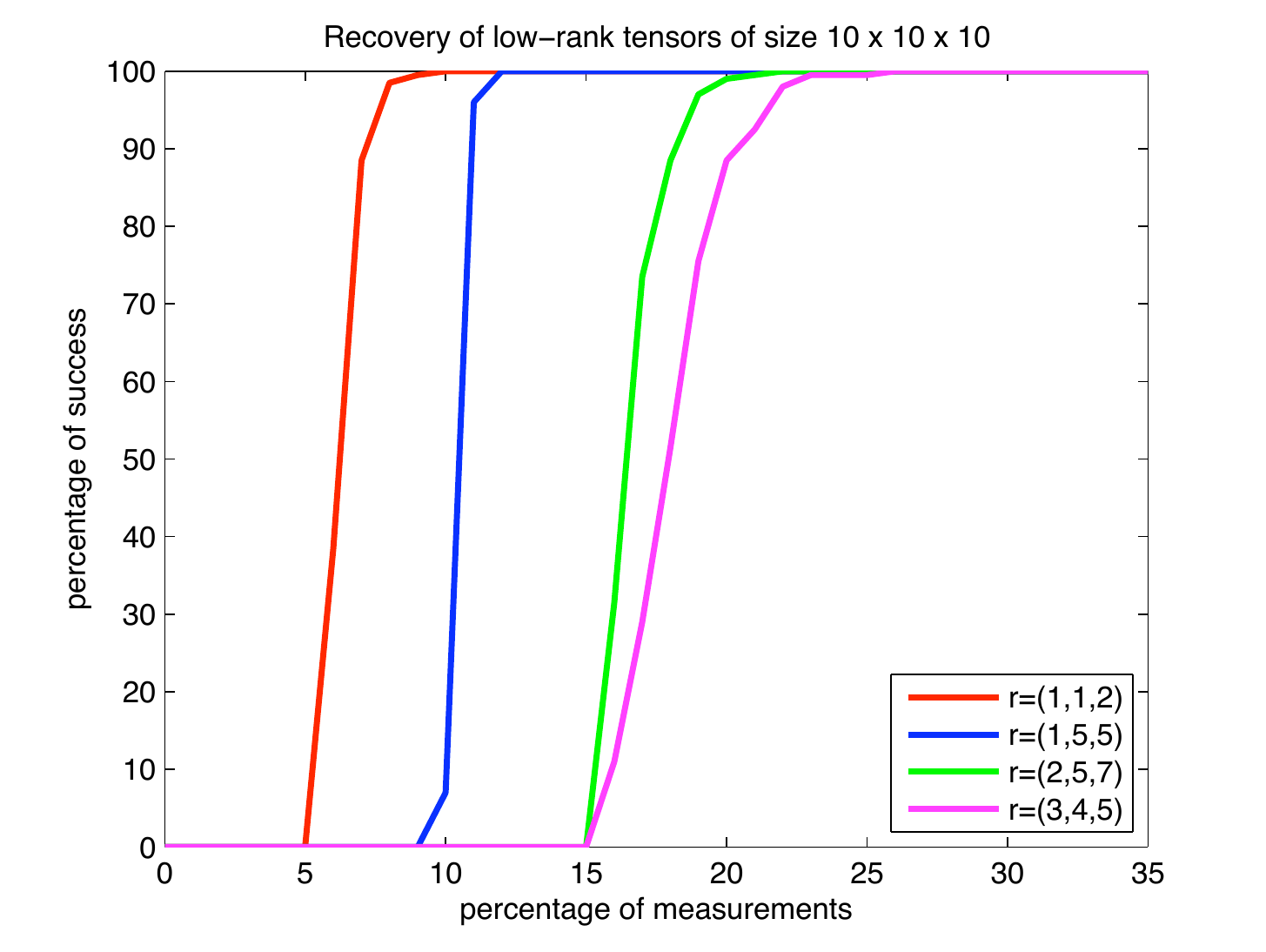}
\caption{Recovery of low rank $10 \times 10 \times 10$ tensors of a different rank via NTIHT}
\label{figsim2}
\end{figure}

\begin{figure}[!t]
\centering
\includegraphics[scale=0.6]{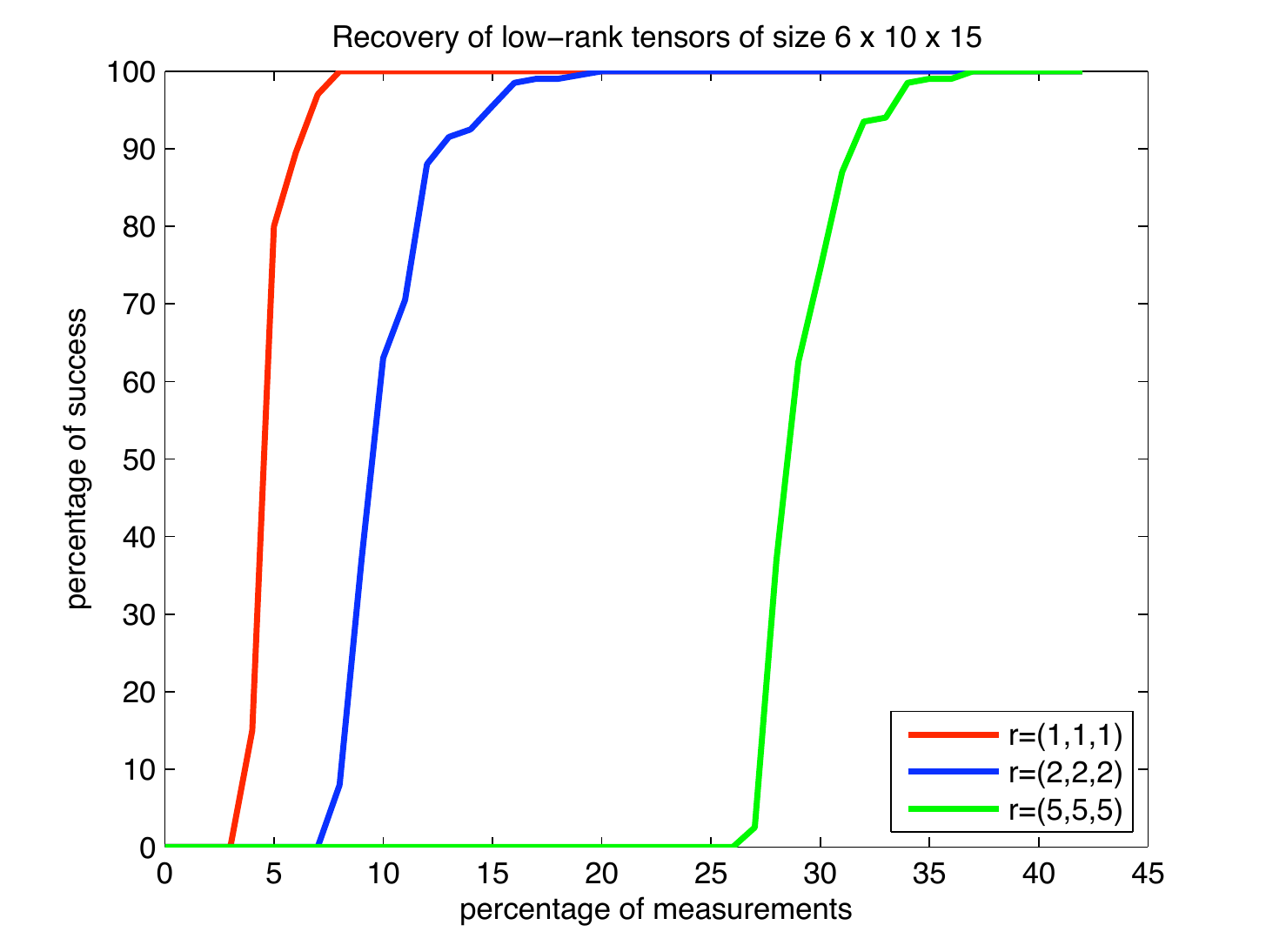}
\caption{Recovery of low rank $6 \times 10 \times 15$ tensors of a different rank via NTIHT}
\label{figsim3}
\end{figure}

\begin{table}\label{TableRes}
\begin{tabular}{ c c c c c c c}
\hline 
type & tensor dimensions & rank & NTIHT-$\overline{n}_{0}$ & NTIHT-$\overline{n}_1$ & CTIHT-$\overline{n}_0$ & CTIHT-$\overline{n}_1$ \\  \hline 
Gaussian & $10 \times 10 \times 10$ & $\zag{1,1,1}$ & $8$  &  $3$ &  $24$ &  $6$ \\
&$10 \times 10 \times 10$ & $\zag{2,2,2}$ & $20$ &  $6$ & $39$ &  $21$ \\
&$10 \times 10 \times 10$ & $\zag{3,3,3}$ & $21$ & $11$ & $60$ & $40$ \\
&$10 \times 10 \times 10$ & $\zag{5,5,5}$ & $33$ & $23$ & $-$ & $-$ \\
&$10 \times 10 \times 10$ & $\zag{7,7,7}$ & $53$ & $47$ & $-$ & $-$ \\
\hline
Gaussian & $10 \times 10 \times 10$ & $\zag{1,2,2}$ & $10$  &  $5$ &  $34$ &  $16$ \\
&$10 \times 10 \times 10$ & $\zag{1,5,5}$ & $12$ &  $9$ & $57$ &  $37$ \\
&$10 \times 10 \times 10$ & $\zag{2,5,7}$ & $20$ & $15$ & $83$ & $64$ \\
&$10 \times 10 \times 10$ & $\zag{3,4,5}$ & $23$ & $15$ & $83$ & $62$ \\
\hline
Gaussian & $6 \times 10 \times 15$ & $\zag{1,1,1}$ & $9$  &  $3$ &  $25$ &  $8$ \\
&$6 \times 10 \times 15$ & $\zag{2,2,2}$ & $20$ &  $7$ & $44$ &  $27$ \\
&$6 \times 10 \times 15$ & $\zag{5,5,5}$ & $34$ & $26$ & $-$ & $-$ \\
\hline

Fourier & $10 \times 10 \times 10$ & $\zag{1,1,1}$ & $16$  &  $3$ & $15$ & $8$ \\
&$10 \times 10 \times 10$ & $\zag{2,2,2}$ & $11$ &  $6$ & $25$ & $16$ \\
&$10 \times 10 \times 10$ & $\zag{3,3,3}$ & $16$ &  $14$ & $31$ & $26$ \\
&$10 \times 10 \times 10$ & $\zag{5,5,5}$ & $29$ &  $26$ & $43$ & $40$ \\
&$10 \times 10 \times 10$ & $\zag{7,7,7}$ & $51$ &  $48$ & $50$ & $49$ \\
\hline
Fourier & $10 \times 10 \times 10$ & $\zag{1,2,2}$ & $10$  &  $5$ &  $21$ &  $14$ \\
&$10 \times 10 \times 10$ & $\zag{1,5,5}$ & $16$ &  $12$ & $31$ & $25$ \\
&$10 \times 10 \times 10$ & $\zag{2,5,7}$ & $21$ & $18$ & $37$ & $33$ \\
&$10 \times 10 \times 10$ & $\zag{3,4,5}$ & $21$ & $18$ & $37$ & $33$ \\
\hline
Fourier & $6 \times 10 \times 15$ & $\zag{1,1,1}$ & $12$  &  $3$ &  $16$ &  $9$ \\
&$6 \times 10 \times 15$ & $\zag{2,2,2}$ & $13$ &  $8$ & $25$ &  $20$ \\
&$6 \times 10 \times 15$ & $\zag{5,5,5}$ & $32$ & $29$ & $45$ & $42$ \\
\hline

completion & $10 \times 10 \times 10$ & $\zag{1,1,1}$ & $17$  &  $2$ &  $27$ &  $2$ \\
&$10 \times 10 \times 10$ & $\zag{2,2,2}$ & $43$ &  $8$ & $45$ &  13 \\
&$10 \times 10 \times 10$ & $\zag{3,3,3}$ & $37$ & $12$ & $32$ &  $16$ \\
&$10 \times 10 \times 10$ & $\zag{5,5,5}$ & $44$ & $24$ & $50$ &  $30$ \\
&$10 \times 10 \times 10$ & $\zag{7,7,7}$ & $71$ & $46$ & $84$ &  $54$ \\
\hline
completion & $10 \times 10 \times 10$ & $\zag{1,2,2}$ & $33$  &  $6$ &  $38$ &  $10$ \\
&$10 \times 10 \times 10$ & $\zag{1,5,5}$ & $57$ & $15$ & $58$ &  $21$ \\
&$10 \times 10 \times 10$ & $\zag{2,5,7}$ & $35$ & $17$ & $47$ &  $24$ \\
&$10 \times 10 \times 10$ & $\zag{3,4,5}$ & $36$ & $17$ & $41$ &  $22$ \\
\hline
completion & $6 \times 10 \times 15$ & $\zag{1,1,1}$ & $20$  &  $3$ &  $33$ &  $8$ \\
&$6 \times 10 \times 15$ & $\zag{2,2,2}$ & $47$ &  $10$ & $51$ &  $14$ \\
&$6 \times 10 \times 15$ & $\zag{5,5,5}$ & $46$ & $27$ & $51$ &  $33$ \\
\hline

\end{tabular}
\caption{Recovery results for low rank matrix recovery via Gaussian measurement maps, Fourier measurement ensembles and tensor completion for NTIHT and CTIHT algorithm. An algorithm successfully recovers the sensed tensor $\mbf{X}_0$ if it returns a tensor $\mbf{X}^{\#}$ such that  $\norma{\mathbf{X}_0-\mathbf{X}^{\#}}_F<10^{-3}$ for Gaussian measurement maps and Fourier measurement ensembles, and  $\mathbf{X}^{\#}$ such that $\norma{\mathbf{X}_0-\mathbf{X}^{\#}}_F<2.5\cdot10^{-3}$ for tensor completion. $\overline{n}_0$: minimal percentage of measurements needed to get hundred percent recovery; $\overline{n}_1$: maximal percentage of measurements for which recover is not successful for all out of $200$ tensors; That is, the number of measurements $m_i=\lceil n_1n_2n_3\frac{\overline{n}_i}{100}\rceil$, for $i=0,1$; $-$ means that we did not manage to recover all $200$ tensors with percentage of measurements less than $\overline{n}=100$;}
\end{table}

\begin{table}\label{TableRuntimes}
\begin{center}
\begin{tabular}{ c c c c c}
\hline 
type & tensor dimensions & rank & CTIHT-$\overline{n}$ &CPU time in sec \\  \hline 

Fourier & $100 \times 100 \times 100$ & $\zag{1,1,1}$ & $10$  &  $16.2709$   \\
 & $100 \times 100 \times 100$ & $\zag{1,1,1}$ & $20$  &  $14.9761$   \\
&$100 \times 100 \times 100$ & $\zag{5,5,5}$ & $10$ &  $31.8866$   \\
&$100 \times 100 \times 100$ & $\zag{5,5,5}$ & $20$ &  $26.3486$   \\
&$100\times 100 \times 100$ & $\zag{7,7,7}$ & $20$ &  $27.2222$  \\
&$100 \times 100 \times 100$ & $\zag{10,10,10}$ & $20$ &  $36.3950$ \\
\hline
Fourier & $200 \times 200 \times 200$ & $\zag{1,1,1}$ & $10$  &  $142.2105$  \\
\hline
\end{tabular}
\end{center}
\caption{Computation times for reconstruction from Fourier type measurements. The numerical experiments are run on a PC with Intel(R) Core(TM) i7-2600 CPU @ 3.40 GHz on Windows $7$ Professional Platform (with 64-bit operating system) and $8$ GB RAM; $\overline{n}$ denotes the percentage of measurements, so that the number of measurements $m=\lceil n_1 n_2 n_3 \frac{\overline{n}}{100} \rceil$.}
\end{table}

\section*{References}

\bibliographystyle{abbrv}
\bibliography{TensorBib}

\end{document}